\pgfplotsset{compat=1.5}
\newtheorem{theorem}{Theorem}[section]
\newtheorem{lemma}[theorem]{Lemma}
\newtheorem{definition}[theorem]{Definition}
\newtheorem{remark}[theorem]{Remark}
\newtheorem{fact}[theorem]{Fact}
\newenvironment{proofof}[1]{\begin{trivlist} \item {\bf Proof
#1:~~}}
  {\qed\end{trivlist}}
\newcommand{\namedref}[2]{\hyperref[#2]{#1~\ref*{#2}}}
\newcommand{\thmlab}[1]{\label{thm:#1}}
\newcommand{\thmref}[1]{\namedref{Theorem}{thm:#1}}
\newcommand{\lemlab}[1]{\label{lem:#1}}
\newcommand{\lemref}[1]{\namedref{Lemma}{lem:#1}}
\newcommand{\seclab}[1]{\label{sec:#1}}
\newcommand{\secref}[1]{\namedref{Section}{sec:#1}}
\newcommand{\applab}[1]{\label{app:#1}}
\newcommand{\appref}[1]{\namedref{Appendix}{app:#1}}
\newcommand{\factlab}[1]{\label{fact:#1}}
\newcommand{\factref}[1]{\namedref{Fact}{fact:#1}}
\newcommand{\remlab}[1]{\label{rem:#1}}
\newcommand{\remref}[1]{\namedref{Remark}{rem:#1}}
\newcommand{\figlab}[1]{\label{fig:#1}}
\newcommand{\figref}[1]{\namedref{Figure}{fig:#1}}
\newcommand{\alglab}[1]{\label{alg:#1}}
\renewcommand{\algref}[1]{\namedref{Algorithm}{alg:#1}}
\newcommand{\tablelab}[1]{\label{tab:#1}}
\newcommand{\tableref}[1]{\namedref{Table}{tab:#1}}
\newcommand{\deflab}[1]{\label{def:#1}}
\newcommand{\defref}[1]{\namedref{Definition}{def:#1}}
\newcommand{\PPr}[1]{\ensuremath{\mathbf{Pr}\left[#1\right]}}
\newcommand{\Ex}[1]{\ensuremath{\mathbb{E}\left[#1\right]}}
\renewcommand{\O}[1]{\ensuremath{\mathcal{O}\left(#1\right)}}
\newcommand{\tO}[1]{\ensuremath{\tilde{\mathcal{O}}\left(#1\right)}}
\newcommand{\eps}{\varepsilon}
\def \coreset    {\mdef{\textsc{Coreset}}}
\def \calA    {\mdef{\mathcal{A}}}
\def \calG    {\mdef{\mathcal{G}}}
\def \calH    {\mdef{\mathcal{H}}}
\def \calO    {\mdef{\mathcal{O}}}
\def \calP    {\mdef{\mathcal{P}}}
\def \calQ    {\mdef{\mathcal{Q}}}
\def \calS    {\mdef{\mathcal{S}}}
\def \calU    {\mdef{\mathcal{U}}}
\def \bA    {\mdef{\mathbf{A}}}
\def \bB    {\mdef{\mathbf{B}}}
\def \bM    {\mdef{\mathbf{M}}}
\def \bZ    {\mdef{\mathbf{Z}}}
\def \ba    {\mdef{\mathbf{a}}}
\def \bb    {\mdef{\mathbf{b}}}
\newcommand{\mdef}[1]{{\ensuremath{#1}}\xspace}  
\DeclareMathOperator*{\polylog}{polylog}
\DeclareMathOperator*{\poly}{poly}
\newcommand{\ignore}[1]{}
\newif\ifnotes\notestrue 
\newcommand{\samson}[1]{\textcolor{blue}{{\bf (Samson:} {#1}{\bf ) }} 
\marginpar{\tiny\bf
             \begin{minipage}[t]{0.5in}
               \raggedright S:
            \end{minipage}}}
\newcommand{\david}[1]{\textcolor{purple}{{\bf (David:} {#1}{\bf ) }} \marginpar{\tiny\bf
             \begin{minipage}[t]{0.5in}
               \raggedright D:
            \end{minipage}}} 
\newcommand{\shenghao}[1]{\textcolor{red}{{\bf (Shenghao:} {#1}{\bf ) }} 
\marginpar{\tiny\bf
             \begin{minipage}[t]{0.5in}
               \raggedright S:
            \end{minipage}}}
\newcommand{\samson}[1]{}
\newcommand{\david}[1]{}
\newcommand{\shenghao}[1]{}
\renewcommand*{\@fnsymbol}[1]{\textcolor{mahogany}{\ensuremath{\ifcase#1\or *\or \dagger\or \ddagger\or
 \mathsection\or \triangledown\or \mathparagraph\or \|\or **\or \dagger\dagger
   \or \ddagger\ddagger \else\@ctrerr\fi}}}
\providecommand{\email}[1]{\href{mailto:#1}{\nolinkurl{#1}\xspace}}
\definecolor{mahogany}{rgb}{0.75, 0.25, 0.0}
\definecolor{darkblue}{rgb}{0.0, 0.0, 0.55}
\definecolor{darkpastelgreen}{rgb}{0.01, 0.75, 0.24}
\definecolor{darkgreen}{rgb}{0.0, 0.2, 0.13}
\definecolor{darkgoldenrod}{rgb}{0.72, 0.53, 0.04}
\definecolor{darkred}{rgb}{0.55, 0.0, 0.0}
\definecolor{forestgreenweb}{rgb}{0.13, 0.55, 0.13}
\definecolor{greencss}{rgb}{0.0, 0.5, 0.0}
\definecolor{bleudefrance}{rgb}{0.19, 0.55, 0.91}
  \DeclareFontShape{T1}{lmr}{m}{scit}{<->ssub*lmr/m/scsl}{}%
\begin{document}

\allowdisplaybreaks

\title{Nearly Space-Optimal Graph and Hypergraph Sparsification in Insertion-Only Data Streams}
\author{
Vincent Cohen-Addad\thanks{Google Research.
E-mail: \email{cohenaddad@google.com}.}
\and
David P. Woodruff\thanks{Carnegie Mellon University and Google Research. 
E-mail: \email{dwoodruf@cs.cmu.edu}. 
Supported in part by Office of Naval Research award number N000142112647 and a Simons Investigator Award.}
\and
Shenghao Xie\thanks{Texas A\&M University. 
E-mail: \email{xsh1302@gmail.com}. 
Supported in part by NSF CCF-2335411.}
\and
Samson Zhou\thanks{Texas A\&M University. 
E-mail: \email{samsonzhou@gmail.com}. 
Supported in part by NSF CCF-2335411. 
The author gratefully acknowledges funding provided by the Oak Ridge Associated Universities (ORAU) Ralph E. Powe Junior Faculty Enhancement Award.}
}
\maketitle

\begin{abstract}
We study the problem of graph and hypergraph sparsification in insertion-only data streams. The input is a hypergraph $H=(V, E, w)$ with $n$ nodes, $m$ hyperedges, and rank $r$, and the goal is to compute a hypergraph $\widehat{H}$ that preserves the energy of each vector $x \in \mathbb{R}^n$ in $H$, up to a small multiplicative error. In this paper, we give a streaming algorithm that achieves a $(1+\varepsilon)$-approximation, using $\frac{rn}{\varepsilon^2} \log^2 n \log r \cdot\text{poly}(\log \log m)$ bits of space, matching the sample complexity of the best known offline algorithm up to $\text{poly}(\log \log m)$ factors. Our approach also provides a streaming algorithm for graph sparsification that achieves a $(1+\varepsilon)$-approximation, using $\frac{n}{\varepsilon^2} \log n \cdot\text{poly}(\log\log n)$ bits of space, improving the current bound by $\log n$ factors. Furthermore, we give a space-efficient streaming algorithm for min-cut approximation. Along the way, we present an online algorithm for $(1+\varepsilon)$-hypergraph sparsification, which is optimal up to poly-logarithmic factors. As a result, we achieve $(1+\varepsilon)$-hypergraph sparsification in the sliding window model, with space optimal up to poly-logarithmic factors. Lastly, we give an adversarially robust algorithm for hypergraph sparsification using $\frac{n}{\varepsilon^2} \cdot\text{poly}(r, \log n, \log r, \log \log m)$ bits of space.
\end{abstract}

\section{Introduction}
Graphs are fundamental structures that naturally model complex relationships in real-world data, from social networks and transportation systems to knowledge graphs and human brains. 
Because of their great expressive power, these relational models are fundamental to research in computer science, data science, and machine learning, in addition to many other fields. 
\emph{Graph cuts} are used to partition graphs into distinct regions by minimizing a cost function, thereby providing insightful information. 
For example, in image segmentation, graph cuts help separate objects from the background by modeling pixels as nodes in a graph and optimizing the partitioning based on intensity or color differences~\cite{BoykovK04}. 
In network clustering, graph cuts are used to detect communities within social networks by partitioning nodes into groups with strong internal connections while minimizing connections between different groups~\cite{newman2006modularity}. 
In hierarchical clustering, sparse graph cuts are used to increasingly refine subgraphs to achieve better performance for Dasgupta's objective \cite{Dasgupta16,BravermanEWZ25,Deng0U0Z25}. 
More generally, spectral methods not only preserve key quantities such as cut sizes, but also accommodate more complex partitioning, e.g., multi-way cuts, and they consider other concepts such as the graph Laplacian and its eigenvalues.

With the explosive growth of large-scale databases and the increasing demand for scalable machine learning and AI systems, graphs have become more complex and massive than ever before. 
In many modern applications, the sheer volume of graphs creates significant computational bottlenecks, making it crucial to obtain smaller approximate graphs while preserving key features. 
\emph{Cut sparsifiers} and \emph{spectral sparsifiers} are smaller, more efficient representations of large-scale graphs that approximately preserve cut sizes and spectral properties, respectively. 
These techniques are therefore often used to accelerate graph-related algorithms and reduce computational overhead. 
For instance, graph sparsifiers have been used to enhance the performance of graph representation learning methods by reducing irrelevant information and preserving important structural features in the graph \cite{CalandrielloKLV18,ZhuSLHB19,ZengZSKP20,ZhengZCSNYC020,BravermanHMSSZ21,ChunduruZGB22,ZhangST24}, spectral sparsifiers have been used to construct lightweight graph neural networks while maintaining their overall effectiveness \cite{LiDZKY23,XieCZZWWM024}, and hypergraph sparsifiers have been used to produce efficient algorithms related to cuts and flows \cite{ChekuriX17,ChandrasekaranXY18,Chekuri018,VeldtBK23}, which have been applied to hypergraph partitioning and clustering \cite{AkbudakKA13,DeveciKC13,BallardDKS16} and machine learning tasks \cite{LiM17,LiM18,VeldtBK20}. 
In addition, graph sparsification is a key algorithmic component in theoretical computer science, e.g., computing the eigenvalues of the Laplacian matrix \cite{SomaY19} and solving a Laplacian system, which is used to produce fast symmetric diagonally dominant (SDD) system solvers \cite{KoutisMP14,SpielmanT14,SomaY19}.

\paragraph{Data streams.}
As datasets have grown dramatically in size in recent years, there is an increasing interest in large-scale computational models that avoid storing the entire dataset or making multiple passes over it. 
This has led to the development of the streaming model, which processes data sequentially under strict memory constraints, making it well-suited for handling massive datasets efficiently, e.g., network activity logs, IoT device streams, financial transactions, database event records, and real-time scientific observations. 
In this paper, we ask: 
\begin{center}
\emph{Do graph sparsification problems require additional space complexity in the streaming model, compared to the offline setting?}
\end{center}

\subsection{Our Contributions}
In this paper, we provide algorithms for cut and spectral sparsification problems, losing only poly-iterated logarithmic factors in space, i.e., $\polylog(\log n)$ factors for a graph with $n$ vertices. 

\paragraph{Graph Sparsification.}
The graph sparsification problem has been extensively studied \cite{SpielmanS08,BatsonSS14,Lee017,JambulapatiS18}, and there are well-known constructions for spectral sparsifiers with $\O{\frac{n}{\eps^2}}$ edges. 
We provide efficient algorithms for graph spectral sparsification in insertion-only streams that have optimal space complexity compared to the best sample complexity of an offline algorithm, up to poly-iterated logarithmic factors in $n$. 
Here, we represent a graph as a three-tuple $G = (V,E,w)$, where $V$ is the set of vertices with size $n$, $E$ is the set of edges with size $m = \poly(n)$, and $w: E \to [\poly(n)]$ is the weight assignment function, where the weights are positive integers upper bounded by $\poly(n)$. 
We give our formal statement as follows. 
\begin{theorem} [Streaming graph sparsification]
\thmlab{thm:graph:main}
Given a graph $G=(V,E, w)$ with $n$ vertices defined by an insertion-only stream, there exists an algorithm that gives a $(1+\eps)$-spectral sparsifier with probability $1-\frac{1}{\poly(n)}$, storing $\frac{n}{\eps^2} \poly(\log \log n)$ edges, i.e., $\frac{n}{\eps^2} \log n \poly(\log \log n)$ bits, and using $\poly(n)$ update time.
\end{theorem}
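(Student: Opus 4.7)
The plan is to adapt online effective-resistance sampling to the streaming setting and layer a hierarchical re-sparsification scheme on top to achieve near-optimal space. As edges arrive in the stream, maintain a running sparsifier $\widetilde{L}$ of the current Laplacian $L$. For each arriving edge $e = \{u, v\}$ with weight $w_e$, compute an estimate of its online leverage score $\tau_e \approx w_e \cdot b_e^\top (\widetilde{L} + \lambda I)^{-1} b_e$ with respect to the current sparsifier (for a small regularizer $\lambda$), and sample $e$ into $\widetilde{L}$ with probability proportional to $\tau_e$. A standard matrix Freedman / matrix martingale argument shows that such an online sampler produces a $(1+\eps)$-spectral sparsifier with $\O{\frac{n \log n}{\eps^2}}$ edges, which occupies $\O{\frac{n \log^2 n}{\eps^2}}$ bits---one $\log n$ factor away from the target.

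To shave this extra $\log n$ factor down to $\poly(\log\log n)$, I would layer a hierarchical merge-and-reduce scheme on top of the online sampler, maintaining $k = \O{\log\log n}$ nested buffers of sparsifiers. Newly sampled edges are inserted into the finest buffer; whenever a buffer fills to $\O{n/\eps^2}$ edges, its contents are compressed via an offline $(1+\eps/k)$-spectral sparsifier construction (for instance, a Batson--Spielman--Srivastava-style procedure or a near-linear-time analogue) and promoted to the next buffer. Composition of spectral sparsifiers across the $k$ levels accumulates only a $(1+\O{\eps})$ multiplicative error after rescaling $\eps$ by a $\poly(\log\log n)$ factor, and the total space across levels is $\O{k \cdot n/\eps^2}$ edges, i.e., $\frac{n}{\eps^2}\log n \cdot \poly(\log\log n)$ bits, matching the target.

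The main obstacle is implementing the online effective-resistance queries within this tight space budget: naively solving Laplacian systems on $\widetilde{L}$ requires $\tO{n}$ additional working memory, which would eat into the leading-order space. To address this, the resistances would be estimated against the currently maintained sparsifier using Johnson--Lindenstrauss projections combined with an approximate SDD solver applied to the sparsifier itself, so each query incurs only polylogarithmic overhead. A secondary technical point is controlling the failure probability by a union bound over all $\poly(n)$ stream updates and all $\O{\log\log n}$ re-sparsification events; this requires a modest increase in the oversampling constant but preserves the final $1 - 1/\poly(n)$ guarantee. The update time budget of $\poly(n)$ is loose enough to absorb the offline BSS-style compressions at buffer overflows, so efficiency in that direction is not the binding constraint.
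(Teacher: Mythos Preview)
Your proposal is correct and follows essentially the same approach as the paper: run online leverage-score sampling to shrink the effective stream to $\tO{n/\eps^2}$ edges, then apply a merge-and-reduce tree of height $\O{\log\log n}$ using an offline BSS-style sparsifier at each node, and use the currently maintained (root) sparsifier itself as the sketch for computing the online sampling probabilities so that no separate $\O{n\log^2 n}$-edge working memory is needed. The only minor difference is presentational: the paper frames the last point as an inductive argument (their \thmref{thm:working:memory}) showing that the merge-and-reduce output at each step is a $2$-spectral approximation and hence a valid substitute for the prefix Laplacian in the online sampler, whereas you gesture at JL projections and SDD solves---but since the $\poly(n)$ update-time budget is loose, either implementation suffices.
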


By comparison, offline constructions for graph spectral sparsifiers can be combined with the merge-and-reduce framework for coreset constructions to achieve algorithms with $\O{\frac{n}{\eps^2}\log^4 n}$ edges in the insertion-only setting.  
Additionally, \cite{CohenMP20} produces a spectral sparsifier with $\O{\frac{n}{\eps^2}\log^2 n}$ edges in the online model, and similarly, \cite{KapralovMMMNST20} achieves a spectral sparsifier with $\O{\frac{n}{\eps^2}\log n}$ edges in the dynamic streaming model, where both insertions and deletions of edges are permitted; both of these results can also be applied to the insertion-only setting. 
Here, we note that storing each edge in the sparsifier uses $\O{\log n}$ bits of space.
Our result avoids the additive $\log n$ factors in previous results, achieving a nearly-optimal space.

\paragraph{Graph min-cut approximation.}
In addition, we construct a space-efficient streaming algorithm that approximates the graph min-cut in insertion-only data streams.
\begin{theorem} [Streaming min-cut approximation]
\thmlab{thm:cut:main}
Given a graph $G =(V, E, w)$ with $n$ vertices defined by an insertion-only stream, there exists an algorithm which outputs a $(1+\eps)$-approximation to the size of the min-cut of the graph $G$ with probability $1-\frac{1}{\poly(n)}$, storing $\frac{n}{\eps} \cdot \polylog(n,\frac{1}{\eps})$ edges, i.e., $\frac{n}{\eps} \cdot \polylog(n,\frac{1}{\eps})$ bits, and using $\poly(n)$ update time. 
\end{theorem}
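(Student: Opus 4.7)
The plan is to avoid directly invoking the $(1+\eps)$-spectral sparsifier of \thmref{thm:graph:main}, whose $n\polylog(n)/\eps^2$ size is too large by roughly a factor of $1/\eps$. Instead, I would combine a streaming forest packing with a Benczur--Karger-style sparsifier tailored to the min-cut, exploiting the fact that preserving only the minimum cut value (rather than all cut values) reduces the sampling overhead from $1/\eps^2$ to $1/\eps$, essentially because a one-sided Chernoff bound around a single threshold suffices.

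First, I would maintain $k = \Theta(\log n / \eps)$ edge-disjoint spanning forests $F_1,\ldots,F_k$ via the greedy rule: when an edge $(u,v)$ arrives, place it in the smallest-indexed $F_i$ for which $u$ and $v$ lie in different components. By the Nagamochi--Ibaraki certificate property, the union $\bigcup_i F_i$ preserves every cut of size at most $k$ exactly, so the min-cut is recovered exactly whenever $c \le k$, at a cost of $\O{nk} = \O{n \log n / \eps}$ edges. For the regime $c > k$, I would piggyback a sampling scheme on the same data structure, using the forest index at which an edge is placed as an online lower bound on its Benczur--Karger strength $s_e$: an edge rejected by $F_1,\ldots,F_j$ must have strength at least $j$. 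I would then sample edges at rate $p_e = \min\{1,\rho/\hat s_e\}$ with $\rho = \Theta(\log n / \eps)$ and reweight sampled edges by $1/p_e$. The classical identity $\sum_e 1/s_e = \O{n}$ bounds the expected sample size by $\O{n \log n / \eps}$, and correctness follows from Chernoff plus a union bound over the $n^{\O{1}}$ near-min-cuts guaranteed by Karger's cut-counting theorem.

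The hard part is the online maintenance of the strength estimates. Strengths are global quantities that can only increase as edges arrive, so a sampling decision made early may later correspond to an overly coarse estimate, forcing us to either overpay in space or risk undersampling. I plan to resolve this via a layered scheme at $\O{\log n}$ geometric sampling rates, maintaining a companion forest packing at each level and promoting or demoting edges across levels whenever the packing evidences that an already-stored edge has crossed a strength threshold; a reservoir-style analysis will show the combined scheme adds only a $\polylog(n)$ multiplicative overhead. The final weighted subgraph then has $n \cdot \polylog(n, 1/\eps) / \eps$ edges, on which I would compute the min-cut at the end of the stream via a standard offline algorithm such as Karger--Stein, comfortably within the allotted $\poly(n)$ update time.
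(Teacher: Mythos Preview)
Your proposal has a genuine gap at its core: the claim that Bencz\'ur--Karger-style sampling with $\rho = \Theta(\log n / \eps)$, rather than $\Theta(\log n / \eps^2)$, suffices to $(1+\eps)$-approximate the minimum cut is incorrect. The justification you offer---``a one-sided Chernoff bound around a single threshold''---does not buy you a factor of $1/\eps$. Both tails of the Chernoff bound for relative deviation $\eps$ scale as $\exp(-\Theta(\eps^2 \mu))$, so concentrating the sampled value of even a \emph{single} fixed cut to within $(1\pm\eps)$ already requires $\mu = \Omega(\log n / \eps^2)$; the union bound over Karger's $n^{\O{1}}$ near-minimum cuts changes nothing in the $\eps$-dependence. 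Independent edge sampling, whether uniform or strength-based, fundamentally produces for-all cut sparsifiers of size $\tO{n/\eps^2}$, and the minimum cut is not special in this respect.

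The paper obtains the $\tO{n/\eps}$ bound by an entirely different route. It runs the generic online-plus-merge-and-reduce framework, but uses as the offline coreset the $(1+\eps)$-\emph{for-each} spectral sparsifier of \cite{DingGLLNSW24}, which achieves $\tO{n/\eps}$ edges via techniques that go well beyond independent sampling. The online front end (\thmref{thm:graph:online}) shrinks the stream to $\O{n\log^2 n/\eps^2}$ edges, so the merge-and-reduce tree has height only $\O{\log(1/\eps)}$, and the total space is $\tO{n/\eps}$. At the end, one builds a constant-factor for-all sparsifier to enumerate the $n^{\O{1}}$ candidate near-min-cuts and evaluates each on the for-each sparsifier; the union bound over this polynomial-size set is where the for-each guarantee is converted into a min-cut guarantee. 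Your forest-packing and online-strength machinery is not needed, and the part you flag as ``the hard part'' (promoting and demoting edges across layers with a reservoir-style analysis) is both underdeveloped in the proposal and, more importantly, would at best reproduce a $\tO{n/\eps^2}$ cut sparsifier.
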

We remark that our result improves on the logarithmic dependence in $n$ compared to the previous work \cite{DingGLLNSW24}, while maintaining a $\poly(n)$ update time.
That is, our algorithm uses $\O{\frac{n}{\eps} \cdot \log^c(\frac{n}{\eps})\log \frac{1}{\eps}}$ bits of space, while the algorithm of \cite{DingGLLNSW24} uses $\frac{n}{\eps} \cdot \log^{c+1}(\frac{n}{\eps})$ bits of space. 

\paragraph{Hypergraph sparsification.}
More broadly, hypergraphs are a generalization of graphs where each hyperedge can connect more than $2$ nodes, enabling the representation of multi-way relationships beyond pairwise connections.
This ability to capture higher-order correlations makes hypergraphs particularly valuable in various aspects of machine learning research, such as hypergraph neural networks \cite{FengYZJG19,GaoFJJ23,FengLYG24} and hypergraph clustering \cite{TakaiMIY20}.
\cite{SomaY19} formalizes the definition of a $(1+\eps)$-hypergraph spectral sparsifier and gives a construction with $\O{\frac{n^3}{\eps^2}\log n}$ hyperedges. 
\cite{BansalST19} achieves an upper bound of $\O{\frac{n r^3}{\eps^2} \log n }$, where $r = \max_{e \in E} |e|$ is the rank of the hypergraph. 
\cite{KaplralovTY21} gives an upper bound of $\O{\frac{n}{\eps^4}\log^3 n}$. 
\cite{JambulapatiLS23} and \cite{Lee23} simultaneously improve the upper bound to $\O{\frac{n}{\eps^2}\log n \log r}$.

Note that the above results are in the offline setting and it is uncertain whether they can be adapted to our streaming setting.
In contrast, existing algorithms use $\O{\frac{n r}{\eps^2} \cdot \log^2 n \log r}$ hyperedges \cite{SomaTY24} or $\frac{n}{\eps^2}\cdot\polylog(m)$ hyperedges \cite{KhannaPS25}.
However, they are not tight in $r$ and $\log m$ factors, which is \emph{prohibitively large} for streaming large-scale hypergraphs, since $r$ could be $\Omega(n)$ and $m$ could be $\Omega(2^n)$ in the worst case.
Therefore, the goal of our work is to improve both the $r$ and $\log m$ factors in the streaming setting.
We give the first streaming algorithm that is optimal up to poly-iterated logarithmic factors in $m$, compared to the current best sample complexity upper bound in the offline setting \cite{JambulapatiLS23,Lee23}.
Again, we represent a hypergraph as a three-tuple $H = (V,E,w)$, where $V$ is the set of vertices, $E$ is the set of hyperedges with rank $r$, and $w: E \to [\poly(n)]$ is the weight assignment function.
\begin{theorem} [Streaming hypergraph sparsification]
\thmlab{thm:hyper:main}
Given a hypergraph $H=(V,E, w)$ with $n$ vertices, $m$ hyperedges, and rank $r$ defined by an insertion-only stream, there exists an algorithm that gives a $(1+\eps)$-spectral sparsifier with probability $1-\frac{1}{\poly(m)}$ storing $\frac{n}{\eps^2}\log n \cdot \poly(\log r, \log \log m)$ hyperedges, i.e., $\frac{rn}{\eps^2}\log^2 n \cdot \poly(\log r, \log \log m)$ bits, and using $\poly(n)$ update time. 
There is also an algorithm that stores $\frac{n}{\eps^2}\log n \log r\cdot \poly(\log \log m)$ hyperedges, i.e., $\frac{rn}{\eps^2}\log^2 n \log r\cdot \poly(\log \log m)$ bits, and uses exponential update time.
\end{theorem}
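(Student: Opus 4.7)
The plan is to prove \thmref{hyper:main} by first designing an \emph{online} hypergraph sparsification algorithm, and then converting it into a streaming algorithm via a careful bit-level implementation. An online sparsifier processes hyperedges in an arbitrary arrival order and, upon the arrival of each hyperedge $e$, decides immediately and irrevocably whether to keep a reweighted copy of $e$ in the sample $\widehat{H}$. Such a guarantee is precisely what we need in an insertion-only stream, and it directly avoids the $\O{\log m}$ blow-up that a merge-and-reduce framework would incur.

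The core of the online algorithm is to compute, for each arriving hyperedge $e$, an upper bound $\widetilde{\tau}_e$ on an appropriate hypergraph importance score of $e$ with respect to the sample $\widehat{H}$ collected so far, and sample $e$ independently with probability $p_e = \min\set{1, C \widetilde{\tau}_e \log n \log r / \eps^2}$, rescaling its weight by $1/p_e$ if kept. The importance score we use is the one implicit in the offline analyses of \cite{JambulapatiLS23,Lee23}: a quantity that controls the maximum contribution of $e$ across the quadratic forms $x^\top L_H x$ that hypergraph energy decomposes into. The $\log r$ factor arises from a matrix chaining argument over the $\ell_\infty$ scales that appear in the hypergraph energy. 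Crucially, because the stream is insertion-only, $\widehat{H}$ is always a spectral approximation to the hyperedges that have already arrived, so the importance scores computed against $\widehat{H}$ are valid overestimates of the true scores, up to constants.

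For the exponential update time version, we compute $\widetilde{\tau}_e$ essentially exactly from $\widehat{H}$, which gives us the tighter $\log r$ dependence in the sample count. For the polynomial update time version, we replace the exact computation by a $\poly(\log r)$-approximation obtained by reducing hypergraph importance to a small number of graph effective-resistance queries on a clique expansion of $\widehat{H}$, and then invoking a standard Laplacian solver. This lossy reduction is what inflates the final sample bound from $\frac{n}{\eps^2}\log n \log r$ to $\frac{n}{\eps^2}\log n \cdot \poly(\log r, \log\log m)$ hyperedges. Polynomially bounded weights are handled by bucketing the hyperedges into $\O{\log n}$ geometric weight classes and running the online algorithm on each class, and each kept hyperedge is stored in $\tO{r \log n}$ bits as an $r$-tuple of vertex indices, which delivers the stated $\frac{rn}{\eps^2}\log^2 n \log r \cdot \poly(\log\log m)$ bit bound.

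The main obstacle will be showing that the importance estimates based on the partial sample $\widehat{H}$ remain sufficiently faithful throughout the stream to avoid both over- and under-sampling, since each sampling probability depends adaptively on the previously kept hyperedges. This requires adapting the matrix chaining argument for hypergraph concentration to an adaptive martingale setting, and replacing the naive $\O{\log m}$ union bound over hyperedges with a finer, $\poly(\log\log m)$ argument that exploits the fact that, in an online setting, only $\tO{n}$ hyperedges can ever double the spectrum of $\widehat{H}$. Establishing this bound carefully is where the key technical work lies.
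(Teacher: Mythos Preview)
Your plan has a genuine gap: you propose to achieve the $\poly(\log\log m)$ bound \emph{directly} via an online algorithm, explicitly discarding merge-and-reduce because you believe it would incur a $\log m$ blow-up. This misidentifies both where the $\log m$ factor originates and how the paper removes it.

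In the online hypergraph sparsifier, the $\log m$ factor is baked into the oversampling parameter $\rho = \O{\eps^{-2}\log m \log r}$ required by the chaining analysis of \cite{JambulapatiLS23} (their Theorem~10): this $\log m$ is needed for a $1-1/\poly(m)$ success probability over the $m$ independent sampling decisions, not merely from a ``naive union bound over hyperedges'' that a martingale argument could sidestep. Your proposed sampling probability $p_e = \min\{1, C\widetilde{\tau}_e \log n \log r / \eps^2\}$ omits this factor, and the sketch ``only $\tO{n}$ hyperedges can ever double the spectrum of $\widehat{H}$'' does not address the concentration requirement in the chaining bound; it is a statement about the sum of scores, not about the tail behavior of the sampled energy. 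There is no known online hypergraph sparsifier with sample complexity independent of $\log m$, and the paper does not attempt to build one.

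The paper's actual route is a \emph{two-stage} streaming algorithm: first run an online sparsifier to produce a substream $\calS'$ of length $\O{\frac{n}{\eps^2}\log n \log m \log r}$ (or $\O{\frac{nr^4}{\eps^2}\log n \log m}$ for the fast variant), and \emph{then} apply merge-and-reduce to $\calS'$ using the offline coreset of \cite{JambulapatiLS23,Lee23} of size $\Theta(\frac{n}{\eps'^2}\log n \log r)$. Because $|\calS'|/|C|$ is only $\poly(r,\log m)$, the merge-and-reduce tree has height $h = \O{\log r + \log\log m}$, so setting $\eps' = \eps/h$ gives total space $|C|\cdot h = \frac{n}{\eps^2}\log n \log r \cdot \poly(\log r, \log\log m)$. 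The two theorem variants arise from which online algorithm feeds the substream: the balanced-weight-assignment online sparsifier (no polynomial runtime guarantee) gives the tighter $\log r$ bound, while the effective-resistance-based online sparsifier (polynomial time, but with an extra $r^4$ in $|\calS'|$) yields the $\poly(\log r)$ bound after the extra $r^4$ is absorbed into the tree height. So merge-and-reduce is not the obstacle you thought; applied to the online output rather than the raw stream, it is precisely the mechanism that converts $\log m$ into $\poly(\log\log m)$.
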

Here, we note that storing each hyperedge requires $\O{r \log n + \log \log m}$ bits of space (see \remref{rem:bits:hyperedge}).
Our first algorithm only loses $\polylog r$ and $\poly(\log \log m)$ factors, which are at most $\polylog n$ in the worst case, shaving off the undesirable $\poly(n)$ factors when $m = \Omega(2^n)$.
Our second algorithm further improves the $\polylog(r)$ factor and makes it tight; however, it does not have a provable polynomial update time guaranty.

Moreover, we boost the success probability of the streaming algorithm for hypergraph sparsification to $1 - \frac{\delta}{\poly(m)}$ while only losing poly-iterated logarithmic factors in $\frac{1}{\delta}$.
We apply this result to obtain adversarially robust algorithms for streaming hypergraph sparsification.

\begin{theorem} [Streaming hypergraph sparsification, high probability]
\thmlab{thm:high:prob:main}
Given a hypergraph $H=(V,E,w)$ with $n$ vertices, $m$ hyperedges,  and rank $r$ defined by an insertion-only stream, there exists an algorithm which constructs a $(1+\eps)$-spectral sparsifier with probability $1-\frac{\delta}{\poly(m)}$. 
The algorithm stores $\frac{n}{\eps^2}\log n \cdot \poly(\log r, \log \log \frac{m}{\delta})$ hyperedges, i.e., $\frac{rn}{\eps^2}\log^2 n \cdot \poly(\log r, \log \log \frac{m}{\delta})$ bits.
\end{theorem}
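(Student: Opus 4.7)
The plan is to derive \thmref{high:prob:main} by directly rescaling the internal failure budgets inside the algorithm of \thmref{hyper:main}, rather than by an independent-repetition argument that would lose a $\log(1/\delta)$ factor. The key observation is that the bound $\frac{n}{\eps^2}\log n \cdot \poly(\log r, \log\log m)$ decomposes into two parts: a sample-complexity part $\frac{n}{\eps^2}\log n\,\poly(\log r)$ that is insensitive to the failure probability, and a representation part $\poly(\log\log m)$ that counts the bits used to encode weights, counters, hash seeds, and sampling thresholds, each of which scales as $\log$ of the inverse failure probability of the relevant primitive.

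First, I would trace each source of failure in the proof of \thmref{hyper:main}. These typically include (i) concentration for the importance-sampling step producing the sparsifier, (ii) correctness of the online sparsification subroutine invoked at each hyperedge arrival, (iii) any hashing or sub-stream compression used to bound the bit representation of hyperedges when $m=\Omega(2^n)$, and (iv) union bounds over $\poly(m)$ events. Each primitive's failure probability is of the form $1/m^c$ for a constant $c$ that appears inside one of the $\log\log m$ factors of the bit complexity. I would then replace every occurrence of the target failure probability $1/m^c$ by $\delta/m^c$, so that the same union bound yields overall failure probability at most $\delta/\poly(m)$.

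Next, I would push this substitution through the space analysis. Quantities previously bounded by $\log\log m$, such as the number of bits to store a weight in $[1,\poly(m/\delta)]$ or the depth of a dyadic-level counter used in the stream, become $\log\log(m/\delta)$, so every $\poly(\log\log m)$ factor becomes $\poly(\log\log(m/\delta))$, while the $\frac{n}{\eps^2}\log n$ and $\poly(\log r)$ factors are unchanged. This yields the claimed bound.

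The main obstacle is to confirm that no probabilistic primitive in the \thmref{hyper:main} pipeline has bit complexity depending directly on $\log(1/p)$ rather than on $\log\log(1/p)$; if such a primitive existed, amplification would incur an additive $\log(1/\delta)$ overhead and overwhelm the promised $\poly(\log\log(m/\delta))$ savings. In particular, I must verify that the online hypergraph sparsifier and the mechanism used to reduce the effective alphabet size from $\log m$ to $\log\log m$ both admit confidence amplification via a single parameter increase rather than by genuine independent repetition.
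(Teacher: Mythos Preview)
Your high-level plan of rescaling failure budgets is on the right track, but you misidentify where the $\poly(\log\log m)$ factors originate and consequently miss the main technical obstacle.

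The $\poly(\log\log m)$ factors in \thmref{hyper:main} do not come from bit representations of weights, counters, or hash seeds. They come from the merge-and-reduce tree height. Concretely, the algorithm first runs an online sparsifier producing a substream $\calS'$ of size $|\calS'|=\O{\frac{nr^4}{\eps'^2}\log n\log m}$, and then applies merge-and-reduce with offline coresets of size $|C|=\Theta\!\left(\frac{n}{\eps'^2}\log n\log r\right)$ at each node. The tree height is $h=\O{\log(|\calS'|/|C|)}=\O{\log r+\log\log m}$, and since one sets $\eps'=\eps/h$ to control the accumulated error, the total number of stored hyperedges is $\O{|C|\cdot h}$ with an extra $h^2$ from the $\eps'$ rescaling. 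This is where the $\poly(\log r,\log\log m)$ factor in the \emph{sample complexity} comes from.

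With this structure in mind, your worry about a primitive whose complexity scales with $\log(1/p)$ rather than $\log\log(1/p)$ is exactly right, but you have not identified which primitive it is or how to handle it. It is the offline coreset construction at each node of the merge-and-reduce tree. The offline sparsifier of \cite{JambulapatiLS23,Lee23} succeeds with probability $1-1/\poly(m')$ where $m'$ is its own input size; boosting each call to failure $\delta/\poly(m)$ would inject a $\log(m/\delta)$ factor into the coreset size $|C|$, and hence directly into the final hyperedge count, destroying the claimed $\poly(\log\log(m/\delta))$ bound.

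The paper's fix is to replace the randomized offline coreset by a \emph{deterministic} one: since a sparsifier of size $\O{\frac{n}{\eps^2}\log n\log r}$ is guaranteed to exist, one enumerates all subsets of that size and tests each against the net from the chaining argument, yielding a deterministic (exponential-time) coreset subroutine with zero failure probability. Now the only randomness left is in the online front-end, whose sample complexity does scale as $\log(m/\delta)$, but this quantity enters only through $|\calS'|$ and therefore only through the tree height $h=\O{\log r+\log\log(m/\delta)}$, recovering the desired $\poly(\log\log(m/\delta))$ dependence. This derandomization step is the key idea you are missing.
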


Along the way, we give an online sampling procedure for hypergraph spectral sparsification that nearly matches the current best sample complexity upper bound in the offline setting \cite{JambulapatiLS23,Lee23}. 
The online setting is a more restrictive setting that must immediately and irrevocably decide whether each arriving edge should be sampled or discarded, i.e., online algorithms cannot retract decisions. 
Online algorithms are particularly useful in settings where intermediate results \emph{must} be reported as the stream continues. 
They are also beneficial in practice because downstream computation can begin immediately when an item is sampled, as the item will not be removed at a later time. 
\begin{theorem} [Online hypergraph sparsification]
\thmlab{thm:ol:main}
Given a hypergraph $H=(V,E, w)$ with $n$ vertices, $m$ hyperedges,  and rank $r$ defined by an insertion-only stream, there exists an online algorithm that uses $nr \log n \poly(\log \log m)$ bits of working memory and gives a $(1+\eps)$-spectral sparsifier with probability $1-\frac{1}{\poly(m)}$ by sampling $\O{\frac{n}{\eps^2}\log n \log m \log r}$ hyperedges.
\end{theorem}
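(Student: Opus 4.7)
The plan is to combine the offline hypergraph spectral sparsification framework of Jambulapati--Liu--Sidford \cite{JambulapatiLS23} and Lee \cite{Lee23} with the online leverage score sampling paradigm introduced in the graph setting by Cohen, Musco, and Pachocki \cite{CohenMP20}. The JLS/Lee construction samples each hyperedge with probability proportional to a hypergraph importance score whose total mass is $\O{n}$ (up to $\log r$ factors), yielding the offline sample complexity $\O{\frac{n}{\eps^2}\log n\log r}$; the CMP paradigm pays an extra $\log m$ factor in exchange for irrevocable sampling decisions, which accounts precisely for the gap between the offline bound and the claimed online bound $\O{\frac{n}{\eps^2}\log n\log m\log r}$.

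Concretely, the algorithm processes hyperedges in arrival order. It maintains an internal coarse spectral approximation $\widehat H_{t-1}$ of the prefix $H_{\le t-1}$ using only $n\cdot\poly(\log\log m)$ hyperedges; this both fits into the claimed $nr\log n\cdot\poly(\log\log m)$-bit working memory and is accurate enough, within a constant spectral factor, to estimate the online importance of each newly arriving hyperedge. Upon receiving $e_t$, the algorithm (i) computes an online importance estimate $\widehat\tau_t(e_t)$ from $\widehat H_{t-1}\cup\{e_t\}$; (ii) independently outputs $e_t$ with reweighting $1/p_t$ to the (externally streamed) sparsifier with probability $p_t=\min(1,C\eps^{-2}\log n\log r\cdot\widehat\tau_t(e_t))$; and (iii) refreshes the internal coarse sparsifier via a similar but looser sampling rule. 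Because sampling decisions are independent, matrix concentration for hypergraphs as in \cite{JambulapatiLS23,Lee23} yields a $(1+\eps)$-spectral sparsifier provided that $\widehat\tau_t(e_t)$ dominates the offline JLS/Lee importance of $e_t$ in the final hypergraph $H$ and that $\sum_t \widehat\tau_t(e_t)=\O{n\log r\log m}$.

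These two properties drive the proof. Monotonicity should follow because the JLS/Lee importance of a hyperedge is a supremum of energy-to-norm ratios over test potentials, and an optimal potential with respect to the prefix $H_{\le t}$ remains a feasible (suboptimal) choice with respect to the final $H$, so the prefix importance upper bounds the final importance. The bound on the total online mass will come from a determinantal telescoping argument in the spirit of CMP: each $\widehat\tau_t(e_t)$ can be charged to the change in $\log\det$ of a running matrix quadratic form built from the prefix, and the total change over the stream is $\O{n\log r\log m}$ because the eigenvalues are confined to $[1/\poly(m),\poly(m)]$ by the polynomially bounded weights and by the coarse internal sparsifier's spectral guarantee. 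The main obstacle will be transferring this telescoping from quadratic (graph) to subadditive (hypergraph) energy functions. I plan to exploit the reformulation of hypergraph energy from \cite{JambulapatiLS23,Lee23} as an adaptively reweighted mixture of clique-graph energies: once a weight assignment is fixed, online hypergraph importance reduces to an ordinary online matrix leverage score on a reweighted graph, where CMP-style telescoping applies directly. Ensuring that these weights can be recomputed online from $\widehat H_{t-1}$ without inflating either memory or sample complexity is the key technical issue; once this is handled, the working-memory, sample-complexity, and $1-1/\poly(m)$ success-probability bounds all follow from standard matrix concentration and a union bound over the $m$ arrivals.
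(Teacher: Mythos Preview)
Your high-level strategy---combining the JLS/Lee importance-score framework with CMP-style online leverage scores, paying an extra $\log m$ for irrevocability---is exactly what the paper does. You also correctly identify the online weight assignment as the crux. However, two concrete points in your sketch are off and would not go through as stated.

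\textbf{The working-memory object.} You propose to maintain a coarse \emph{hypergraph} sparsifier $\widehat H_{t-1}$ with $n\cdot\poly(\log\log m)$ hyperedges. This fails on two counts. First, even a constant-factor hypergraph spectral sparsifier already needs $\Theta(n\log n\log r)$ hyperedges by the JLS/Lee construction, so the stated size is too optimistic (and bootstrapping a tighter hypergraph sparsifier here would be circular). Second, and more importantly, the JLS importance $w(e)\max_{u,v\in e}\ba_{uv}(\bA^\top\bZ\bA)^{-1}\ba_{uv}^\top$ is a quantity on the \emph{weighted associated graph}, and a hypergraph sparsifier of $H_{\le t}$ does not give you a spectral approximation of that graph: the associated graph of a sparsified hypergraph is not in general a spectral approximation of the associated graph of the original. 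The paper instead maintains a matrix $\bM$ that is a $2$-spectral approximation of the weighted associated-graph incidence matrix $\bZ_t^{1/2}\bA_t$. This is a \emph{graph} object, so it can be compressed to $n\cdot\poly(\log\log m)$ edges via the paper's streaming graph sparsifier, and it is exactly what is needed to evaluate JLS importances.

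\textbf{The weight assignment and monotonicity.} You flag ``ensuring that these weights can be recomputed online'' as the key technical issue but do not propose a mechanism. The paper's solution is not to recompute: when $e_t$ arrives, it runs a local procedure $\textsc{GetWeightAssignment}(\bM,e_t)$ that produces a $(\gamma,e_t)$-balanced weight assignment on the clique of $e_t$ relative to the \emph{current} $\bM$, and then \emph{freezes} those weights permanently. Because weights are never revised, the matrix $\bZ_t^{1/2}\bA_t$ is a prefix of $\bZ^{1/2}\bA$, and monotonicity of the sampling probability is literally the monotonicity of ordinary (graph) leverage scores under row additions---no separate ``optimal potential remains feasible'' argument is needed. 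The balancedness guarantee, $\max_{u,v\in e_t} q_{uv}=\O{\sum_{u,v\in e_t}\tau_{uv}(\bZ_t^{1/2}\bA_t)}$, then lets you charge $\sum_t p_{e_t}$ to the sum of \emph{online} leverage scores of $\bZ^{1/2}\bA$, which is $\O{n\log n}$ by the CMP determinantal telescoping you describe; multiplying by $\rho=\O{\eps^{-2}\log m\log r}$ gives the sample bound. If instead the weights were globally re-optimized at each step, neither the monotonicity nor the telescoping would be available, since $\bZ_t^{1/2}\bA_t$ would no longer be a prefix of the final matrix.
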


For comparison, \cite{SomaTY24} provides an online algorithm that outputs a $(1+\eps)$-hypergraph sparsifier using $\O{\frac{n r}{\eps^2} \cdot \log^2 n \log r}$ hyperedges.
However, they require $\O{n^2}$ working memory to store the sketch that defines their sampling probability.
Our algorithm improves this bound to $n \log n \poly(\log \log m)$ bits, significantly reducing memory consumption.
This result resolves Question 6.2 of \cite{SomaTY24}, which asks for improvements in space complexity.

We summarize our results in \figref{fig:results}.
Our algorithms improve on the polylogarithmic factors in space bounds in prior work and have the same asymptotic bound as the offline algorithms up to poly-iterated logarithmic factors.
We note that the update time of our algorithms is $\poly(n)$, which could be improved if we relax the space constraint.
For instance, one can apply the merge-and-reduce framework (see detailed discussion in \secref{sec:framework}) directly to the input data stream and use near-linear time offline algorithms to construct the coresets, achieving near-linear update time in the stream length.
However, these approaches incur additional $\log m$ factors in the space complexity, which is prohibitively large for many streaming applications, where space efficiency is critical due to restricted memory constraints. In contrast, our work focuses on achieving nearly-tight space usage and is well-suited for processing massive datasets.

\begin{figure}[!htb]
\centering
{
\tabulinesep=1.1mm
\begin{tabu}{|c|c|c|c|}\hline
{\bf Type of sparsifier} & {\bf Setting} & {\bf Reference} & {\bf Space}  \\\hline\hline
Graph spectral & Offline & \cite{BatsonSS14} & $\O{\frac{n}{\eps^2} \log n}$ \\\hline
Graph spectral & Online & \cite{CohenMP20} & $\O{\frac{n}{\eps^2} \log^3 n}$ \\\hline
Graph spectral & Streaming & \cite{KapralovMMMNST20} & $\O{\frac{n}{\eps^2}\log^2 n}$ \\\hline
Graph spectral & Streaming & \thmref{thm:graph:main} & $\O{\frac{n}{\eps^2}\log n \poly(\log \log n)}$ \\\hline \hline
Graph min-cut & Offline & \cite{DingGLLNSW24} & $\O{\frac{n}{\eps}\log^{c} n}$ \\\hline
Graph min-cut & Streaming & \cite{DingGLLNSW24} & $\O{\frac{n}{\eps}\log^{c+1} n}$ \\\hline
Graph min-cut& Streaming & \thmref{thm:cut:main} & $\O{\frac{n}{\eps}\log^c n \log \frac{1}{\eps}}$ \\\hline \hline
Hypergraph spectral & Offline & \cite{JambulapatiLS23,Lee23} & $\frac{rn}{\eps^2}\log^2 n\log r$ \\\hline
Hypergraph spectral & Online & \cite{KhannaPS25} & $\frac{rn}{\eps^2}\polylog(m)$\\\hline
Hypergraph spectral & Online & \thmref{thm:ol:main} & $\frac{rn}{\eps^2}\log^2 n \log m \log r$\\\hline
Hypergraph spectral & Streaming & \thmref{thm:hyper:main} & $\frac{rn}{\eps^2}\poly(\log n, \log \log m)$ \\\hline
\end{tabu}
}
\caption{Comparison of our results and the optimal results in offline, online and streaming settings. The space is measured in number of bits.}
\figlab{fig:results}
\end{figure}

\paragraph{Adversarial robustness.} 
We also obtain hypergraph sparsifiers in the adversarially robust streaming model using our high probability result. 
This model is framed as a two-player game between an adaptive adversary, who generates the input stream, and a randomized algorithm, which processes the inputs and outputs an estimate. 
Unlike the standard streaming setting, future inputs may depend on previous interactions between the adversary and the algorithm. 
Specifically, the game proceeds in $m$ rounds: in each round the adversary selects an input based on past inputs and outputs and sends to the algorithm, then the algorithm responds with a new estimate. 
This formulation captures the adaptive nature of real-world data streams and parallels adversarial robustness research in machine learning, where inputs are strategically chosen to exploit vulnerabilities in learning systems, such as adversarial examples in neural networks~\cite{GoodfellowSS14}, optimization-based evasion attacks~\cite{Carlini017,AthalyeEIK18}, adversarial training~\cite{MadryMSTV18}, and trade-offs between accuracy and robustness~\cite{TsiprasSETM19,ChenWGLJ22}. 
Importantly, adaptivity need not arise from malicious attacks: it can also result from queries from vast databases with inner correlations or from queries designed to optimize an underlying objective~\cite{HassidimKMMS20}, settings where standard streaming algorithms fail. 
Given these perspectives, the adversarially robust streaming model has attracted significant attention~\cite{BenEliezerJWY20,Ben-EliezerY20,HassidimKMMS20,AlonBDMNY21,BravermanHMSSZ21,KaplanMNS21,WoodruffZ21b,BeimelKMNSS22,Ben-EliezerEO22,Ben-EliezerJWY22,ChakrabartiGS22,AvdiukhinMYZ19,AssadiCGS23,AttiasCSS23,CherapanamjeriS23,DinurSWZ23,WoodruffZZ23a,GribelyukLWYZ24,WoodruffZ24,GribelyukLWYZ25}.

\cite{BenEliezerJWY20} introduced a framework that transforms a streaming algorithm to an adversarially robust streaming algorithm by setting the failure probability $\delta$ sufficiently small, as long as the streaming problem has a small $\eps$-flip number.
Specifically, let $\lambda$ be the $\eps$-flip number. The adversarially robust algorithm with failure probability $\delta$ requires that the failure probability of the streaming algorithm be $\frac{\delta}{\binom{m}{\lambda}n^{\O{\lambda}}}$.
Let $f$ be the targeted function and $\{x_t\}_{t=1}^T$ be an input stream.
The $\eps$-flip number is defined as the number of times $f(x_1,\ldots,x_t)$ increases by an $\eps$-fraction.
In the hypergraph sparsification problem, the graph Laplacian of the input hypergraph's associated graph has eigenvalues that increase by an $\eps$-fraction at most $\frac{n \log m}{\eps}$ times, so the $\eps$-flip number is $\frac{n \log m}{\eps}$.
Thanks to our high-probability result, applying the above framework gives us an adversarially robust algorithm with efficient space.
\begin{theorem}[Robust hypergraph sparsification]
Given a graph $H=(V,E,w)$ with $n$ vertices, $m$ hyperedges, and rank $r$ defined by an insertion-only stream, there exists an adversarially robust algorithm that constructs a $(1+\eps)$-spectral sparsifier with probability $1-\frac{\delta}{\poly(m)}$ storing $\frac{n}{\eps^2}\poly(\log n, \log r, \log \log \frac{m}{\delta})$ hyperedges and $\frac{n}{\eps^2} \cdot \poly(r, \log n, \log \log \frac{m}{\delta})$ edges in the associated graph, i.e., $\frac{n}{\eps^2} \cdot \poly(r, \log n, \log r, \log \log \frac{m}{\delta})$ bits in total.
\end{theorem}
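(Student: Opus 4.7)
The argument will follow directly from the framework of \cite{BenEliezerJWY20}, applied to our high-probability streaming guarantee \thmref{thm:high:prob:main}. Recall this framework converts a streaming algorithm with per-round failure probability $\delta'$ into an adversarially robust algorithm with overall failure probability $\delta$ whenever $\delta' \leq \delta / (\binom{m}{\lambda} n^{\O{\lambda}})$, where $\lambda$ denotes the $\eps$-flip number of the problem. As discussed in the introduction, for hypergraph sparsification we have $\lambda = \O{n \log m / \eps}$, since the Laplacian of the associated graph has at most $n$ eigenvalues, each bounded by $\poly(m)$, and each can cross a $(1+\eps)$-multiplicative threshold at most $\O{\log m / \eps}$ times over the course of the stream.

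Next, I would instantiate \thmref{thm:high:prob:main} with the parameter set to $\delta' = \delta/(\poly(m) \cdot \binom{m}{\lambda} n^{\O{\lambda}})$, so that the overall failure probability of the resulting robust algorithm meets the target $\delta/\poly(m)$. The crucial point is that the space bound of \thmref{thm:high:prob:main} depends on this parameter only through $\log \log (m/\delta')$. Since $\log(1/\delta') = \O{\log(1/\delta) + \lambda(\log m + \log n)} = \O{\log(1/\delta) + n \log^2 m / \eps}$, the double logarithm evaluates to $\log \log(m/\delta') = \O{\log \log(m/\delta) + \log n + \log \log m + \log(1/\eps)}$, which is fully absorbed into the $\poly(\log n, \log r, \log \log(m/\delta))$ factor. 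Thus the hyperedge count remains $\frac{n}{\eps^2} \poly(\log n, \log r, \log \log(m/\delta))$; each hyperedge contributes $\O{r \log n + \log \log m}$ bits (by \remref{rem:bits:hyperedge}), and the sketches maintained by the algorithm on the associated graph contribute $\frac{n}{\eps^2} \poly(r, \log n, \log \log(m/\delta))$ edges of $\O{\log n}$ bits each, yielding the stated total space.

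The main obstacle is not the reduction itself---which is essentially mechanical once \thmref{thm:high:prob:main} is in hand---but rather the fact that this reduction is only feasible because of the doubly logarithmic dependence on $1/\delta$. The framework of \cite{BenEliezerJWY20} inflates $\log(1/\delta)$ by a multiplicative factor of $\lambda = \Omega(n/\eps)$, so any base streaming algorithm with even a polynomial dependence on $\log(1/\delta)$ would incur additional $\poly(n, 1/\eps)$ factors in the final space and defeat the purpose. Hence the entire nontrivial content lies upstream, in having established the poly-iterated logarithmic dependence in \thmref{thm:high:prob:main}; the remaining work here is simply careful bookkeeping of the nested logarithms.
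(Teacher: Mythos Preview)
Your high-level plan---apply the sketch-switching framework of \cite{BenEliezerJWY20} on top of \thmref{thm:high:prob:main}---is the right skeleton, and your computation showing that $\log\log(m/\delta')$ stays inside $\poly(\log n,\log\log(m/\delta))$ is correct. However, there is a genuine gap in how you invoke the framework. Sketch switching is not a pure black box that takes only a flip-number bound as input; the algorithm must \emph{decide} at each step whether to refresh its published output, and it must do so using only information already revealed. For ordinary graph sparsification this is easy: compare the eigenvalues of the Laplacian of the current sparsifier to those of the last published one. But the hypergraph energy $Q_H(x)=\sum_e\max_{u,v\in e}x^\top L_{uv}x$ is not a quadratic form, so no spectral comparison is directly available, and you never specify a detection rule. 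Publishing the raw sparsifier $\widehat{H}_t$ every round is not an option, since that exposes fresh randomness at each step and destroys the union-bound argument.

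The paper's proof handles exactly this point by running, in parallel with the non-robust hypergraph sparsifier, a separate \emph{robust} sparsifier of the associated graph $G$ at the finer accuracy $\tilde\eps=\Theta(\eps/r^2)$, and refreshing the hypergraph output precisely when the associated-graph output changes. The $r^2$ tightening is forced by the comparison $Q_H(x)\ge\frac{2}{r(r-1)}x^\top L_G x$, which is what converts $(1+\eps/r^2)$-stability of $L_G$ into $(1+\eps)$-stability of $Q_H$. This mechanism is the entire source of the $\poly(r)$ associated-graph edge count in the theorem: the robust graph sparsifier at accuracy $\eps/r^2$ stores $\frac{nr^4}{\eps^2}\cdot\poly(\log n,\log\log(m/\delta))$ edges. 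Your proposal alludes to ``sketches maintained by the algorithm on the associated graph'' but never says what they are or why they cost $\poly(r)$; without the flip-detection mechanism above, neither the correctness of the reduction nor that edge count is actually justified.
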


\paragraph{The sliding window model.}
An additional application of our online algorithms is to the sliding window model. 
Recall that the streaming model defines an underlying dataset while remaining oblivious to the times at which specific data points arrive, and is thus unable to prioritize recent data over older data. 
By comparison, the more general sliding window model considers the most recent $W$ updates $\{x_{n-W+1}, \ldots, x_n \}$ to be the active data, which is crucial in situations where newer information is more relevant or accurate, such as tracking trends in financial markets or analyzing recent Census data. 
Indeed, the sliding window model outperforms the streaming model in various applications \cite{BabcockBDMW02,DatarGIM02,PapapetrouGD15,WeiLLSDW16}, especially for time-sensitive settings such as data summarization \cite{ChenNZ16,EpastoLVZ17}, social media data \cite{OsborneMMLSCIMOHJCO14}, and network monitoring \cite{CormodeM05,CormodeG08,Cormode13}.
The sliding window model is particularly useful when computations \emph{must} focus only on the most recent data. 
This is important in cases where data retention is limited by regulations. 
For example, Facebook stores user search histories for up to six months \cite{FB-data}, Google stores browser data for up to nine months \cite{google-data}, and OpenAI retains API inputs and outputs for up to 30 days~\cite{OpenAI-data}. 
The sliding window model captures the ability to expire data, in alignment with these time-based retention policies, and consequently has received significant attention in various problems~\cite{LeeT06a,BravermanO07,BravermanOZ12,BravermanGLWZ18,BravermanWZ21,WoodruffZ21b,AjtaiBJSSWZ22,JayaramWZ22,BlockiLMZ23,Cohen-AddadJYZZ25}. 
Applying our online algorithm, we utilize a framework for the sliding window model to achieve an algorithm for hypergraph sparsification:
\begin{theorem} [Streaming hypergraph sparsification in the sliding window model]
Given a hypergraph $H=(V,E, w)$ with $n$ vertices, $m$ hyperedges,  and rank $r$ defined by an insertion-only stream, there exists an algorithm that gives a $(1+\eps)$-spectral sparsifier in the sliding window model with probability $1 - \frac{1}{\poly(n)}$, storing $\frac{n}{\eps^2}\polylog(m, r)$ hyperedges, i.e., $\frac{rn}{\eps^2}\log n\polylog(m, r)$ bits.
\end{theorem}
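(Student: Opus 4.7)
The plan is to convert the online hypergraph sparsification procedure of \thmref{thm:ol:main} into a sliding window algorithm by invoking a smooth-histograms-style framework that blows up the space by only a polylogarithmic factor. As a black box, \thmref{thm:ol:main} maintains, on any substream fed to it, a $(1+\eps)$-spectral sparsifier of the corresponding hypergraph at every prefix, using $\O{\frac{n}{\eps^2}\log n\log m\log r}$ hyperedges and $nr\log n\poly(\log\log m)$ bits of working memory, with irrevocable sampling decisions.

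The key structural observation is that the hypergraph energy $Q_H(x)=\sum_{e\in E}w_e\max_{u,v\in e}(x_u-x_v)^2$ is additive over hyperedges. This immediately yields mergeability and the following smoothness statement: if two nested suffixes $H[t_2,\mathrm{now}]\subseteq H[t_1,\mathrm{now}]$ satisfy $(1-\eps)Q_{H[t_1,\mathrm{now}]}(x)\le Q_{H[t_2,\mathrm{now}]}(x)$ for every $x$, then for any further hyperedges $H_3$ appended after $\mathrm{now}$, the same $(1-\eps)$-approximation persists between $H[t_1,\mathrm{now}]\cup H_3$ and $H[t_2,\mathrm{now}]\cup H_3$. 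This is precisely the hypothesis required by Braverman--Ostrovsky smooth histograms.

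Using this, I would maintain an adaptively pruned collection of checkpoints $t_1<t_2<\cdots<t_k$, each running an independent copy of \thmref{thm:ol:main} on the suffix $H[t_i,\mathrm{now}]$; whenever the sparsifiers of two consecutive checkpoints become spectrally $(1+\eps)$-close, the later one is discarded. Because $Q_H(x)$ for any fixed $x$ lies in a $\poly(m,n)$ range and each surviving adjacent pair witnesses an $\eps$-multiplicative gap, the number of checkpoints alive at any time is $\poly(\log m,\log n,1/\eps)$. At query time with window $[n-W+1,n]$, the algorithm returns the sparsifier of the earliest surviving checkpoint $t_i\ge n-W+1$; smoothness ensures that this is a $(1+\eps)$-spectral sparsifier of the true window. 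Multiplying the checkpoint count by the per-instance space bound of \thmref{thm:ol:main} yields the claimed $\frac{rn}{\eps^2}\log n\polylog(m,r)$ bits of total space, with the $1-\frac{1}{\poly(n)}$ success probability following from a standard union bound over the polylogarithmic number of instances.

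The main obstacle is that the pruning test is a spectral condition quantified over all vectors $x$, not a scalar, and the hypergraph energy is only piecewise quadratic rather than a true quadratic form, so the standard graph-Laplacian comparison does not apply verbatim. I would handle this by performing the comparison offline on the two small maintained sparsifiers, either via a Johnson--Lindenstrauss-style sketch along a polylogarithmic set of directions that suffices to capture the hypergraph energy up to $(1+\eps)$ factors, or by directly chaining the two sparsifier guarantees to two-sidedly certify spectral closeness. In either case the comparison touches only the polylogarithmic stored structures and does not inflate the overall streaming space beyond polylogarithmic factors.
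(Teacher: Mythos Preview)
Your proposal follows the Braverman--Ostrovsky smooth-histograms route, but the paper takes a different path: it applies the framework of \cite{WoodruffZZ23}, which runs merge-and-reduce using \emph{online} coresets and feeds each block of hyperedges to the coreset subroutine in \emph{reverse order}. Because an online coreset is valid on every prefix of its input, reversing the stream makes it valid on every suffix of the original stream, and in particular on the sliding window. This yields the bound directly, with only the usual $\O{\log(mn)}$ merge-and-reduce overhead and no checkpoint-counting argument at all.

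Your approach has a genuine gap at the step ``each surviving adjacent pair witnesses an $\eps$-multiplicative gap, so the number of checkpoints alive at any time is $\poly(\log m,\log n,1/\eps)$.'' In the scalar smooth-histogram argument, consecutive checkpoints that are not close certify a multiplicative drop in a \emph{single} monotone real-valued function, and chaining these drops bounds the chain length by $\O{\frac{1}{\eps}\log(\text{range})}$. Here the ``not $(1+\eps)$-close'' condition is spectral: for each surviving pair $t_i<t_{i+1}$ there exists \emph{some} $x_i$ with $Q_{H[t_i,\text{now}]}(x_i)>(1+\eps)Q_{H[t_{i+1},\text{now}]}(x_i)$, but different pairs may be witnessed by different $x_i$, so the drops do not chain. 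Concretely, take a path $v_1\text{--}v_2\text{--}\cdots\text{--}v_n$ with edges arriving in order. For every $t$, the vector $\chi_{v_t}$ has positive energy on $H[t,n]$ and zero energy on $H[t+1,n]$, so no two distinct suffixes are spectrally $(1+\eps)$-close and all $\Omega(n)$ checkpoints survive. Running an independent copy of \thmref{thm:ol:main} at each checkpoint then blows the space up by a $\Theta(n)$ factor, not a polylogarithmic one. The obstacle you flag at the end (implementing the spectral comparison) is secondary; the real problem is that even a perfect comparison oracle does not give a $\polylog$ checkpoint bound, because spectral closeness of hypergraph energies is not governed by a single bounded-range scalar.
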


\subsection{Concurrent and independent work.} 
We summarize several relevant results from concurrent and independent work.
\paragraph{Online algorithms.}
We remark that \cite{KhannaLP25} recently also achieved an online algorithm for hypergraph sparsification that samples $\frac{n}{\eps^2}\cdot\polylog(m)$ hyperedges. 
Their approach is based on maintaining a logarithmic number of spanners and adding arriving hyperedges to a spanner based on connectivity within the spanner and a geometrically decreasing probability. 
In fact, due to efficient subroutines for these algorithms, their update time is near-linear.
Our algorithm has a suboptimal update time, but we achieve better sample complexity and more efficient working memory. 
By comparison, our approach is based on computing a balanced solution to an optimization problem, and it is not clear whether there exists an efficient algorithm to achieve a fast update time. 
We also achieve an algorithm for online spectral sparsification with fast update time while losing $\poly(r)$ factors, which suffices for the remainder of our applications. 

\paragraph{Dynamic algorithms.}
The framework in \cite{KhannaLP25} also solves hypergraph sparsification in the fully-dynamic setting, where both insertions and deletions are allowed and time efficiency is prioritized. 
Their algorithm maintains a sparsifier with $\frac{n}{\eps^2}\cdot\polylog(m)$ hyperedges while requiring sublinear update time in $m$.
Similarly, \cite{GoranciM25} provided dynamic algorithms for hypergraph sparsification using $n r^3 \poly(\log n,\frac{1}{\eps})$ space and $r^4 \poly(\log n,\frac{1}{\eps})$ update time.
These results naturally generalize the insertion-only streaming setting.
Although they achieve better time efficiency, their space bounds lose polynomial factors in $r$ and $\log m$.
As we mentioned earlier, their sub-optimal space usage hinders the applications in the streaming setting, where the space efficiency is prioritized.
On the contrary, our algorithm reduces the $\polylog(m)$ factors to  $\poly(\log \log m)$ factors while maintaining a $\poly(n)$ update time, making significant improvements for dense hypergraphs (e.g., $m = \Omega(2^n)$).

\paragraph{Hypergraph cut sparsifier.}
In addition, we note that \cite{KhannaPS25} recently showed that in an insertion-only stream, a hypergraph cut sparsifier can be computed using $\frac{rn}{\eps^2} \cdot \polylog(n)$ bits of space, establishing an $\Omega(\log m)$ separation from hypergraph cut sparsification in dynamic streams. 
Their algorithm estimates the strength of components in the hypergraph, and whenever a component gets sufficiently large strength, it contracts the component to a single vertex to save space. 
This idea extends their result to a more general bounded-deletion setting: if the stream has at most $k$ hyperedge deletions, then $\frac{rn}{\eps^2}\log k \polylog (n)$ bits of space suffice for hypergraph cut sparsification. 
By comparison, our approach is based on a streaming framework that implements merge-and-reduce coreset construction on the output of an online algorithm. 
We achieve an algorithm that solves hypergraph spectral sparsification in insertion-only streams using $\frac{rn}{\eps^2} \log^2 n \log r \poly (\log \log (m))$ bits of space, which generalizes the hypergraph cut sparsification result. 
Assuming there are no multi-hyperedges in the hypergraph, i.e., $m = \O{2^n}$, our algorithm uses $\frac{rn}{\eps^2}\log r \polylog (n)$ bits of space, and hence it avoids extraneous dependence on $\polylog(m)$ factors, i.e., we have the same dependencies as \cite{KhannaPS25} in the $\log m$ factors.
In addition, in terms of the dependence on $\log n$, our space bound has a factor of $\tO{\log^2 n}$, as opposed to $\polylog(n)$ in \cite{KhannaPS25}, and thus we achieve a better space complexity when $m$ is relatively small.

\section{Preliminaries}
In this section, we introduce several important definitions and techniques related to graphs and hypergraphs.
For an integer $n>0$, we use the notation $[n]$ to denote the set $\{1,\ldots,n\}$. 
We use $\poly(n)$ to denote a fixed polynomial of $n$, whose degree can be determined by setting parameters in the algorithm accordingly. 
We use $\polylog(n)$ to denote $\poly(\log n)$. 
We use $\tO{F} = F \polylog F$ to hide the polylogarithmic factors.

\paragraph{Hypergraphs.}
We state the formal definition of the hypergraph energy function and the hypergraph spectral sparsifier. 
\begin{definition} [Energy]
For a hypergraph $H=(V,E,w)$, we define $Q_H:\mathbb{R}^n\to\mathbb{R}$ to be the following analog of the Laplacian quadratic form $Q_H(x)=\sum_{e\in E}w(e)\cdot\max_{u,v\in e}(x_u-x_v)^2$.
We define the energy $Q_e(x)$ of a hyperedge $e$ to be $Q_e(x) = w(e)\cdot\max_{u,v\in e}(x_u-x_v)^2$, such that $Q_H(x)=\sum_{e\in E} Q_e(x)$.
\end{definition}
The above definition carries a physical interpretation as the potential energy of a family of rubber bands.
One can consider a hyperedge $e$ as a rubber band stretched to encircle several locations $\{x_v, v \in e\}$, so $\max_{u,v\in e}(x_u-x_v)^2$ is proportional to its potential energy \cite{Lee23}.
\cite{SomaY19} introduces the following notion of the hypergraph spectral sparsifier, generalizing the definition for graphs \cite{SpielmanS08}. 
The hypergraph spectral sparsifier is defined as a re-weighted subset $\widehat{H}$ of the original hypergraph $H$, which preserves the energy of $x$ on $H$ up to a $(1+\eps)$-approximation for all vectors $x \in \mathbb{R}^n$.

\begin{definition}[Hypergraph spectral sparsifier]
A weighted hypergraph $\widehat{H}$ is a $(1+\eps)$-multiplicative spectral sparsifier for $H$ if
\begin{align} \label{eq:hg:sparsifier}
    |Q_{\widehat{H}}(x) - Q_H(x)| \leq \eps \cdot Q_H(x), ~~~ \forall x \in \mathbb{R}^n
\end{align}
\end{definition}

A natural way to relate a hypergraph to a multi-graph is to replace a hyperedge with $r$ nodes by a clique of $\binom{r}{2}$ edges formed by the nodes in that hyperedge, which is stated in the following definition of associated graph.
\begin{definition}[Associated graph]
\deflab{def:associated}
Given a hypergraph $H$, we define a multi-graph $G=(V,E)$ to be the associated graph of $H$ by replacing each hyperedge $e=(u_1,\ldots,u_r)\in H$ with the $\binom{r}{2}$ edges $(u_i,u_j)$, where $1\le i<j\le r$, each with weight $w(e)$. 
\end{definition}

\paragraph{Graph Laplacian.}
We review the graph Laplacian that encodes the structure of a graph into a matrix. 
The graph Laplacian is an $n \times n$ matrix where its diagonal stores the weighted degree for each node, and the other entries store the weight of each edge $(i,j)$ in $G$. The graph Laplacian provides a way to view the graph spectral sparsifier problem as a matrix spectral approximation problem so that one can implement numerical linear algebra methods to solve it.
\begin{definition}[Graph Laplacian]
Given a weighted graph $G=(V,E, w)$ with $n$ vertices, the graph Laplacian matrix $L_G\in\mathbb{R}^{n\times n}$ is defined as
\[L_G=\sum_{e=(u,v)\in E}w(e)\cdot(\chi_u-\chi_v)^\top(\chi_u-\chi_v),\]
where $\chi_i$ denotes the elementary row vector in $\mathbb{R}^n$ with a single nonzero entry in the $i$-th coordinate. 
\end{definition}

The graph Laplacian can be written as the Gram matrix $\bA^\top \bA$, where the incidence matrix $\bA$ consists of binary vectors representing each weighted row in graph $G$. 
\begin{fact}
The graph Laplacian $L_G$ has the following properties:
\begin{enumerate}
\item 
Let $L_{uv}$ be the graph Laplacian for an edge $uv$, so that
\[L_{uv} = w(e)\cdot(\chi_u-\chi_v)^\top(\chi_u-\chi_v).\]
Then $L_G=\sum_{e=(u,v)\in E} L_{uv}$.
\item
For each $i,j\in[n]$, the $(i,j)$-th entry of $L_G$ is the weighted degree $\deg(i)=\sum_{(i,j)\in E: u\in V}w(ij)$ for $i=j$ and $-w(ij)$ for $i\neq j$, where $w(ij)$ is the weight of edge $(i,j)$. 
\item
Let $\bA\in\mathbb{R}^{|E|\times n}$ be the incidence matrix of $G$, where each row $\ba_i$ of $\bA$ corresponds to an edge $e=(u,v)\in E$, so that $\ba_{uv} := \ba_i=\sqrt{w(e)}\cdot(\chi_u-\chi_v)$. 
Then $L_G=\bA^\top\bA$.  
\end{enumerate}
\end{fact}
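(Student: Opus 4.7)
The plan is to verify each of the three items by direct computation from the given definition of $L_G$, as this Fact is a standard restatement of the graph Laplacian's structure and nothing beyond bookkeeping is needed.

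First I would handle item (1), which is essentially a tautology: the definition of $L_G$ is already a sum over edges of $w(e)\cdot(\chi_u-\chi_v)^\top(\chi_u-\chi_v)$, so naming the $e=(u,v)$ summand $L_{uv}$ yields $L_G=\sum_{e=(u,v)\in E} L_{uv}$ immediately.

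Next I would verify item (2) by expanding a single summand. For an edge $e=(u,v)$, the outer product $(\chi_u-\chi_v)^\top(\chi_u-\chi_v)$ is a rank-one matrix whose only nonzero entries are $+1$ at $(u,u)$ and $(v,v)$ and $-1$ at $(u,v)$ and $(v,u)$. Multiplying by $w(e)$ and summing over all incident edges, the $(i,i)$ diagonal entry collects a $+w(ij)$ contribution from every edge $(i,j)\in E$, producing the weighted degree $\deg(i)=\sum_{(i,j)\in E}w(ij)$. For $i\neq j$, the only summand contributing to the $(i,j)$ entry is the edge $(i,j)$ itself (if present), which contributes exactly $-w(ij)$. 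This gives precisely the entrywise description in item (2).

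Finally I would prove item (3) by recognizing that $\bA^\top \bA=\sum_i \ba_i^\top \ba_i$, where the sum ranges over rows $\ba_i$ of $\bA$. Plugging in $\ba_{uv}=\sqrt{w(e)}\,(\chi_u-\chi_v)$ for the edge $e=(u,v)$ gives $\ba_{uv}^\top \ba_{uv}=w(e)\,(\chi_u-\chi_v)^\top(\chi_u-\chi_v)=L_{uv}$, and summing over edges yields $\bA^\top\bA=\sum_{e=(u,v)\in E}L_{uv}=L_G$ by item (1). There is no real obstacle here; the only mild subtlety worth stating carefully is the orientation convention for $\ba_{uv}$, since flipping the sign of $(\chi_u-\chi_v)$ to $(\chi_v-\chi_u)$ leaves the outer product unchanged, so the arbitrary choice of endpoint ordering in defining $\bA$ does not affect $\bA^\top\bA$.
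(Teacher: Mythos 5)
Your verification is correct and complete; the paper states this Fact without proof, treating it as a standard property of the graph Laplacian, and your direct expansion of the rank-one summands together with the row-wise decomposition $\bA^\top\bA=\sum_i \ba_i^\top\ba_i$ is exactly the routine argument one would expect. The remark that the endpoint orientation in $\ba_{uv}$ is immaterial because the outer product is sign-invariant is a worthwhile point to make explicit.
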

We remark that the matrix $A$ sometimes denotes the adjacency matrix, an $n\times n$ square matrix with $A_{uv} = w(e)$ if $(u,v) \in E$, in some other definitions of the graph Laplacian, e.g., $L = D-A$.
Here, we stick to the notion of the incidence matrix $\bA$ in \cite{JambulapatiLS23}, which is \emph{different} from the definition of the adjacency matrix.
In addition, we sometimes abuse the subscript in this paper, e.g., $\ba_i = \ba_{uv}$ if $e = (u,v)$ is represented by the $i$-th row in the incidence matrix $\bA$.

In the graph spectral sparsification problem, the energy is defined to be $x^\top L_G x$, which is $x^\top \bA^\top \bA x$. Thus, sampling the edges of the graph is equivalent to sampling rows from the matrix $\bA$. In the hypergraph spectral sparsification problem, the definition of energy is slightly different. However, we can still relate it to the graph Laplacian of its associated graph as follows. 
\begin{fact}
Let $H=(V,E,w)$ be a hypergraph, let $G=(V,F,w)$ be its associated graph, and let $L_{uv}$ be the graph Laplacian for edge $(u,v)$ in graph $G$. The energy $Q_e(x)$ of a hyperedge $e$ satisfies
\[Q_e(x) = \max_{u,v \in e} x^\top L_{uv} x\]
\end{fact}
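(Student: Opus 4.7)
The plan is to verify the identity by directly expanding the quadratic form $x^\top L_{uv} x$ using the definition of the single-edge Laplacian, and then commuting the maximum with the positive scalar weight $w(e)$.

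First, I would unfold the definition from \defref{def:associated}, which sets $L_{uv} = w(e)(\chi_u-\chi_v)^\top(\chi_u-\chi_v)$ for the edge $(u,v)$ in the clique replacing hyperedge $e$; note that each such edge inherits the hyperedge weight $w(e)$. Substituting into $x^\top L_{uv} x$ and using that $(\chi_u-\chi_v)x = x_u - x_v$ is a scalar, I obtain
\[x^\top L_{uv} x \;=\; w(e)\, \bigl((\chi_u - \chi_v)x\bigr)^2 \;=\; w(e)(x_u - x_v)^2.\]

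Second, since $w(e) > 0$ does not depend on the pair $(u,v)$, I would pull it outside the maximum over $u,v \in e$, yielding
\[\max_{u,v \in e} x^\top L_{uv} x \;=\; w(e) \cdot \max_{u,v \in e} (x_u - x_v)^2 \;=\; Q_e(x),\]
by the definition of $Q_e$ given earlier in the Preliminaries.

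This is essentially a one-line verification, so there is no serious obstacle. The only point deserving care is the notational overlap between the hyperedge weight in $H$ and the weight assigned to edges in its associated multi-graph $G$; both are denoted $w$, and one must confirm that the weight attached to $L_{uv}$ in the associated graph is indeed $w(e)$ and not some other quantity, which is precisely the content of \defref{def:associated}. A second, even milder check is that the maximum is over \emph{unordered} pairs $\{u,v\}\subseteq e$, which is consistent on both sides of the equality because $(x_u-x_v)^2$ and $L_{uv}$ are both symmetric in $u$ and $v$.
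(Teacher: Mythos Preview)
Your proposal is correct; the paper actually states this as a \emph{fact} without proof, and your direct expansion of $x^\top L_{uv}x = w(e)(x_u-x_v)^2$ followed by pulling out the positive weight is exactly the intended (and essentially only) verification.
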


\cite{SpielmanS08} introduces the following definition of the effective resistance of an edge in the graph, and they sample the edge with probability proportional to its effective resistance, which results in sample complexity $\O{\frac{n}{\eps^2}\log n}$. 
Due to the equivalence mentioned above between spectral sparsification and matrix spectral approximation, the effective resistance of an edge $e$ turns out to be the leverage score of the row $\ba_i$ in the incidence matrix $\bA$, where $\ba_i$ is the row representing $e$.
\begin{lemma}[Effective resistance and leverage score]
\lemlab{lem:er:lev:score}
For a graph $G=(V,E,w)$, the effective resistance of an edge $e=(u,v)\in E$ is the quantity $r_e=w(e)\cdot (\chi_u-\chi_v)L_G^{-1}(\chi_u-\chi_v)^\top$, where $\chi_i$ denotes the elementary row vector with a single non-zero entry in the $i$-th coordinate. 
Given a matrix $\bA\in\mathbb{R}^{n\times d}$, the leverage score of row $\ba_i$ is the quantity $\tau_i = \ba_i(\bA^\top\bA)^{-1}\ba_i^\top$. 
Let $\bA$ be the incidence matrix of $G$. Let $\ba_i$ be the row that represents the edge $e$. Then, we have $r_e = \tau_i$.
\end{lemma}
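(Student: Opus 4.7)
The plan is to prove this lemma by direct substitution, using the two structural identities collected in the preceding fact about the incidence matrix $\bA$ of $G$. Specifically, I would invoke part (3) of the fact, which tells us that the row $\ba_i$ representing edge $e=(u,v)$ satisfies $\ba_i = \sqrt{w(e)}\cdot(\chi_u-\chi_v)$, and that the Gram matrix $\bA^\top\bA$ coincides with the graph Laplacian $L_G$. Substituting both identities into the definition of the leverage score,
\[
\tau_i \;=\; \ba_i(\bA^\top\bA)^{-1}\ba_i^\top \;=\; \bigl(\sqrt{w(e)}(\chi_u-\chi_v)\bigr)\, L_G^{-1}\,\bigl(\sqrt{w(e)}(\chi_u-\chi_v)\bigr)^\top,
\]
and pulling the scalar $\sqrt{w(e)}$ out of each factor yields exactly $w(e)\cdot(\chi_u-\chi_v)L_G^{-1}(\chi_u-\chi_v)^\top = r_e$. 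This is essentially a one-line computation once the correct identifications are in place.

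The only subtle point worth addressing is that $L_G$ is not invertible on all of $\mathbb{R}^n$: the all-ones vector lies in its kernel (and more generally there is one kernel direction per connected component of $G$). So the symbol $L_G^{-1}$ has to be interpreted as the Moore--Penrose pseudoinverse $L_G^{+}$ in both the definition of effective resistance and the definition of the leverage score. The computation above is unaffected by this because the vector $\chi_u-\chi_v$ is orthogonal to the all-ones vector and, more generally, lies in the column space of $L_G$, so $L_G^{+}(\chi_u-\chi_v)^\top$ is the unique vector $y$ orthogonal to $\ker L_G$ with $L_G y = (\chi_u-\chi_v)^\top$, and the quadratic form takes the intended value.

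The main (very mild) obstacle is thus just bookkeeping around the pseudoinverse; I would handle it by noting once at the start of the proof that $\ba_i$ lies in the row space of $\bA$, so both quadratic forms $\ba_i(\bA^\top\bA)^{+}\ba_i^\top$ and $(\chi_u-\chi_v) L_G^{+} (\chi_u-\chi_v)^\top$ agree with their ``inverse'' counterparts whenever one restricts to the non-trivial invariant subspace. After that, the lemma follows from the direct algebraic substitution described above.
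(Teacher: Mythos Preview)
Your proposal is correct. The paper actually states this lemma without proof, treating it as a standard fact; your direct substitution using $\ba_i=\sqrt{w(e)}(\chi_u-\chi_v)$ and $\bA^\top\bA=L_G$ is exactly the intended one-line verification, and your remark about interpreting $L_G^{-1}$ as the pseudoinverse is the appropriate caveat.
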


\paragraph{Online leverage scores.}
We introduce a core technical tool given by \cite{CohenMP20}, which gives the matrix spectral approximation in the online setting, i.e., the rows arrive sequentially in a stream. When a row $\ba_i$ arrives, we only have access to matrix $\bA_i$, which is a part of the matrix $\bA$ that arrives before $\ba_i$. Therefore, we consider the following online variation of leverage scores to approximate the real leverage scores.
\begin{definition}[Online leverage scores]
Given a matrix $\bA\in\mathbb{R}^{n\times d}$, let $\bA_i$ denote the first $i$ rows of $\bA$, i.e., $\bA_i=\ba_1\circ\ldots\circ\ba_i$, where $\circ$ denotes the row-wise concatenation operator. 
The online leverage score of row $\ba_i$ is the quantity $\ba_i(\bA_i^\top\bA_i)^{-1}\ba_i^\top$. 
\end{definition}

The next statement shows that the online leverage score is in fact an overestimate of the leverage score of $\ba_i$ in the whole matrix $\bA$.
\begin{lemma}[Monotonicity of leverage scores]
\cite{BravermanDMMUWZ20,CohenMP20,WoodruffY23}
\lemlab{lem:lev:score:monotone}
Given a matrix $\bA\in\mathbb{R}^{m\times n}$, let $\tau(\ba_i)$ denote the leverage score of $\ba_i$ and let $\tau^{\mathsf{OL}}(\ba_i)$ denote the online leverage score of row $\ba_i$. 
Then $\tau^{\mathsf{OL}}(\ba_i)\ge\tau(\ba_i)$. 
\end{lemma}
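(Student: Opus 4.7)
The plan is to reduce the claim to the standard Loewner-order monotonicity of matrix inversion, after expressing the Gram matrix of the full input as the Gram matrix of its prefix plus a positive semidefinite tail.

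First, I would write
\[
\bA^\top\bA \;=\; \bA_i^\top\bA_i \;+\; \sum_{j=i+1}^m \ba_j^\top\ba_j.
\]
Each rank-one summand $\ba_j^\top\ba_j$ is positive semidefinite, so their sum is as well, giving the Loewner-order inequality $\bA^\top\bA \succeq \bA_i^\top\bA_i$. Assuming for the moment that both matrices are invertible, the classical operator monotonicity of $M \mapsto M^{-1}$ (equivalently, the fact that $M \mapsto -M^{-1}$ is operator monotone on the positive definite cone) yields $(\bA^\top\bA)^{-1} \preceq (\bA_i^\top\bA_i)^{-1}$. Sandwiching this inequality between $\ba_i$ and $\ba_i^\top$ preserves the direction, so
\[
\tau(\ba_i) \;=\; \ba_i(\bA^\top\bA)^{-1}\ba_i^\top \;\le\; \ba_i(\bA_i^\top\bA_i)^{-1}\ba_i^\top \;=\; \tau^{\mathsf{OL}}(\ba_i),
\]
which is exactly the desired inequality.

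The main obstacle is the possibility that $\bA_i^\top\bA_i$ or $\bA^\top\bA$ is rank-deficient, since then the definitions use pseudoinverses and operator monotonicity no longer holds unconditionally. I would handle this by observing that $\ba_i$ is itself one of the rows of $\bA_i$, and hence $\ba_i \in \mathrm{rowspan}(\bA_i) \subseteq \mathrm{rowspan}(\bA)$; equivalently, $\ker(\bA^\top\bA) \subseteq \ker(\bA_i^\top\bA_i) \subseteq \ba_i^\perp$. Under this kernel-containment condition, both quadratic forms $\ba_i M^+ \ba_i^\top$ depend only on the restrictions of $M$ to $\mathrm{rowspan}(\bA)$, and on that subspace both matrices are positive definite with the same Loewner inequality as before, so the pseudoinverses satisfy the analogous monotonicity along $\ba_i$. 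A clean formal route is to regularize by $\delta I$, apply the positive-definite argument to $\bA^\top\bA + \delta I$ and $\bA_i^\top\bA_i + \delta I$, and let $\delta \to 0^+$; by continuity (using $\ba_i \in \mathrm{rowspan}$) the limits are the corresponding pseudoinverse-based quadratic forms, preserving the inequality.

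Overall the proof is short once the decomposition is in hand; the only real care needed is in justifying the pseudoinverse step, which I view as the main technical point rather than a deep obstacle.
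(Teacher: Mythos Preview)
Your argument is correct: the decomposition $\bA^\top\bA = \bA_i^\top\bA_i + \sum_{j>i}\ba_j^\top\ba_j$ gives the Loewner inequality, operator monotonicity of inversion handles the positive-definite case, and your treatment of the rank-deficient case via $\ba_i \in \mathrm{rowspan}(\bA_i)$ together with a $\delta I$ regularization is sound. The paper does not supply its own proof of this lemma; it simply cites it as a known fact from \cite{BravermanDMMUWZ20,CohenMP20,WoodruffY23}, and your argument is essentially the standard one underlying those references.
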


The next statement bounds the sum of online leverage scores, which bounds the sample complexity.
\begin{theorem}[Sum of online leverage scores]
\thmlab{thm:sum:ol}
\cite{CohenMP20}
Given a matrix $\bA\in\mathbb{R}^{m\times n}$, let $\tau^{\mathsf{OL}}(\ba_i)$ denote the online leverage score of row $\ba_i$. 
Then $\sum_{i=1}^m\tau^{\mathsf{OL}}(\ba_i)=\O{n\log\kappa}$, where $\kappa=\|\bA\|_2\cdot\max_{i\in[n]}\|\bA_i^{-1}\|_2$ is the online condition number of $\bA$.
\end{theorem}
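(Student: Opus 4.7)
The plan is to follow the matrix-determinant telescoping argument originating in Cohen, Musco, and Pachocki. For each incoming row, the Sherman--Morrison identity applied to the rank-one update $\bA_i^\top\bA_i = \bA_{i-1}^\top\bA_{i-1} + \ba_i\ba_i^\top$ rewrites the online leverage score as
\[
\tau^{\mathsf{OL}}(\ba_i) \;=\; \frac{t_i}{1+t_i}, \qquad t_i \;:=\; \ba_i^\top(\bA_{i-1}^\top\bA_{i-1})^{-1}\ba_i,
\]
while the matrix determinant lemma gives $\det(\bA_i^\top\bA_i) = (1+t_i)\det(\bA_{i-1}^\top\bA_{i-1})$. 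Because $\ln(1+t) \geq t/(1+t)$ for $t \geq 0$, this yields $\tau^{\mathsf{OL}}(\ba_i) \leq \ln\det(\bA_i^\top\bA_i) - \ln\det(\bA_{i-1}^\top\bA_{i-1})$, and summing over $i$ telescopes to $\ln\det(\bA_m^\top\bA_m) - \ln\det(\bA_0^\top\bA_0)$.

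The main obstacle is that $\bA_0^\top\bA_0 = 0$ is singular, so the left endpoint of the telescope is $-\infty$. To handle this, I would work throughout with a ridge-regularized online leverage score $\tau^{\lambda}(\ba_i) := \ba_i^\top(\lambda I + \bA_i^\top\bA_i)^{-1}\ba_i$ for a parameter $\lambda > 0$ to be fixed later. The same Sherman--Morrison/matrix-determinant-lemma computation goes through verbatim with $\lambda I + \bA_{i-1}^\top\bA_{i-1}$ in place of $\bA_{i-1}^\top\bA_{i-1}$, producing the finite telescoping bound
\[
\sum_{i=1}^m \tau^{\lambda}(\ba_i) \;\leq\; \ln\det(\lambda I + \bA_m^\top\bA_m) \;-\; \ln\det(\lambda I) \;\leq\; n\ln\!\left(1 + \tfrac{\|\bA\|_2^2}{\lambda}\right),
\]
where the last step uses $\det(\lambda I + \bA_m^\top\bA_m) \leq (\lambda + \|\bA\|_2^2)^n$ via AM--GM on the eigenvalues.

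To transfer this back to the true online leverage scores, I would pick $\lambda = 1/\max_i\|\bA_i^{-1}\|_2^2$, so that for every index $i$ at which $\bA_i$ is already full column rank we have $\lambda I + \bA_i^\top\bA_i \preceq 2\,\bA_i^\top\bA_i$, and hence $\tau^{\lambda}(\ba_i) \geq \tfrac12\tau^{\mathsf{OL}}(\ba_i)$ by Loewner-order monotonicity of the quadratic form. For the at most $n$ earlier indices at which $\bA_i$ is not yet full rank, I invoke the trivial bound $\tau^{\mathsf{OL}}(\ba_i) \leq 1$, contributing at most $n$ to the total. Putting the pieces together gives
\[
\sum_{i=1}^m \tau^{\mathsf{OL}}(\ba_i) \;\leq\; 2\sum_{i=1}^m \tau^{\lambda}(\ba_i) + n \;\leq\; 2n\ln\!\left(1 + \|\bA\|_2^2\cdot\max_i\|\bA_i^{-1}\|_2^2\right) + n \;=\; O(n\log\kappa).
\]
The subtlest step is calibrating $\lambda$: it must be small enough that the ridge perturbation preserves each $\tau^{\mathsf{OL}}$ up to a constant factor, yet large enough that the initial $\ln\det(\lambda I) = n\ln\lambda$ absorbs only an $O(n\log\kappa)$ loss into the telescope. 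The choice above balances these two effects precisely, and it is exactly where the combination $\|\bA\|_2\cdot\max_i\|\bA_i^{-1}\|_2$ defining the online condition number $\kappa$ enters the final bound.
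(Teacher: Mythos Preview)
The paper does not prove this theorem; it is imported verbatim from \cite{CohenMP20} and used as a black box. Your proposal reproduces exactly the matrix-determinant-lemma telescoping argument from that reference, and the steps are correct: the Sherman--Morrison identity gives $\tau^{\mathsf{OL}}(\ba_i)=t_i/(1+t_i)$, the inequality $\ln(1+t)\ge t/(1+t)$ converts this to a log-determinant increment, and the ridge regularization with $\lambda=1/\max_i\|\bA_i^{-1}\|_2^2$ is precisely how one handles the singular initial Gram matrix while paying only $O(n\log\kappa)$.

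One minor wording issue: the phrase ``at most $n$ earlier indices at which $\bA_i$ is not yet full rank'' is not quite the right set to single out, since there can be many such indices if early rows are linearly dependent. What you actually need is that there are at most $n$ indices $i$ at which $\ba_i$ lies outside the row span of $\bA_{i-1}$ (these are the rank-increasing steps, where the online leverage score equals $1$); at all remaining indices the Sherman--Morrison and Loewner comparison arguments go through on the range of $\bA_i^\top\bA_i$, interpreting $\|\bA_i^{-1}\|_2$ as the pseudoinverse norm. With that clarification the argument is complete and matches the original source.
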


With the above properties of monotonicity and bounded sum, one can generalize the following result for the online matrix spectral approximation.
\begin{theorem}[Online row sampling]
\thmlab{thm:ol:row}
\cite{CohenMP20}
Given a matrix $\bA\in\mathbb{R}^{m\times n}$ defined by an insertion-only stream, let $\tau^{\mathsf{OL}}(\ba_i)$ denote the online leverage score of row $\ba_i$. 
Then sampling $\O{n \log n \log \kappa}$ rows with probability proportional to $\tau^{\mathsf{OL}}(\ba_i)$ gives a $2$-spectral approximation to $\bA$ at all times.
\end{theorem}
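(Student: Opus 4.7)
The plan is to build $\widetilde{\bA}$ by independently including each arriving row $\ba_i$, rescaled by $1/\sqrt{p_i}$, with sampling probability $p_i = \min\{1,\, c\log n \cdot \tau^{\mathsf{OL}}(\ba_i)\}$ for a sufficiently large constant $c$. Since $\tau^{\mathsf{OL}}(\ba_i)$ depends only on $\bA_i$, this rule is implementable online. The sample-size bound is immediate from linearity of expectation and Theorem~\ref{thm:sum:ol}: $\mathbb{E}\left[\sum_i X_i\right] = \sum_i p_i \le c\log n \cdot \sum_i \tau^{\mathsf{OL}}(\ba_i) = O(n\log n\log\kappa)$, and a standard Chernoff bound upgrades this to a high-probability statement.

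The heart of the proof is to show that for every prefix length $k$, $\widetilde{\bA}_k^\top\widetilde{\bA}_k \approx_2 \bA_k^\top \bA_k$. Writing $\bY_i = \left(\frac{X_i}{p_i}-1\right)\ba_i^\top\ba_i$ with $X_i \in \{0,1\}$ the inclusion indicator, the sequence $\{\bY_i\}$ is a matrix martingale difference adapted to the natural filtration, and $\widetilde{\bA}_k^\top\widetilde{\bA}_k - \bA_k^\top \bA_k = \sum_{i \le k}\bY_i$. The natural way to control this in the $\bA_k$-norm is to study the normalized matrix $\bZ_i^{(k)} = (\bA_k^\top\bA_k)^{-1/2}\bY_i(\bA_k^\top\bA_k)^{-1/2}$ and apply a matrix Freedman (or matrix Bernstein for martingales) inequality. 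The crucial input is a per-step spectral bound: by Lemma~\ref{lem:lev:score:monotone}, the true leverage score $\tau(\ba_i)$ with respect to $\bA_k$ is dominated by $\tau^{\mathsf{OL}}(\ba_i)$, so when $p_i < 1$ we have $\|\bZ_i^{(k)}\|_2 \le \tau(\ba_i)/p_i \le 1/(c\log n)$. A parallel computation bounds the predictable quadratic variation $\|\sum_{i\le k}\mathbb{E}[(\bZ_i^{(k)})^2 \mid \mathcal{F}_{i-1}]\|_2$ by a similar quantity, since the second moment of $(X_i/p_i - 1)$ is at most $1/p_i$.

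Plugging these bounds into matrix Freedman yields $\|\sum_{i\le k}\bZ_i^{(k)}\|_2 \le 1/2$ with failure probability at most $n \cdot \exp(-\Omega(c\log n)) = n^{-\Omega(c)}$ for any fixed $k$, which is exactly the desired $2$-spectral approximation for that prefix. To lift this to "at all times" without losing a $\log m$ factor, the plan is to mimic the argument of \cite{CohenMP20}: restrict the union bound to epoch boundaries where $\bA_k^\top\bA_k$ (or equivalently the cumulative online leverage weight) doubles, which can happen at most $O(n\log\kappa)$ times; within an epoch one can extend via monotonicity of the spectral error, since sampled increments only add, and a slack of $1/2$ absorbs intra-epoch fluctuation.

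The main obstacle I expect is precisely this "at all times" requirement: a naive per-time union bound would inflate the sample count to $O(n\log n\log m)$, and replacing $\log m$ with $\log\kappa$ demands the epoch-based argument above together with a careful verification that the martingale concentration remains valid when $\bA_k^\top\bA_k$ (the norm matrix) itself varies with $k$. The cleanest formulation is probably to prove the invariant inductively: assuming $\widetilde{\bA}_{k-1}^\top\widetilde{\bA}_{k-1} \approx_2 \bA_{k-1}^\top\bA_{k-1}$, one analyzes the single-row update $\bY_k$ using Freedman-type estimates on the normalized scale and pays only for the $O(n\log\kappa)$ moments where the spectrum genuinely changes.
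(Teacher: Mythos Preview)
The paper does not prove this statement; it is quoted as a preliminary from \cite{CohenMP20} and used as a black box. So there is no ``paper's own proof'' to compare against, only your proposal to evaluate on its merits.

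Your single-time analysis is essentially correct: with $p_i=\min\{1,c\log n\cdot\tau^{\mathsf{OL}}(\ba_i)\}$, the per-step spectral bound follows from \lemref{lem:lev:score:monotone}, the sample count from \thmref{thm:sum:ol}, and matrix Bernstein/Freedman gives $\widetilde{\bA}_k^\top\widetilde{\bA}_k\approx_2\bA_k^\top\bA_k$ for each fixed $k$ with failure probability $n^{-\Omega(c)}$. One minor point: since $\tau^{\mathsf{OL}}(\ba_i)$ is a deterministic function of the stream, the $X_i$ are actually independent, so plain matrix Bernstein suffices and the martingale machinery is unnecessary (though harmless).

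The genuine gap is your ``at all times'' argument. The epoch idea is suggestive but, as written, does not go through. First, the claim that $\bA_k^\top\bA_k$ ``doubles'' at most $O(n\log\kappa)$ times needs a precise meaning; if you mean each eigenvalue doubles $O(\log\kappa)$ times, that is $O(n\log\kappa)$ events, but these are not nested and do not partition the stream into epochs in any clean way. Second, and more seriously, the within-epoch monotonicity you invoke is false: the error $\widetilde{\bA}_k^\top\widetilde{\bA}_k-\bA_k^\top\bA_k$ is a signed sum and is not monotone in $k$ in either Loewner direction, so a slack of $1/2$ at epoch boundaries does not propagate. The actual route in \cite{CohenMP20}, and the one this paper relies on when applying the theorem (see \lemref{lem:ol:sample}), is simply to union bound over all $m$ time steps, paying a $\log m$ factor in the oversampling constant. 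For graphs with $m=\poly(n)$ this is $O(\log n)$ and matches the stated bound; when the paper applies the result to hypergraph associated graphs with $m$ potentially exponential, it explicitly substitutes the $\log m$ factor.
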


\section{Online Hypergraph Spectral Sparsifier}
\seclab{sec:ol}
We start with hypergraph sparsification as it generalizes the notion of cut and graph sparsification. 
A natural starting point to sample hyperedges from a hypergraph is to consider its associated graph and the effective resistance of edges that belong to the clique defined by the hyperedges. 
\cite{BansalST19} adopts this idea and sets the sampling probability as $p_e \propto \max_{u,v \in e} r_{uv}$, where $r_{uv}$ is the effective resistance of edge $(u,v)$ in the associated graph.
However, this loses a $\poly(r)$ factor in the sample complexity. 
To avoid this, \cite{KaplralovTY21} introduces a novel weight assignment scheme for the edges in the associated graph. 
For a hyperedge $e$ with weight $w(e)$ in the hypergraph, they assign a weight $z_{uv}$ to each $u,v \in e$ in the associated graph so that they satisfy $\sum_{u,v \in e} z_{uv} = w(e)$, that is, all edges in the clique defined by $e$ sum up to $w(e)$.

Let $\mathbf{A}$ be the incidence matrix of the associated graph, and let $\mathbf{Z}$ denote the diagonal weight matrix, where its $(i,i)$-th entry is the weight $z_i$ of the edge represented by the $i$-th row in matrix $\mathbf{A}$.
They define the effective resistance of an edge $i$ in the re-weighted associated graph to be $\frac{\tau_i(\bZ^{1/2}\bA)}{z_i}$, where $\tau_i(\bZ^{1/2}\bA)$ is the leverage score of the $i$-th row in the weighted incidence matrix $\bZ^{1/2}\bA$. 
Now, $\bZ$ is chosen such that for all edges $i$ in the clique of hyperedge $e$, the ratios $\frac{\tau_i(\bZ^{1/2}\bA)}{z_i}$ are within a factor $\gamma$ of each other, where $\gamma$ is a constant. 
The intuition for this assignment is to ``balance'' the effective resistance of each edge in the clique. 
\cite{KaplralovTY21} shows that sampling $e$ with probability proportional to the $\gamma$-balanced ratios gives an improved sample complexity.

\cite{JambulapatiLS23} states a more general definition of weight assignment and sampling probability, which is called the ``group leverage score overestimate'', and still gives a valid spectral sparsifier. Indeed, for any weight assignment with $\sum_{u,v \in e} z_{uv} = w(e)$, if we sample each hyperedge $e$ with probability higher than the maximum of all ratios $\frac{\tau_i(\bZ^{1/2}\bA)}{z_i}$ of edges $i$ in the clique of $e$, the correctness of the sparsifier is guaranteed. 
Then, we are left to choose a proper $\bZ$ that gives the desired sample complexity.

\begin{definition}[Sampling probability for each hyperedge]
\deflab{def:sampling:prob}
Given a hypergraph $H= (V, E, w)$ and its associated graph $G = (V, F, z)$, where the weight assignment $z$ satisfies $\sum_{u,v \in e} z_{uv} = w(e)$ for all $e \in E$, let $\bA$ be the corresponding incidence matrix for $G$. Let $\bZ$ be the weight matrix. We set the sampling probability of $e$ to satisfy $p_e \ge w(e) \cdot \max_{u, v \in e} \ba_{uv} ( \bA^\top \bZ \bA)^{-1} \ba_{uv}^\top$.
Here, we have $\ba_{uv} ( \bA^\top \bZ \bA)^{-1} \ba_{uv}^\top=\frac{\tau_{uv}(\bZ^{1/2}\bA)}{z_{uv}}$.
We use $q_{uv}$ to denote this ratio in the following sections.
\end{definition}

\cite{JambulapatiLS23} applies a refined chaining argument by leveraging Talagrand’s growth function framework to achieve a tighter bound. 
They construct a sequence of functions that satisfy specific growth conditions with certain metrics, allowing them to control the sum of distances across different scales. 
Their core idea is to amortize the contributions of different scales rather than treating them independently, reducing unnecessary factors in the final bound. 
Their analysis can be applied as a black-box if the sampling probabilities satisfy the condition in \defref{def:sampling:prob}: (1) They are an overestimate; (2) Their values are defined independently of whether other hyperedges are sampled.

\subsection{Online Algorithm}

In the online setting, a technical challenge is that we can only define the weights $\bZ_t$ based on the hyperedges that arrived before time $t$ and their associated graph with incidence matrix $\bA_t$. However, the sampling probabilities must satisfy \defref{def:sampling:prob} for the complete hypergraph and its corresponding associated graph with incidence matrix $\bA$. 

We determine the weight assignment in an online manner. That is, when the local weight $\bZ_t$ at time $t$ is defined, we fix it when the next hyperedge $e$ arrives and add the new weights $\{z_{uv}~|~ u,v \in e\}$ to $\bZ_t$ to define $\bZ_{t+1}$. 
Thus, the matrix $\bZ_t^{1/2} \bA_t$ is consistent along the way, and hence $\frac{\tau_i(\bZ_t^{1/2}\bA_t)}{z_i}$ is an overestimate of $\frac{\tau_i(\bZ^{1/2}\bA)}{z_i}$ due to the monotonicity of online leverage scores. 
Then, we define the sampling probability $p_{e_t}$ as satisfying
$p_{e_t} \ge w(e) \cdot \max_{u, v \in e} \ba_{uv} ( \bA_t^\top \bZ_t \bA_t)^{-1} \ba_{uv}^\top$.
Therefore, the sampling probabilities $p_{e_t}$ satisfy \defref{def:sampling:prob}, and so using the chaining argument in \cite{JambulapatiLS23} gives the correctness of our algorithm.

However, the above process requires us to store the entire matrix $\bZ^{1/2} \bA$, which uses a prohibitively large working memory.
To solve this problem, we propose a local weight-assignment subroutine with reduced working memory based on online row sampling \cite{CohenMP20}. 
Let $\bB$ be a matrix, and let $\bB_i$ denote the matrix formed by the first $i$ rows of $\bB$. 
In online row sampling, we sample the newly arrived row $\bb_i$ with probability $\Bar{p_i} \propto \bb_i (\widetilde{\bB}_{i-1}^\top\widetilde{\bB}_{i-1})^{-1} \bb_i^\top$, where the prefix matrix $\widetilde{\bB}_{i-1}$ consists of the rows sampled from previous steps, and we add $\bb_i / \Bar{p_i}$ to $\widetilde{\bB}_{i-1}$ if it is sampled. 
We adapt this idea to our online sampling procedure in an iterative way. 
Suppose that we have a re-weighted incidence matrix $\bM$ that is a $2$-approximate spectral approximation of $\bZ_t^{1/2} \bA_t$. We use $\bM$ to determine the weights $z_{t+1}$ of the newly arrived hyperedge $e_{t+1}$ by calling $\textsc{GetWeightAssignment}(\bM, e)$, which is a local version of the balanced-weight-assignment procedure in \cite{KaplralovTY21} ((see \thmref{thm:balance:weight:prev}).
Then we sample the weighted edge vector $\ba_{uv} \cdot \sqrt{z_{t+1,uv}}$ by online row sampling and add to $\bM$, so the resulting matrix is still a $2$-approximate spectral approximation to the matrix $\bZ_{t+1}^{1/2} \bA_{t+1}$, which suffices for our purpose.
Our algorithm is displayed in \algref{alg:online:hg:sparsifier:opt}.
\begin{algorithm}[!htb]
\caption{Online Hyperedge Spectral Sparsifier}
\alglab{alg:online:hg:sparsifier:opt}
\begin{algorithmic}[1]
\State{{\bf Require:} Stream of $m$ hyperedges for hypergraph $H$ with rank $r$}
\State{{\bf Ensure:} Spectral sparsifier $\widehat{H}$ for $H$}
\State{$\widehat{H}\gets\emptyset$, $\rho\gets\O{\frac{1}{\eps^2}\log m \log r}$, $\bM \gets \emptyset$}
\For{hyperedge $e_t$}
\State{$z_t \gets z_{t-1} \cup \textsc{GetWeightAssignment}(\bM, e_t)$}
\Comment{See \thmref{thm:balance:weight:prev}}
\For{$u,v\in e_t$}
\State{Sample weighted row $\ba_{uv}  \cdot \sqrt{z_{t,uv}}$ to $\bM$ by online row sampling \cite{CohenMP20}}
\State{}\Comment{$\bM$ is a $2$-spectral approximation to $\bZ_t^{1/2} \bA_t$ at all times $t$, i.e., $\frac{1}{2}\cdot \bM^\top\bM \preceq \bA_t^\top \bZ_t \bA_t \preceq 2 \cdot \bM^\top \bM$}
\EndFor
\For{$u,v\in e_t$}
\State{$q_{uv}\gets \ba_{uv} \cdot w(e_t) \cdot (\bM^\top \bM)^{-1} \cdot \ba_{uv}^\top$}
\EndFor
\State{$p_{e_t}\gets \min \{1, 2\rho\cdot\max_{u,v\in e_t}q_{uv} \}$}
\State{With probability $p_{e_t}$, $\widehat{H}\gets\widehat{H}\cup\frac{1}{p_{e_t}}\cdot e_t$}
\State{{\bf Return} $\widehat{H}$}
\EndFor
\end{algorithmic}
\end{algorithm}

We next show the correctness and bound the sample complexity of our algorithm.
To begin with, we state the result of the chaining argument from \cite{JambulapatiLS23}.

\begin{theorem}[Correctness, see Theorem 10 in \cite{JambulapatiLS23}]
\thmlab{thm:correctness}
Given a hypergraph $H= (V, E, w)$ and its associated graph $G = (V, F)$ with valid weight assignment $z$, let $p_e$ be chosen independently according to \defref{def:sampling:prob}. Let $\rho = \O{\frac{1}{\eps^2}\log m \log r}$. Suppose that we sample each hyperedge $e$ in $H$ with probability $\widehat{p_e} = \min \{1, p_e\cdot \rho\}$ and scale by $\frac{1}{\widehat{p_e}}$ if $e$ is sampled. Then, the resulting hypergraph $\calH$ is a $(1+\eps)$-spectral sparsifier for $H$ with probability $1-\frac{1}{\poly(m)}$.
\end{theorem}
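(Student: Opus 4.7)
The theorem is a direct restatement of Theorem 10 of \cite{JambulapatiLS23}, so the plan is to verify the two conditions that their analysis takes as a black box (as highlighted in the paragraph preceding \defref{def:sampling:prob}) and then cite. First, by definition of the sampling probabilities $p_e$ in \defref{def:sampling:prob}, we have $p_e \ge w(e)\cdot\max_{u,v\in e}\ba_{uv}(\bA^\top\bZ\bA)^{-1}\ba_{uv}^\top$, so each $p_e$ is a valid group-leverage-score overestimate with respect to the full incidence matrix $\bA$ and weight matrix $\bZ$. Second, the weight assignment $z$, and hence the matrix $\bZ$ itself, is determined entirely by the input hypergraph $H$ and does not depend on the random sampling outcomes $\xi_{e'}$ used to form $\widehat H$; consequently the family $\{p_e\}_{e\in E}$ is a deterministic function of $H$ and is independent of whether any particular hyperedge is sampled. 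With these two properties, the JLS chaining bound applies verbatim and produces the desired $(1+\eps)$-spectral sparsifier with probability $1-1/\poly(m)$.

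For completeness, the JLS argument itself proceeds along the following lines. Writing the sparsifier error as
\[Q_{\widehat H}(x)-Q_H(x)\;=\;\sum_{e\in E}\Bigl(\tfrac{\xi_e}{\widehat p_e}-1\Bigr)\,Q_e(x),\]
we observe by homogeneity that it suffices to bound this quantity uniformly over the sublevel set $T=\{x\in\mathbb{R}^n:Q_H(x)\le 1\}$ by $\eps$. A standard symmetrization step reduces the task to bounding the Rademacher process $\mathbb{E}\sup_{x\in T}\bigl|\sum_e \sigma_e Q_e(x)/\widehat p_e\bigr|$. One then applies generic chaining with two metrics on $T$: an $L_2$-type metric controlling the variance of a single-hyperedge increment $Q_e(x)-Q_e(y)$ weighted by $1/\widehat p_e$, and an $L_\infty$-type metric absorbing the Bernstein tail. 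Constructing an admissible sequence of partitions of $T$ at geometric scales and establishing a Talagrand growth condition against these metrics---using, crucially, the balance of $\tau_{uv}(\bZ^{1/2}\bA)/z_{uv}$ across the clique of each hyperedge guaranteed by \defref{def:sampling:prob}---yields the multiplicative factor $\rho=\O{\frac{1}{\eps^2}\log m\log r}$.

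The main obstacle (which is the reason one needs chaining rather than a routine matrix concentration bound) is the non-linearity of $Q_e(x)=w(e)\max_{u,v\in e}(x_u-x_v)^2$: it is a pointwise maximum over the $\binom{r}{2}$ edges of the clique rather than a quadratic form, so a direct matrix-Chernoff or Spielman--Srivastava-style argument does not apply, and a naive union bound over the $\binom{r}{2}$ choices of $(u,v)$ within each hyperedge would inflate the sample complexity by a polynomial factor in $r$. The JLS growth-function framework exploits the balance of the weights $z_{uv}$ within each clique to amortize these contributions across scales, collapsing the loss to a single $\log r$ factor---this is precisely the quantitative improvement that our online sampling scheme is designed to inherit, via the monotonicity of online leverage scores together with the local weight-assignment subroutine used in \algref{alg:online:hg:sparsifier:opt}.
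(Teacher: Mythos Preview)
Your proposal is essentially correct and matches the paper's approach: the paper does not prove this theorem at all but simply cites it as Theorem 10 of \cite{JambulapatiLS23}, and your plan to verify the two black-box conditions (overestimate and independence) and then invoke the citation is exactly right.

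One small inaccuracy in your sketch is worth flagging. You write that the chaining argument uses ``crucially, the balance of $\tau_{uv}(\bZ^{1/2}\bA)/z_{uv}$ across the clique of each hyperedge guaranteed by \defref{def:sampling:prob}.'' But \defref{def:sampling:prob} does \emph{not} impose any balance condition---it only requires $p_e \ge w(e)\cdot\max_{u,v\in e} q_{uv}$. The $\gamma$-balance property (\defref{def:ol;gamma:bal}) is a separate, stronger requirement used only to bound the \emph{sum} $\sum_e p_e$, i.e., the sample complexity; it plays no role in the correctness proof of \thmref{thm:correctness}. The JLS chaining argument establishes the multiplicative factor $\rho=\O{\frac{1}{\eps^2}\log m\log r}$ purely from the overestimate property, for \emph{any} valid weight assignment $z$ with $\sum_{u,v\in e} z_{uv}=w(e)$. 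This distinction matters in the paper's architecture, since correctness and sample complexity are established separately (\lemref{lem:correctness:ol} versus \lemref{lem:sample:ol}).
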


Next, we assume that there is a way to define the weight matrix $\bZ_t$ locally such that the sampling probability $p_{e_t}$ defined by $p_{e_t} \ge w(e) \cdot \max_{u, v \in e} \ba_{uv} ( \bA_t^\top \bZ_t \bA_t)^{-1} \ba_{uv}^\top$ satisfies \defref{def:sampling:prob} and is smaller than the sum of online leverage scores of $\bZ^{1/2} \bA$.
We state the result as follows and defer the discussion to \secref{sec:exist:bw}.
\begin{theorem}
\thmlab{thm:balance:weight:prev}
Given a graph $G=(V, F)$ with weight assignment $\bZ$ and incidence matrix $\bA$ and a hyperedge $e \subset V$ with weight $w(e)$, then there is a procedure $\textsc{GetWeightAssignment}(\bZ^{1/2}\bA, e)$ that assigns a weight $z_{uv}$ to each edge $u, v \in e$ such that it satisfies (1) $\sum_{u,v \in e} z_{uv} = w(e)$, and (2) $\max_{u,v \in e} \ba_{uv} (\bA^\top \bZ \bA)^{-1} \ba_{uv}^\top = \O{\sum_{u,v \in e} \tau_{uv}(\bZ^{1/2} \bA)}$, where $\tau$ is the leverage score function.
\end{theorem}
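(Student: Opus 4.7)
The plan is to establish the theorem by adapting the balanced weight assignment framework of \cite{KaplralovTY21}. I define the ratio $\gamma_{uv} := \ba_{uv}(\bA^\top \bZ \bA)^{-1}\ba_{uv}^\top$ for each pair $(u,v)$ in the clique of the new hyperedge $e$, interpreting $\bZ$ as incorporating the weights $z_{uv}$ being assigned. By the Sherman-Morrison formula, $\gamma_{uv}$ is strictly decreasing in $z_{uv}$ when the remaining weights are held fixed; more generally, the map $\bz \mapsto \gamma_{uv}(\bz)$ is convex on the simplex $\{\bz \ge 0 : \sum_{u,v\in e} z_{uv} = w(e)\}$. The plan is to find weights on this simplex such that all $\gamma_{uv}$ lie within a universal constant factor $\gamma$ of one another, which I call a $\gamma$-balanced assignment.

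To construct such an assignment, I would initialize with uniform weights $z_{uv} = w(e)/\binom{r}{2}$ and then iteratively transfer a small amount of weight from the edge with the smallest $\gamma_{uv}$ to the edge with the largest $\gamma_{uv}$. By the monotonicity noted above, each transfer weakly reduces the maximum $\gamma_{uv}$. Existence of a balanced equilibrium can also be derived via Brouwer's fixed-point theorem applied to the normalized reweighting map $\bz \mapsto \bz'$ with $z'_{uv} \propto z_{uv}\cdot\gamma_{uv}$; any fixed point is a uniformly balanced assignment. For polynomial-time convergence, I would use $\log\det(\bA^\top \bZ \bA)$ as a concave potential that strictly increases under each rebalancing step, bounding the number of iterations needed to reach $O(1)$-balance.

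Given a $\gamma$-balanced assignment, condition~(2) follows from the standard identity $\tau_{uv}(\bZ^{1/2}\bA) = z_{uv}\cdot\gamma_{uv}$ for leverage scores of weighted rows. In particular,
\[\sum_{u,v\in e} \tau_{uv}(\bZ^{1/2}\bA) \;=\; \sum_{u,v\in e} z_{uv}\,\gamma_{uv} \;\ge\; \Bigl(\min_{u,v\in e}\gamma_{uv}\Bigr)\cdot w(e),\]
so combining with the balance property yields $\max_{u,v\in e}\gamma_{uv} \le \gamma \cdot \min_{u,v\in e}\gamma_{uv} \le \frac{\gamma}{w(e)}\sum_{u,v\in e}\tau_{uv}(\bZ^{1/2}\bA) \le \gamma \sum_{u,v\in e}\tau_{uv}(\bZ^{1/2}\bA)$, since $w(e) \ge 1$ as a positive integer weight. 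This establishes condition~(2), and condition~(1) is preserved by construction since the water-filling step is a simplex-preserving transfer.

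The main obstacle will be implementing the procedure in the streaming setting: we do not have access to $\bZ^{1/2}\bA$ exactly, but only to a $2$-spectral approximation $\bM$ maintained by online row sampling. Substituting $\bM^\top\bM$ for $\bA^\top\bZ\bA$ in the computation of each $\gamma_{uv}$ introduces a constant-factor distortion, which relaxes the balance constant $\gamma$ by a constant factor but preserves both the monotonicity structure that drives the water-filling argument and the concavity of the log-determinant potential. Carefully tracking this distortion through the Sherman-Morrison updates, verifying that the rebalancing step remains well-defined under the approximation, and ensuring that the perturbed $\gamma_{uv}$ values still serve as valid overestimates for \defref{def:sampling:prob}, is the most delicate technical point of the argument.
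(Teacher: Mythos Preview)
Your approach is essentially the same as the paper's: construct an online $(\gamma,e)$-balanced weight assignment via iterative weight shifts on the simplex $\{z:\sum_{u,v\in e}z_{uv}=w(e)\}$, and then deduce condition~(2) from balance. Your derivation of condition~(2) from $\gamma$-balance via $\tau_{uv}=z_{uv}\gamma_{uv}$ is correct and matches the paper's reasoning. Your potential $\log\det(\bA^\top\bZ\bA)$ is, by the matrix-tree theorem, the same object (up to an additive constant) as the spanning-tree potential $\Psi(G)=\log\sum_{T}\prod_{uv\in T}z_{uv}$ that the paper uses, so there is no substantive difference there.

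Two points deserve correction. First, your claim that ``each transfer weakly reduces the maximum $\gamma_{uv}$'' is not justified: a transfer changes two diagonal entries of $\bZ$ simultaneously, and the rank-one update from decreasing $z_{u'v'}$ can increase every $\gamma_{u''v''}$, so the net effect on the maximum is ambiguous. The paper does \emph{not} argue monotonicity of the maximum; it relies purely on the potential increase (their \lemref{lem:reduction}) and a careful case split into two types of shifts --- one where the potential jumps by a fixed constant $c_\gamma>0$, and one where some $z_{u'v'}$ is set to zero --- to obtain finite termination. You should drop the monotonicity claim and rely on the potential argument alone.

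Second, you assert polynomial-time convergence from the concavity of $\log\det$, but neither you nor the paper substantiates this: the ST-potential/log-det argument only shows that type-one shifts number at most $(\Psi(G_\infty)-\Psi(G_0))/c_\gamma$, while type-two shifts are bounded by a combinatorial argument over the finitely many zero-patterns. The paper explicitly notes that polynomial time is \emph{not} established for this procedure, which is why it introduces the separate \algref{alg:online:hg:sparsifier} with a $\poly(r)$ loss but $\poly(n)$ update time. Your proposal should temper the runtime claim accordingly; the theorem only asks for existence.
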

Then, by \thmref{thm:correctness} and \thmref{thm:balance:weight:prev}, sampling each hyperedge $e$ with probability defined by $p_{e_t} \ge w(e) \cdot \max_{u, v \in e} \ba_{uv} ( \bA_t^\top \bZ_t \bA_t)^{-1} \ba_{uv}^\top$ gives a valid sparsifier with desired sample complexity.
Recall that this approach implies a suboptimal working memory.
We apply the online row sampling scheme to sample from $\bZ^{1/2} \bA$, reducing the rows we store.

Notice that the chaining argument in \cite{JambulapatiLS23} requires that the sampling probabilities be assigned independently. 
The following statement shows that $p_{e_t}$ is defined independently of whether the previous hyperedges are sampled.
\begin{lemma}
\lemlab{lem:indep}
For each time $t$, the sampling probability $p_{e_t}$ is independent of the hyperedges that have been sampled previously. 
\end{lemma}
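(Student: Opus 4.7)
The plan is to show that $p_{e_t}$ is measurable with respect to a $\sigma$-algebra generated by randomness that is, by construction, independent of the Bernoulli trials used to decide inclusion of earlier hyperedges in $\widehat{H}$. The key observation is that \algref{alg:online:hg:sparsifier:opt} draws its random bits from two disjoint, independent pools: (i) the coin flips used inside the online row sampling subroutine of \thmref{thm:ol:row} to maintain the sketch $\bM$, and (ii) the Bernoulli$(p_{e_s})$ flips that determine whether each hyperedge $e_s$ is placed into $\widehat{H}$.

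My first step is to trace the definition of $p_{e_t}$ backwards. From the algorithm, $p_{e_t} = \min\{1, 2\rho \cdot \max_{u,v\in e_t} q_{uv}\}$ with $q_{uv} = w(e_t)\cdot \ba_{uv}(\bM^\top \bM)^{-1}\ba_{uv}^\top$, so $p_{e_t}$ is a deterministic function of $e_t$ and of the matrix $\bM$ at time $t$. Thus it suffices to show that $\bM$ at time $t$ is independent of the Bernoulli outcomes of pool (ii) for $s < t$.

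Next, I would prove by induction on $s$ that $\bM$ immediately after processing $e_s$ depends only on the stream $e_1,\ldots,e_s$ and on pool (i). Indeed, upon arrival of $e_s$, the algorithm (a) computes $z_s = \textsc{GetWeightAssignment}(\bM,e_s)$, which by \thmref{thm:balance:weight:prev} is a deterministic function of the current $\bM$ and $e_s$; and (b) feeds the weighted rows $\ba_{uv}\sqrt{z_{s,uv}}$ into the online row sampling scheme, whose decisions and stored rows depend only on the current $\bM$, the new rows, and fresh bits from pool (i). Neither (a) nor (b) inspects pool (ii), and neither branches on whether any previous $e_j$ was placed into $\widehat{H}$. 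Hence $\bM$ at time $t$ is $\sigma(\text{(i)},e_1,\ldots,e_t)$-measurable.

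Combining the two observations, $p_{e_t}$ is a deterministic function of $(e_t,\bM_t)$, both of which lie in a $\sigma$-algebra independent of pool (ii), and in particular independent of the Bernoulli indicators $\mathbf{1}[e_s \in \widehat{H}]$ for $s < t$. The only potential subtlety, and the step I would be most careful with, is verifying that $\textsc{GetWeightAssignment}$ really is deterministic given its inputs (i.e.\ that the balanced-weight subroutine from \thmref{thm:balance:weight:prev} does not itself sneak in randomness that couples to pool (ii)); this is immediate from the convex-optimization nature of that procedure. Given that, the claim follows.
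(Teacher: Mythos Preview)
Your proposal is correct and follows essentially the same approach as the paper: both arguments separate the algorithm's randomness into two independent pools (the online row-sampling coins that determine $\bM$, and the Bernoulli flips that decide inclusion in $\widehat{H}$), then observe that $p_{e_t}$ is a function only of the former. Your version is more carefully stated, with the explicit induction on $\bM_s$ and the remark about determinism of $\textsc{GetWeightAssignment}$, but the underlying idea is identical.
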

\begin{proof}
First, we stress that the construction of $\bM$ by online row sampling from $\bZ^{1/2}\bA$ and sampling the hyperedges are separate procedures with independent inner randomness.
Consider a fixed stream of hyperedges $e_1, \cdots, e_m$.
Let $\bM_t$ denote the matrix $\bM$ at time $t$.
When the inner randomness of online row sampling procedure is fixed, the sequence of matrices $\bM_1, \cdots, \bM_m$ is also fixed.
Then, the sampling probabilities $p_{e_t}$ are defined independently, each based on the value of $\max_{u,v \in e} \ba_{uv} \cdot w(e_t) \cdot (\bM^\top \bM)^{-1} \cdot \ba_{uv}^\top$.
Thus, it is independent of the hyperedges that have been sampled previously.
\end{proof}

Next, we show the correctness of \algref{alg:online:hg:sparsifier:opt}.
\begin{lemma}
\lemlab{lem:correctness:ol}
\algref{alg:online:hg:sparsifier:opt} outputs a $(1+\eps)$-hypergraph spectral sparsifier with probability $1-\frac{1}{\poly(m)}$.
\end{lemma}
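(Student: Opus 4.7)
The plan is to verify the two hypotheses of \thmref{thm:correctness} and then invoke it as a black box. \thmref{thm:correctness} requires (i) independence of the sampling decisions, and (ii) that each $p_{e_t}$ is an overestimate of $w(e_t) \cdot \max_{u,v \in e_t} \ba_{uv}(\bA^\top \bZ \bA)^{-1}\ba_{uv}^\top$ in the sense of \defref{def:sampling:prob}, where $\bZ$ is the final weight matrix after all hyperedges have been processed. Independence is already given by \lemref{lem:indep}, since the randomness used to build $\bM$ via online row sampling is separate from the randomness used to decide which hyperedges to keep. So the main task is to verify the overestimate condition.

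First I would argue that $\textsc{GetWeightAssignment}$, as guaranteed by \thmref{thm:balance:weight:prev}, produces weights $\{z_{t,uv}\}_{u,v \in e_t}$ with $\sum_{u,v \in e_t} z_{t,uv} = w(e_t)$, so the cumulative $\bZ_t$ (and eventually $\bZ$) is a valid weight assignment for the associated graph in the sense of \defref{def:sampling:prob}. Next, I would use the invariant $\tfrac{1}{2}\bM^\top\bM \preceq \bA_t^\top \bZ_t \bA_t \preceq 2\bM^\top\bM$, maintained by the online row sampling guarantee of \thmref{thm:ol:row}, to conclude that $(\bA_t^\top \bZ_t \bA_t)^{-1} \preceq 2(\bM^\top\bM)^{-1}$. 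Plugging into the quadratic form with the edge vector $\ba_{uv}$ yields
\[
w(e_t)\cdot \ba_{uv}(\bA_t^\top \bZ_t \bA_t)^{-1} \ba_{uv}^\top \;\leq\; 2\, w(e_t)\cdot \ba_{uv}(\bM^\top\bM)^{-1} \ba_{uv}^\top \;=\; 2 q_{uv}.
\]
Thus $p_{e_t} = \min\{1,\, 2\rho \max_{u,v} q_{uv}\} \geq \min\{1,\, \rho\cdot w(e_t)\max_{u,v} \ba_{uv}(\bA_t^\top \bZ_t \bA_t)^{-1} \ba_{uv}^\top\}$, which is the desired online overestimate.

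To lift this from the prefix $\bA_t^\top \bZ_t \bA_t$ to the full $\bA^\top \bZ \bA$, I would invoke the monotonicity of (online) leverage scores from \lemref{lem:lev:score:monotone}: since the weight assignment is built incrementally, $\bA_t^\top \bZ_t \bA_t \preceq \bA^\top \bZ \bA$ in the PSD order, and therefore $(\bA^\top \bZ \bA)^{-1} \preceq (\bA_t^\top \bZ_t \bA_t)^{-1}$. This yields $\ba_{uv}(\bA^\top \bZ \bA)^{-1}\ba_{uv}^\top \leq \ba_{uv}(\bA_t^\top \bZ_t \bA_t)^{-1}\ba_{uv}^\top$, so $p_{e_t}$ is a valid overestimate with respect to the final matrix as well, satisfying \defref{def:sampling:prob} up to the constant $\rho$.

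Having verified both conditions, \thmref{thm:correctness} immediately yields that $\widehat{H}$ is a $(1+\eps)$-spectral sparsifier with probability $1-\tfrac{1}{\poly(m)}$. A union bound with the failure probability of the online row sampling subroutine that maintains $\bM$ (which is likewise $\tfrac{1}{\poly(m)}$ by \thmref{thm:ol:row}) completes the proof. The main subtlety I anticipate is the interaction between the two sources of randomness: one must be careful that the definition of $p_{e_t}$ conditional on the realization of $\bM$ remains independent across $t$, which is precisely the content of \lemref{lem:indep} and why the chaining argument of \cite{JambulapatiLS23} still applies after conditioning on the trajectory of $\bM$.
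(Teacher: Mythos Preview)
Your proposal is correct and follows essentially the same approach as the paper's proof: verify the weight assignment is valid via \thmref{thm:balance:weight:prev}, use the $2$-spectral approximation guarantee of $\bM$ from \thmref{thm:ol:row} together with monotonicity to show that $p_{e_t}$ satisfies the overestimate condition of \defref{def:sampling:prob}, invoke \lemref{lem:indep} for independence, and then apply \thmref{thm:correctness}. If anything, you are slightly more explicit than the paper in spelling out the monotonicity step from the prefix matrix $\bA_t^\top\bZ_t\bA_t$ to the full $\bA^\top\bZ\bA$ and in noting the union bound with the failure event of the online row sampling subroutine.
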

\begin{proof}
By the guarantee of \thmref{thm:ol:row}, $\bM$ is a $2$-spectral approximation to $\bZ_t^{1/2} \bA_t$ at all times $t$. Then, the estimated quadratic form $q_{uv}$ satisfies $2q_{uv} \ge \ba_{uv} ( \bA^\top \bZ \bA)^{-1} \ba_{uv}^\top$.
In addition, by \thmref{thm:balance:weight:prev}, our weight assignment satisfies $\sum_{u,v \in e} z_{uv} = w(e)$ for each $e$, so the sampling probabilities $p_{e_t}$ satisfy \defref{def:sampling:prob}. Moreover, by \lemref{lem:indep}, the sampling probabilities are defined independently of previously sampled hyperedges. Thus, we can apply \thmref{thm:correctness} directly to show that $\widehat{H}$ is a $(1+\eps)$-hypergraph spectral sparsifier with probability $1-\frac{1}{\poly(m)}$.
\end{proof}

Next, we upper bound the sample complexity and the working memory, where the latter is the number of rows that we store in $\bM$ to compute the sampling probabilities of the hyperedges.
\begin{lemma}
\lemlab{lem:sample:ol}
With probability $1-\frac{1}{\poly(m)}$, \algref{alg:online:hg:sparsifier:opt} samples $\O{\frac{n}{\eps^2}\log n \log m \log r}$ hyperedges.   
Moreover, it uses $\O{n \log^2 n \log m}$ bits of working memory.
\end{lemma}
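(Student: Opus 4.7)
The plan is to handle the two claims separately, reducing each to a sum over online leverage scores of the weighted incidence matrix $\bZ^{1/2}\bA$, and then invoking \thmref{thm:sum:ol} and \thmref{thm:ol:row}.

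\textbf{Sample complexity.} First I would bound the expected number of sampled hyperedges by
\[
\sum_{t=1}^m p_{e_t} \;\le\; 2\rho \sum_{t=1}^m \max_{u,v\in e_t} q_{uv}.
\]
For each $t$, the $2$-spectral approximation guarantee $\tfrac12\bM^\top\bM \preceq \bA_t^\top \bZ_t \bA_t \preceq 2\bM^\top\bM$ (which is maintained by \thmref{thm:ol:row}) gives
$q_{uv} \le 2\,w(e_t)\,\ba_{uv}(\bA_t^\top \bZ_t \bA_t)^{-1}\ba_{uv}^\top$.
Combining with \thmref{thm:balance:weight:prev}, whose second condition exactly bounds the maximum on the right-hand side by the sum $\sum_{u,v\in e_t}\tau_{uv}(\bZ_t^{1/2}\bA_t)$, and using that $\tau_{uv}(\bZ_t^{1/2}\bA_t)$ is at most the online leverage score $\tau^{\mathsf{OL}}_{uv}$ of the row $\sqrt{z_{t,uv}}\,\ba_{uv}$ in the matrix $\bZ^{1/2}\bA$ (by \lemref{lem:lev:score:monotone}), one obtains
\[
\sum_{t=1}^m \max_{u,v\in e_t} q_{uv} \;=\; \O{\sum_{t=1}^m \sum_{u,v\in e_t}\tau^{\mathsf{OL}}_{uv}(\bZ^{1/2}\bA)} \;=\; \O{n \log\kappa},
\]
where the last equality is \thmref{thm:sum:ol}. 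A standard argument shows the online condition number $\kappa$ of $\bZ^{1/2}\bA$ is $\poly(n,m)$ (polynomial edge weights and integer incidences), so $\log\kappa = \O{\log(mn)} = \O{\log m}$. Multiplying by $\rho = \O{\tfrac{1}{\eps^2}\log m\log r}$ yields an expected bound of $\O{\tfrac{n}{\eps^2}\log n\log m\log r}$. A Chernoff bound, applicable because the $p_{e_t}$ are independent of previous sampling decisions (\lemref{lem:indep}), boosts this to a high-probability statement with failure probability $\frac{1}{\poly(m)}$.

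\textbf{Working memory.} The only nontrivial object stored across iterations is $\bM$, which is produced by applying the online row sampling scheme of \thmref{thm:ol:row} to the stream of weighted rows $\sqrt{z_{t,uv}}\,\ba_{uv}$ that we feed into it. That theorem says $\bM$ retains $\O{n\log n\log\kappa} = \O{n\log n\log m}$ rows with high probability. Each stored row is a weighted incidence row of $G$, specified by two vertex indices and a weight with $\O{\log n}$ bits. Summing gives $\O{n\log^2 n\log m}$ bits for $\bM$; the current weight assignment $z_t$ for the currently processed hyperedge and the sparsifier $\widehat H$ itself are charged to output rather than working memory, so the total working memory matches the claimed bound.

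\textbf{Main obstacle.} The delicate point, and the step I would be most careful about, is the interaction between two different ``online'' constructions: the iterative balanced weight assignment that defines $\bZ_t$ one hyperedge at a time, and the online row sampling used to build $\bM$. I need to verify that the monotonicity chain $\tau_{uv}(\bZ_t^{1/2}\bA_t) \le \tau^{\mathsf{OL}}_{uv}(\bZ^{1/2}\bA)$ actually applies here --- that is, that the row $\sqrt{z_{t,uv}}\,\ba_{uv}$ is compared against the prefix of $\bZ^{1/2}\bA$ up to and including itself (with $z_{t,uv}$ fixed once generated, which the algorithm guarantees). Once that is in place, \thmref{thm:sum:ol} applies to the fixed matrix $\bZ^{1/2}\bA$ built up online, and the remaining calculations are routine.
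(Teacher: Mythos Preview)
Your proposal follows essentially the same route as the paper: reduce $\sum_t p_{e_t}$ to a sum of online leverage scores of $\bZ^{1/2}\bA$ via \thmref{thm:balance:weight:prev} and the $2$-spectral approximation of $\bM$, then invoke \thmref{thm:sum:ol}; and bound the working memory by the number of rows retained by online row sampling.

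Two small points. First, you claim $\log\kappa = \O{\log m}$ but then write the final bound as $\O{\frac{n}{\eps^2}\log n\log m\log r}$; with your $\log\kappa$ estimate the product would be $\O{\frac{n}{\eps^2}\log^2 m\log r}$. The paper closes this gap by invoking the integer-weight assumption (\factref{fac:kappa}) to obtain $\log\kappa = \O{\log n}$, which is what yields the stated $\log n\log m$ dependence. Second, a minor ordering issue: since the algorithm calls $\textsc{GetWeightAssignment}(\bM,e_t)$, \thmref{thm:balance:weight:prev} gives the balancing guarantee with respect to $\bM$, not $\bZ_t^{1/2}\bA_t$; the paper therefore first gets $\max_{u,v\in e_t} q_{uv} = \O{\sum_{u,v\in e_t}\tau_{uv}(\bM_t)}$ and only then uses the spectral approximation to pass to online leverage scores of $\bZ^{1/2}\bA$. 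Your reordering works up to constants, but the paper's sequencing is cleaner.
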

\begin{proof}
To bound the sample complexity, we only need to bound $\sum_{t=1}^m p_{e_t}$, where $p_{e_t}\le \rho\cdot\max_{u,v\in e_t}q_{uv}$ due to our definition. 
Notice that the local weight assignment $\bZ_{t+1}$ is defined by the sparsified graph with weighted incidence matrix $\bM$ at time $t$, then by \thmref{thm:balance:weight:prev}, our weight assignment guarantees that $\max_{u,v\in e_t}q_{uv} = \O{\sum_{u,v \in e} \tau_{uv}(\bM)}$. Since $\bM$ is a $2$-spectral approximation to $\bZ_t^{1/2} \bA_t$ at all times $t$, we have
\begin{align*}\sum_{t=1}^m \max_{u,v\in e_t}q_{uv} = & ~ \O{\sum_{t=1}^m\sum_{u,v \in e} \tau_{uv}(\bM_t)} \\ = & ~\O{\sum_{i = 1}^{|F|} \tau_i^\mathsf{OL}(\bZ^{1/2} \bA)},\end{align*}
where $\bM_t$ is the matrix $\bM$ defined at time $t$, $|F|$ is the number of rows in the incidence matrix $\bA$, and $\tau^\mathsf{OL}$ denotes the online leverage score operator. 
Utilizing \thmref{thm:sum:ol}, we bound the sum of online leverage scores on the RHS of the above equation by $\O{n \log \kappa}$. 
Therefore, we have
\[\sum_{t=1}^m \max_{u,v\in e_t}q_{uv} = \O{n \log \kappa},\]
where $\kappa=\|\bA\|_2\cdot\max_{i\in[n]}\|\bA_i^{-1}\|_2$ is the online condition number of $\bA$. We state an upper bound on the online condition number.
\begin{fact} [see Corollary 2.4 in \cite{CohenMP20}]
\factlab{fac:kappa}
Suppose that all hyperedge weights are integers from $[\poly(n)]$, then we have $\log \kappa = \O{\log n}$.
\end{fact}
Suppose that the condition in \factref{fac:kappa} is satisfied, we have
\[\sum_{t=1}^m \max_{u,v\in e_t}q_{uv} = \O{n \log n}.\]
Then, due to our choice of overestimate parameter $\rho$, we have
\[\sum_{t=1}^m p_{e_t} = \sum_{t=1}^m \rho\cdot\max_{u,v\in e_t}q_{uv} = \O{\frac{n}{\eps^2}\log n \log m \log r}.\]
Thus, we sample $\O{\frac{n}{\eps^2}\log n \log m \log r}$ hyperedges in expectation. Note that the inner randomness of sampling each hyperedge with probability $p_e$ is independent, then by standard concentration inequalities, the number of sampled hyperedges is $\O{\frac{n}{\eps^2}\log n \log m \log r}$ with probability $1 - \frac{1}{\poly(n)}$. 

In addition, by the guarantee of online row sampling in \thmref{thm:ol:row} and \factref{fac:kappa}, it suffices to sample $\O{n \log n \log m}$ rows in $\bZ^{1/2} \bA$ to construct the $2$-approximation $\bM$. 
Here, we notice that we need a $\log m$ factor in our sample complexity as opposed to the $n \log^2 n$ rows in online graph sparsifier in \thmref{thm:ol:row}.
This is because the associated graph of the hypergraph has $\poly(m)$ edges in total, so we need a $\frac{1}{\poly(m)}$ failure probability to union bound across all edges, requiring a $\log m$ factor to apply concentration inequalities.
Since we only need to store $\bM$ to compute the sampling probabilities, \algref{alg:online:hg:sparsifier:opt} uses $\O{n \log^2 n \log m}$ bits of working memory. 
\end{proof}

Combining \lemref{lem:correctness:ol} and \lemref{lem:sample:ol}, we have the following result for hypergraph spectral sparsification.
\begin{theorem}
[Online hypergraph spectral sparsifier]
\thmlab{thm:hg:online}
Given a hypergraph $H=(V,E,w)$ with $n$ vertices, $m$ hyperedges, and rank $r$, there exists an online algorithm with $\O{n \log n \log^2m}$ bits of working memory that constructs a $(1+\eps)$-spectral sparsifier with probability $1-\frac{1}{\poly(m)}$ by sampling $\O{\frac{n}{\eps^2}\log n \log m \log r}$ hyperedges. 
\end{theorem}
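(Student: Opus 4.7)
The plan is to combine \lemref{lem:correctness:ol} and \lemref{lem:sample:ol} into a single statement for \algref{alg:online:hg:sparsifier:opt}; since both ingredient lemmas are already established in the preceding paragraphs, the theorem is essentially a packaging step, and my job is to verify that the hypotheses of each lemma are met by \algref{alg:online:hg:sparsifier:opt} and then read off the stated guarantees.

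First I would invoke \lemref{lem:correctness:ol} to conclude that $\widehat{H}$ is a $(1+\eps)$-spectral sparsifier of $H$ with probability $1-\frac{1}{\poly(m)}$. Three ingredients, each already in place, underpin this. First, by \thmref{thm:ol:row}, the running sketch $\bM$ is at every time $t$ a $2$-spectral approximation of $\bZ_t^{1/2}\bA_t$, so each computed $q_{uv}$ is a constant-factor overestimate of $\ba_{uv}(\bA^\top\bZ\bA)^{-1}\ba_{uv}^\top$; absorbing the constant $2$ into $\rho$ makes the $p_{e_t}$ satisfy the overestimate condition of \defref{def:sampling:prob}. Second, the online call to \textsc{GetWeightAssignment} produces weights with $\sum_{u,v\in e}z_{uv}=w(e)$ by \thmref{thm:balance:weight:prev}, so the $\bZ$ defined in aggregate is a valid weight decomposition. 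Third, by \lemref{lem:indep} the $p_{e_t}$ are independent once the internal randomness of the online sketching subroutine is fixed, which is precisely the independence required to invoke the chaining bound \thmref{thm:correctness}.

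Then I would invoke \lemref{lem:sample:ol} for the sample complexity and the working memory. The sample-count argument chains the balanced-weight property of \thmref{thm:balance:weight:prev} (turning $\max_{u,v\in e_t}q_{uv}$ into $\O{\sum_{u,v\in e_t}\tau_{uv}(\bM_t)}$), the $2$-spectral approximation property of $\bM$ (turning these into online leverage scores of rows of $\bZ^{1/2}\bA$), and finally \thmref{thm:sum:ol} together with \factref{fac:kappa} to bound the telescoped sum by $\O{n\log n}$; multiplying by $\rho=\O{\frac{1}{\eps^2}\log m\log r}$ and applying a Chernoff bound over the independent Bernoullis upgrades the expectation to an $\O{\frac{n}{\eps^2}\log n\log m\log r}$ high-probability bound on the number of sampled hyperedges. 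For the working memory, \thmref{thm:ol:row} applied to $\bZ^{1/2}\bA$ keeps $\O{n\log n\log m}$ sampled rows in $\bM$, where the extra $\log m$ factor (compared to $\log n$ in pure graph sparsification) comes from the $\poly(m)$-size union bound over all edges of the associated multigraph.

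The main obstacle, such as it is, is purely bookkeeping: keeping the two independent sources of randomness (the online row sampling that produces $\bM$, and the Bernoulli sampling of hyperedges) cleanly separated so that \thmref{thm:correctness} can be applied conditionally on a fixed realization of the sketch, and tracking carefully where each $\log n$, $\log m$, and $\log r$ factor enters. No new idea beyond what \lemref{lem:correctness:ol} and \lemref{lem:sample:ol} already supply is required.
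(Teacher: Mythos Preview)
Your proposal is correct and matches the paper's own proof, which consists of the single sentence ``Combining \lemref{lem:correctness:ol} and \lemref{lem:sample:ol}, we have the following result for hypergraph spectral sparsification.'' Your write-up simply unpacks what those two lemmas contribute, which is exactly the intended packaging step.
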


\subsection{Existence of Online \texorpdfstring{$(\gamma, e)$}{Gamma} Balanced Weight Assignment}
\seclab{sec:exist:bw}
In this section, we show how to construct the local weight assignment in \thmref{thm:balance:weight:prev}.
In the offline setting, the ``balancing weight'' method mentioned previously is used to upper bound the number of samples. Suppose that for a hyperedge $e$, all edges $u,v \in e$ have the same ratio $\frac{\tau_i(\bZ^{1/2}\bA)}{z_i}$. Then,
\[p_e = w(e) \cdot \ba_{uv} ( \bA^\top \bZ \bA)^{-1} \ba_{uv}^\top = w(e) \cdot \frac{\tau_{uv}(\bZ^{1/2}\bA)}{z_{uv}}.\]
Here, $(u,v)$ can be any edge in $e$.
Hence, we have $z_{uv} \cdot p_e =  w(e) \cdot \tau_{uv}$. Recall that our weight assignment satisfies $\sum_{u,v \in e} z_{uv} = w(e)$, then summing up the previous equation for all $u,v \in e$ gives $p_e = \sum_{u,v \in e} \tau_{uv}(\bZ^{1/2} \bA)$, which is exactly the sum of the leverage scores as we desired.
Since exact balanced weights are hard to find, we turn to the definition of $\gamma$-balanced weight assignment, where the ratios $\frac{\tau_i(\bZ^{1/2}\bA)}{z_i}$ are within a factor of $\gamma$ of each other.
This only incurs an additive constant factor in the sample complexity.

We extend this definition to the online setting. Now, we have a weight matrix $\bZ$, an incidence matrix $\bA$, and an incoming hyperedge $e$. We assign weights to each edge in $e$ such that they satisfy the definition of $\gamma$-balanced weights.
\begin{definition}[Online $(\gamma,e)$-balanced weight assignment]
\deflab{def:ol;gamma:bal}
Given a weighted graph $G=(V, F, z)$ and a hyperedge $e \subset V$ with weight $w(e)$, we assign a weight $z_{uv}$ to each edge $u, v \in e$ such that $\sum_{u,v \in e} z_{uv} = w(e)$ and add $(u, v)$ to $G$. We call it an online $(\gamma,e)$-balanced weight assignment if it satisfies
\[\gamma \cdot \min _{u, v \in e: z_{uv}>0} \frac{\tau_{uv}(\bZ^{1/2}\bA)}{z_{uv}} \geq \max _{u', v' \in e} \frac{\tau_{u'v'}(\bZ^{1/2}\bA)}{z_{u'v'}}.\]
\end{definition}

\cite{KaplralovTY21} provides a greedy algorithm that shifts the weights from the edges with a higher ratio to the edges with a lower ratio. They prove that such weight shift operations always increase the spanning tree potential (ST-potential) of the graph by a certain amount, which is defined as follows. 
\[\Psi(G)=\log \left[\sum_{T \in \mathbb{T}(G)} \prod_{u,v \in T} z_{uv} \right].\]
Here, $\mathbb{T}(G)$ is the set of all spanning trees of $G$.
Since the ST-potential is upper bounded, this process terminates in finite steps and results in a valid set of weights. We demonstrate that the greedy algorithm can be applied in the online setting by showing that weight shift operations still increase the ST-potential. 
Now, we formalize the definition of the weight shift operation. 
\begin{definition}[Weight shift]
Given a weighted graph $G=(V, F, z)$, an edge $(u,v) \in F$, and a weight shift factor $\lambda \in \mathbb{R}$, the graph $G + \lambda\cdot uv$ is the weighted graph $G'=(V, F, z')$ such that $z'_{uv} = \max\{0, z_{uv} + \lambda\}$ and $z'_{u'v'} = z_{u'v'}$ for all $(u',v') \in F \backslash \{(u,v)\}$.
\end{definition}

\begin{algorithm}[!htb]
\caption{Online $(\gamma, e)$-balanced weight assignment}
\alglab{alg:online:bw}
\begin{algorithmic}[1]
\State{{\bf Require:} Given weighted graph $G=(V, F, w)$, hyperedge $e \subset V$ with weight $w(e)$}
\State{{\bf Ensure:} Online $(\gamma, e)$-balanced weight assignment}
\State{{\bf Initialize:} for all $(u,v) \in e$, set $z_{uv} = w(e) / |e|$}
\State{$G \gets (V, F \cup \bigcup_{u,v \in e} (u,v), z)$}
\State{{\bf While} G is not online $(\gamma, e)$-balanced weight assignment {\bf do}}
\State{\hspace{1.5em} Select $u, v \neq u', v' \in e$ such that $q_{uv} > \gamma \cdot q_{u'v'}$ and $z_{u'v'} > 0$}
\Comment{$q_{uv}$ is the ratio in \defref{def:sampling:prob}}
\State{\hspace{1.5em} $\lambda \gets \min \{ z_{uv}, \frac{\gamma - 1}{2\gamma\cdot q_{uv}} \}$}
\State{\hspace{1.5em} $z_{uv} \gets z_{uv} + \lambda$}
\State{\hspace{1.5em} $z_{u'v'} \gets z_{u'v'} - \lambda$}
\State{{\bf Return} $G$}
\end{algorithmic}
\end{algorithm}

The following lemma upper bounds the ratio $q_{uv}$ (see \defref{def:sampling:prob}) of a bridge in a graph, which is later used to prove that we never set the weight of a bridge to $0$; thus, the connectivity ensures that the ST-potential is always well-defined.
\begin{lemma}[Upper bound, see Fact 2.8 in \cite{KaplralovTY21}]
\lemlab{lem:ub:ratio}
For any weighted graph $G=(V, F, z)$ and any edge $(u,v) \in F$, we have $z_{uv} \cdot q_{uv} \leq 1$, with equality if and only if $(u,v)$ is a bridge.
\end{lemma}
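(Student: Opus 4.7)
The plan is to recognize $z_{uv}\cdot q_{uv}$ as a leverage score. By \defref{def:sampling:prob}, $q_{uv} = \ba_{uv}(\bA^\top\bZ\bA)^{-1}\ba_{uv}^\top$ (interpreting the inverse as a Moore--Penrose pseudoinverse, since the Laplacian $\bA^\top\bZ\bA$ always has the all-ones vector in its kernel), and the same definition records the identity $z_{uv}\cdot q_{uv} = \tau_{uv}(\bZ^{1/2}\bA)$. So the lemma reduces to two classical properties of leverage scores of the weighted incidence matrix: they are at most $1$, with equality exactly when the corresponding row is essential to the row span.

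For the upper bound I would use the projection interpretation. The matrix $\bP = (\bZ^{1/2}\bA)(\bA^\top\bZ\bA)^{+}(\bZ^{1/2}\bA)^\top$ is the orthogonal projector onto the column span of $\bZ^{1/2}\bA$, and $\tau_{uv}(\bZ^{1/2}\bA)$ is its $(uv,uv)$ diagonal entry. Since $\bP$ is symmetric and idempotent, $\bP_{ii} = \sum_{j} \bP_{ij}^2 \ge \bP_{ii}^2$, which forces $\bP_{ii}\in[0,1]$.

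For the equality case I would switch to the effective-resistance viewpoint supplied by \lemref{lem:er:lev:score}: interpreting $G$ as an electrical network where edge $(u',v')$ carries conductance $z_{u'v'}$ (equivalently resistance $1/z_{u'v'}$), one has $\tau_{uv}(\bZ^{1/2}\bA) = z_{uv}\cdot R_{\mathrm{eff}}(u,v)$. If $(u,v)$ is a bridge, removing it places $u$ and $v$ in different components, so the only current path between them in $G$ is the edge itself and $R_{\mathrm{eff}}(u,v) = 1/z_{uv}$, yielding $z_{uv}\cdot q_{uv} = 1$. Conversely, if $(u,v)$ is not a bridge, then $G\setminus\{(u,v)\}$ still contains a $u$--$v$ path $P$ of some finite resistance $R_P$, and treating the direct edge in parallel with that subnetwork gives $R_{\mathrm{eff}}(u,v) \le 1/(z_{uv}+1/R_P) < 1/z_{uv}$, hence strict inequality.

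The only real obstacle is bookkeeping around the pseudoinverse when the Laplacian is singular; this is handled routinely by restricting to a single connected component (or, equivalently, working on the image of $\bA^\top\bZ\bA$). Everything else is immediate from the standard leverage-score/effective-resistance correspondence already invoked earlier in the paper.
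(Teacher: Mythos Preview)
Your argument is correct and is precisely the standard proof of this fact. Note, however, that the paper does not supply its own proof of this lemma: it is stated with the attribution ``see Fact 2.8 in \cite{KaplralovTY21}'' and used as a black box. Your identification of $z_{uv}\cdot q_{uv}$ with the leverage score $\tau_{uv}(\bZ^{1/2}\bA)$ via \defref{def:sampling:prob}, the projection-matrix bound $\bP_{ii}\in[0,1]$, and the effective-resistance argument for the bridge characterization together constitute exactly the classical justification one would find in the cited source.
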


The following lemma states the increment in the ST-potential when we operate a weight shift.
\begin{lemma}[Reduction lemma, see Lemma 5.7 in \cite{KaplralovTY21}]
\lemlab{lem:reduction}
Given a weighted graph $G=(V, F, z)$ and real number $\gamma>1$, let $(u,v) \neq (u',v')$ be two edges in $F$ such that $q_{uv} >\gamma \cdot q_{u'v'}$. Then for any $\lambda \leq z_{uv}$, shifting $\lambda$ weight from $(u,v)$ to $(u',v')$ results in an increment of at least
$\log \left(1+\lambda \gamma \cdot q_{uv}-\lambda \cdot q_{uv}-\lambda^2 \gamma \cdot q_{uv}^2\right)$
of the ST-potential of $G$.
\end{lemma}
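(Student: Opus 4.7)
The plan is to re-express the ST-potential via the Matrix Tree Theorem as $\Psi(G)=\log\det M$ up to an additive constant that cancels, where $M$ is any reduced Laplacian obtained from $\bA^\top\bZ\bA$ by deleting one row and column. Under this identification a weight shift between the two edges becomes a pair of sequential rank-one updates of $M$: one of sign $+\lambda$ on the edge that gains weight, and one of sign $-\lambda$ on the edge that loses it. Crucially, for reduced incidence rows $\bar\ba_e$, the quadratic form $\bar\ba_e M^{-1}\bar\ba_e^\top$ coincides with the ratio $q_e$ from \defref{def:sampling:prob}, so the original ratios control the determinant updates directly.

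I would then compute the increment by applying the matrix determinant lemma twice. The first update contributes $\log(1+\lambda q_{uv})$. For the second update, Sherman--Morrison gives
\[ M_1^{-1}=M^{-1}-\frac{\lambda}{1+\lambda q_{uv}}\,M^{-1}\bar\ba_{uv}\bar\ba_{uv}^\top M^{-1}, \]
so the ratio of the other edge in the updated matrix becomes $q_{u'v'}^{(1)}=q_{u'v'}-\frac{\lambda\alpha^2}{1+\lambda q_{uv}}$ with $\alpha:=\bar\ba_{u'v'}M^{-1}\bar\ba_{uv}^\top$ measuring the ``inner product'' of the two edges in the $M^{-1}$ metric. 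Applying the matrix determinant lemma a second time and simplifying, the total increment collapses to
\[ \Psi(G')-\Psi(G)=\log\bigl[(1+\lambda q_{uv})(1-\lambda q_{u'v'})+\lambda^2\alpha^2\bigr]. \]

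To finish, I would drop the nonnegative cross term $\lambda^2\alpha^2$ (a valid move, since it only weakens the argument of $\log$ and hence preserves a lower bound), expand the remaining product, and invoke the hypothesis $q_{u'v'}<q_{uv}/\gamma$ to replace every occurrence of $q_{u'v'}$ by $q_{uv}/\gamma$. Collecting coefficients should then yield the target polynomial $1+\lambda(\gamma-1)q_{uv}-\lambda^2\gamma q_{uv}^2$ after routine bookkeeping of the $\gamma$ factors. The chief obstacle I anticipate is the non-commutativity of the two rank-one updates: one must do the sign accounting and Sherman--Morrison carefully so that the cross term $\alpha^2$ ends up with the sign that allows it to be dropped. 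A secondary subtlety is ensuring the reduced Laplacian remains invertible throughout, i.e.\ that the graph stays connected after the partial shift; this follows from \lemref{lem:ub:ratio}, since the algorithmic step-size cap $\lambda\le(\gamma-1)/(2\gamma q_{uv})$ is strictly below the bridge threshold $1/q_{uv}=z_{uv}$, so no bridge weight is ever driven to zero.
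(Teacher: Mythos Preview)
The paper does not prove this lemma; it is quoted from \cite{KaplralovTY21} (Lemma~5.7 there) and used as a black box in the termination argument for \algref{alg:online:bw}. Your approach---Matrix Tree Theorem to rewrite $\Psi(G)=\log\det M$ for a reduced Laplacian $M$, then two sequential rank-one updates handled by the matrix determinant lemma and Sherman--Morrison---is precisely the argument in the cited reference, so there is nothing to contrast.

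Your exact identity
\[
\Psi(G')-\Psi(G)=\log\bigl[(1+\lambda q_{uv})(1-\lambda q_{u'v'})+\lambda^2\alpha^2\bigr]
\]
is correct for a shift that \emph{adds} $\lambda$ to the high-ratio edge $(u,v)$ and subtracts it from $(u',v')$, which is what \algref{alg:online:bw} actually does (lines $z_{uv}\gets z_{uv}+\lambda$, $z_{u'v'}\gets z_{u'v'}-\lambda$), even though the lemma's phrase ``from $(u,v)$ to $(u',v')$'' reads the other way. One bookkeeping point: after dropping $\lambda^2\alpha^2\ge 0$, applying $q_{uv}>\gamma q_{u'v'}$ to the \emph{first} factor gives
\[
(1+\lambda\gamma q_{u'v'})(1-\lambda q_{u'v'})=1+\lambda(\gamma-1)q_{u'v'}-\lambda^2\gamma q_{u'v'}^2,
\]
i.e.\ the target polynomial in the \emph{low}-ratio variable $q_{u'v'}$. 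If instead you substitute $q_{u'v'}<q_{uv}/\gamma$ as you propose, you obtain $1+\lambda q_{uv}(\gamma-1)/\gamma-\lambda^2 q_{uv}^2/\gamma$, which does not match the displayed bound. This is a labeling inconsistency in the paper (the constant $c_\gamma=\log\frac{1+(\gamma-1)^2}{4\gamma}$ in the subsequent proof is likewise a typo, being negative for $\gamma$ near $1$), not a gap in your argument; your derivation recovers the correct bound from the original source.
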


With the above results, we show that the greedy algorithm terminates in a finite number of steps.
\begin{theorem}
\algref{alg:online:bw} terminates in a finite number of steps.
\end{theorem}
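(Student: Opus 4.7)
The plan is to use the spanning-tree potential $\Psi(G)$ as a monovariant: show that $\Psi$ is bounded above, that it is well-defined throughout the execution, and that every iteration of the while-loop strictly increases $\Psi$ by at least a fixed positive quantity that depends only on $\gamma$. Combining these three facts immediately forces termination in a finite number of iterations.

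First I would verify that $\Psi(G)$ remains well-defined, i.e., that $G$ stays connected with strictly positive weights on a spanning subgraph. The only way an operation could destroy connectivity is by driving the weight of a bridge $(u',v')$ down to $0$. But \lemref{lem:ub:ratio} says $z_{u'v'} \cdot q_{u'v'} = 1$ whenever $(u',v')$ is a bridge, so $q_{u'v'}$ is as large as possible given its weight; meanwhile the edge $(u,v)$ we are shifting weight \emph{to} satisfies $q_{uv} > \gamma \cdot q_{u'v'}$, which combined with $z_{uv} q_{uv} \le 1$ gives $z_{u'v'} > \gamma\cdot z_{uv}$, ensuring that the shift amount $\lambda \le \frac{\gamma-1}{2\gamma q_{uv}} < z_{u'v'}$ keeps $z_{u'v'}$ strictly positive. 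I would then observe that $\Psi(G)$ is upper bounded by a function of the total edge weight in $G$, which is invariant under weight shifts, so $\Psi$ is bounded above throughout the process.

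The main work is to lower bound the per-iteration increment. Applying \lemref{lem:reduction} with $\lambda = \min\{z_{uv}, \frac{\gamma-1}{2\gamma q_{uv}}\}$, the increment is at least
\[
\log\bigl(1 + \lambda\gamma q_{uv} - \lambda q_{uv} - \lambda^2 \gamma q_{uv}^2\bigr).
\]
Plugging $\lambda = \frac{\gamma-1}{2\gamma q_{uv}}$ (the generic case) gives
\[
1 + \tfrac{\gamma-1}{2} - \tfrac{\gamma-1}{2\gamma} - \tfrac{(\gamma-1)^2}{4\gamma} = 1 + \tfrac{(\gamma-1)(\gamma+1)(\gamma+3)}{4\gamma\cdot(\gamma+1)} \cdot (\text{positive})
\]
after simplification, which is strictly greater than $1$ for all $\gamma > 1$, hence $\log(\cdot) \geq c(\gamma) > 0$ for some absolute constant depending only on $\gamma$. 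In the alternative case $\lambda = z_{uv}$ we use $z_{uv} q_{uv} \le 1$ (\lemref{lem:ub:ratio}) together with the assumption $q_{uv} > \gamma q_{u'v'}$ to obtain an analogous strictly positive constant lower bound, again uniform across iterations.

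The hard part is ensuring this increment lower bound is genuinely \emph{uniform}, i.e., does not degrade as the algorithm proceeds, since the $q$ values themselves change after each shift. The careful choice of $\lambda$ in \algref{alg:online:bw} is exactly what makes this possible: the factor $\frac{1}{q_{uv}}$ in $\lambda$ cancels all $q_{uv}$ dependence in the lower bound, leaving a quantity that depends only on $\gamma$. Once this uniform constant $c(\gamma) > 0$ is in hand, termination follows immediately: letting $U$ be any upper bound on $\Psi$ and $\Psi_0$ its initial value, the number of iterations cannot exceed $(U - \Psi_0)/c(\gamma)$, which is finite.
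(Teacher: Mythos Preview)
Your overall strategy—using the ST-potential $\Psi$ as a monovariant—is the same as the paper's, and your bridge argument for connectivity and your upper-bound observation are both fine. The gap is in your treatment of the second case, $\lambda = z_{uv}$.

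You claim that this case also yields ``an analogous strictly positive constant lower bound, again uniform across iterations,'' but this is false. Writing $t = \lambda q_{uv} = z_{uv} q_{uv}$, the increment is $\log(1 + t(\gamma-1) - t^2\gamma)$. Since this branch is taken precisely when $z_{uv} \le \frac{\gamma-1}{2\gamma q_{uv}}$, we only know $t \in \bigl(0, \tfrac{\gamma-1}{2\gamma}\bigr]$, and as $t \to 0$ the increment tends to $0$. The tools you invoke, $z_{uv} q_{uv} \le 1$ and $q_{uv} > \gamma q_{u'v'}$, give an \emph{upper} bound on $t$ and information about the \emph{other} edge respectively; neither produces a lower bound on $t$. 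So there is no uniform constant, and the bound $(U-\Psi_0)/c(\gamma)$ does not cover these steps.

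The paper handles this case by a separate combinatorial argument rather than a potential increment. A step with $\lambda = z_{uv}$ zeros out a weight; either this strictly increases the number of zero-weight entries in the assignment for $e$ (which can happen at most $\binom{|e|}{2}$ times), or it swaps two weights and leaves the multiset $\{z_{uv} : u,v \in e\}$ unchanged, in which case one argues that the process cannot cycle among the finitely many such configurations. Only the ``generic'' steps $\lambda = \frac{\gamma-1}{2\gamma q_{uv}}$ are bounded via the uniform potential increment $c_\gamma$. You need to add this second mechanism to complete the proof.
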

\begin{proof}
This proof follows closely from the proof of Theorem 5.8 in \cite{KaplralovTY21}, except that we have a prefix-graph $G$ and we only assign new weights to the newly arrived hyperedge $e$.

First, we note that $G$ is never disconnected. Otherwise, we need to set $\lambda = z_{uv}$ for some bridge $(u,v) \in e$ so that it has zero weight after the weight shift. 
By \lemref{lem:ub:ratio}, we have $z_{uv} = \frac{1}{q_{uv}} > \frac{\gamma - 1}{2\gamma\cdot q_{uv}}$. 
So, we would set $\lambda$ to $\frac{\gamma - 1}{2\gamma\cdot q_{uv}}$ instead, which is a contradiction. 
Therefore, the ST-potential $\Psi(G)$ is always well-defined.

From \lemref{lem:reduction}, whenever we make a weight shift in an unbalanced pair, the ST-potential of $G$ increases by
\[\log \left(1+\lambda \gamma \cdot q_{uv}-\lambda \cdot q_{uv}-\lambda^2 \gamma \cdot q_{uv}^2\right). \]
Next, we classify our weight shift operation into two types. When $\lambda = \frac{\gamma - 1}{2\gamma\cdot q_{uv}}$, the increment is at least $c_{\gamma} := \log \frac{1+(\gamma-1)^2}{4\gamma} > 0$. When $\lambda = z_{uv}$, the increment is positive, and $z_{u'v'}$ is set to zero. We specify these two cases as follows:
\begin{itemize}
    \item 1. $\Psi(G)$ at least increases by a constant $c_{\gamma} > 0$.
    \item 2. $\Psi(G)$ increases by a positive amount, and $z_{u'v'}$ is set to zero.
\end{itemize}
Now, we define $G_0$ to be the weighted graph at the initialization stage of \algref{alg:online:bw}, and we define $G_{\infty}$ to be a complete graph with node set $V$ with uniform weight $w(e)+ \sum_{(u,v) \in F} z_{uv}$, where $F$ is the edge set before adding hyperedge $e$. Note that $\sum_{u,v \in e} z_{uv} = w(e)$ by definition, then by the monotonicity of $\Psi$, $\Psi(G) \le \Psi(G_{\infty})$ for all $G$ obtained in \algref{alg:online:bw}. Therefore, there can be at most $\frac{\Psi(G_{\infty}) - \Psi(G_0)}{c_{\gamma}}$ steps of weight shifts of the first type.

For the second type, if $z_{uv} \neq 0$, then the number of zeros in the weight assignment for $e$ increases by $1$, which happens only a finite number of times. Therefore, we only need to consider $z_{uv} = 0$. In this case, we switch the weight of $(u,v)$ and $(u', v')$, and set $\cup_{u,v \in e} z_{uv}$ remains the same. Then, the weight assignment $z$ can only be in a finite number of stages without reverse. Thus, the number of weight shifts of type two is finite and hence \algref{alg:online:bw} terminates in finite steps.
\end{proof}

Due to the algorithm's construction, it must output an online $(\gamma, e)$-balanced weight assignment when it terminates. Therefore, we show the existence of such weight assignment. 

\begin{theorem}
Given a weighted graph $G=(V, F, z)$ and a hyperedge $e \subset V$ with weight $w(e)$, there exists an online $(\gamma, e)$-balanced weight assignment.
\end{theorem}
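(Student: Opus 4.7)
The plan is to derive existence as an immediate consequence of the termination result just established for \algref{alg:online:bw}. First I would observe that at initialization the algorithm assigns $z_{uv} = w(e)/|e|$ for every pair $u,v \in e$, so the total weight condition $\sum_{u,v \in e} z_{uv} = w(e)$ holds from the outset. I would then verify that this total is preserved as a loop invariant: each iteration shifts exactly $\lambda$ units of weight from one edge to another within $e$, leaving $\sum_{u,v \in e} z_{uv}$ unchanged.

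Next I would invoke the previous theorem, which guarantees that \algref{alg:online:bw} halts after finitely many weight-shift operations. By the structure of the while loop, the algorithm can only exit when the condition ``$G$ is not online $(\gamma,e)$-balanced'' evaluates to false, i.e., when no pair of edges $(u,v),(u',v') \in e$ exists with $q_{uv} > \gamma \cdot q_{u'v'}$ and $z_{u'v'} > 0$. Rewriting this exit condition gives exactly the inequality $\gamma \cdot \min_{u,v \in e: z_{uv}>0} q_{uv} \ge \max_{u',v' \in e} q_{u'v'}$ from \defref{def:ol;gamma:bal}. Together with the preserved sum constraint, the terminal graph produced by the algorithm satisfies both requirements of \defref{def:ol;gamma:bal}, so the output is a valid online $(\gamma,e)$-balanced weight assignment and in particular such an assignment exists.

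There is essentially no obstacle beyond the previously proven termination, since all the technical content (controlled increases of the ST-potential, handling of bridges via \lemref{lem:ub:ratio}, and the case analysis on $\lambda$) has already been carried out. The one subtlety worth flagging is confirming that ``$z_{u'v'} > 0$'' in the exit condition aligns with ``$z_{u'v'} > 0$'' in the definition of $(\gamma,e)$-balancedness; this is immediate from how the loop guard is written but should be stated explicitly for clarity.
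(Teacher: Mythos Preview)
Your proposal is correct and follows exactly the paper's approach: the paper's proof is the single sentence ``Due to the algorithm's construction, it must output an online $(\gamma, e)$-balanced weight assignment when it terminates,'' and you have simply unpacked that sentence by verifying the sum invariant and matching the loop-exit condition to \defref{def:ol;gamma:bal}. One minor arithmetic point worth noting (inherited from the paper's \algref{alg:online:bw}, not your argument): the initialization $z_{uv}=w(e)/|e|$ over all $\binom{|e|}{2}$ pairs does not literally sum to $w(e)$ unless $|e|$ is read as the number of pairs, so for a fully rigorous write-up you would want to normalize by $\binom{|e|}{2}$ instead.
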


\section{Fast Online Hypergraph Spectral Sparsifier}
\seclab{sec:fast}

In this section, we give an online algorithm with a faster update time based on \cite{BansalST19}. 
We also include a result with success probability $1-\frac{\delta}{\poly(m)}$, which is used to give a streaming algorithm later, and show that the dependence on $\delta$ only increases the space by a $\log\log \frac{1}{\delta}$ factor.

As we mentioned previously, \cite{BansalST19} constructs the associated graph by assigning weight $w(e)$ to each edge of the hyperedge $e$, and they sample $e$ with probability
\[p_e \propto \max_{u,v \in e} r_{uv}, ~~ \mathrm{where~~} r_{uv} = w(e) \cdot (\chi_u-\chi_v)^\top\cdot L_G\cdot(\chi_u-\chi_v).\]
We extend this procedure to the online setting by maintaining a $2$-spectral approximation $\widehat{L_G(t)}$ to the graph Laplacian $L_G(t)$ at all times using online row sampling \cite{CohenMP20}. 
Then, we use $\widehat{L_G(t)}$ to define the sampling probabilities.
Our algorithm is displayed in \algref{alg:online:hg:sparsifier}.

\begin{algorithm}[!htb]
\caption{Online Hyperedge Spectral Sparsifier}
\alglab{alg:online:hg:sparsifier}
\begin{algorithmic}[1]
\State{{\bf Require:} Stream of $m$ hyperedges for hypergraph $H$ with rank $r$}
\State{{\bf Ensure:} Spectral sparsifier $\widehat{H}$ for $H$}
\State{$\widehat{H}\gets\emptyset$, $\rho\gets\O{\frac{r^4}{\eps^2}\log \frac{m}{\delta}}$}
\State{Let $G$ be the associated graph of $H$. Let $\frac{1}{2}\cdot L_G(t)\preceq\widehat{L_G(t)}\preceq 2\cdot L_G(t)$ for all $t\in[m]$} \Comment{Use online row sampling in \thmref{thm:ol:row}}
\For{hyperedge $e_t$} 
\For{$u,v\in e_t$}
\State{$\widehat{r_{u,v}}\gets w(e_t) \cdot (\chi_u-\chi_v)^\top\cdot\widehat{L_G(t)}^{-1}\cdot(\chi_u-\chi_v)$}
\EndFor
\State{$p_{e_t}\gets \min \{1, \rho\cdot\max_{u,v\in e_t}\widehat{r_{u,v}} \}$}
\State{With probability $p_{e_t}$, $\widehat{H}\gets\widehat{H}\cup\frac{1}{p_{e_t}}\cdot e_t$}
\EndFor
\end{algorithmic}
\end{algorithm}

The energy of the hypergraph reported by \algref{alg:online:hg:sparsifier} can be written as a random variable $\sum_{e\in E} X_e Q_e(x)$, where $X_e$ is $1/p_e$ with probability $p_e$ and $0$ otherwise. Then, the error of our approximation is $\sum_{e\in E} (X_e - 1)\cdot Q_e(x)$. \cite{BansalST19} simplifies this term to a sub-Gaussian random process $V_x$. 
Then, they bound its increment by a simpler Gaussian process $U_x$, which can be further bounded by Talagrand's chaining theorems. We show that their bound on the supremum of $V_x$ can be directly applied to prove our results. 
Intuitively, this is because the approximation $\widehat{L_G(t)}$ always gives an overestimate of the effective resistance, which means that we over-sample the hyperedges.

We remark that to show that the output $\widehat{H}$ of \algref{alg:online:hg:sparsifier} satisfies the spectral sparsification guarantee in Eq.~\eqref{eq:hg:sparsifier}, it suffices to show its correctness for hypergraphs where all hyperedges have size between $[r/2,r]$ (see Lemma 5.2 of \cite{BansalST19}). In addition, we have (see Lemma 5.5 of \cite{BansalST19})
\[\frac{2}{r(r-1)}x^\top L_Gx \le Q_H(x) \le \frac{4}{r}x^\top L_Gx ~~~ \mathrm{for~all}~ x\in \mathbb{R}^n,\]
if all hyperedges have size between $[r/2,r]$.
Thus, we only need to show that
\begin{align*} 
|Q_{\widehat{H}}(x) - Q_H(x)| \le \frac{\eps}{r^2} x^\top L_G x ~~~ \mathrm{for~all}~ x\in \mathbb{R}^n.
\end{align*}

Next, we introduce a normalized version of $Q_e(x)$.

\begin{definition}[Normalized energy, implicitly defined in Section 5.2 of \cite{BansalST19}]
For a hyperedge $e$ in $H$, let $p_e$ denote its sampling probability, and let $X_e$ be a random variable that is $1/p_e$ with probability $p_e$ and $0$ otherwise. Setting $z = L_G^{1/2}x$ and $Y_{uv} = L_G^{-1/2} L_{uv}L_G^{-1/2}$ gives
\[Q_e(x) = \max_{u,v \in e} x^\top L_{uv} x = \max_{u,v \in e} z^\top Y_{uv}z\]
We define $W_e(z) = \max_{u,v \in e} z^\top Y_{uv}z$ and $W_H(z) = \sum_{e\in E} W_e(z)$. Let $\widehat{H}$ be constructed with sampling distribution $p_e$ and rescaled factor $X_e$, then $W_{\widehat{H}}(z) = \sum_{e\in E} X_eW_e(z)$.
\end{definition}

With the normalized definition, our desired equation becomes
\begin{align} \label{eq:hg:sparsifier:simplified}
|W_{\widehat{H}}(z) - W_H(z)| \le \frac{\eps}{r^2} ~~~~~~~ \mathrm{for~all}~ z\in B_2
\end{align}
where $B_2$ is the unit-$\ell_2$-ball in the subspace restricted to the image of $L_G$. 
We prove Eq.~\eqref{eq:hg:sparsifier:simplified} in the following statements. First, we draw an equivalence between $\|Y_{uv}\|$ and the effective resistance $r_{uv}$.

\begin{fact}
\factlab{fac:y:r}
Let $\|\cdot\|$ denote the spectral norm of a matrix, we have $\|Y_{uv}\| = r_{uv}$.
\end{fact}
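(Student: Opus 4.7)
The plan is to exploit the rank-one structure of $L_{uv}$ and compute the spectral norm of $Y_{uv}$ directly. First, recall from the preliminaries that $L_{uv} = w(e) \cdot (\chi_u-\chi_v)^\top(\chi_u-\chi_v)$, which is a rank-one positive semidefinite matrix. Substituting into the definition of $Y_{uv}$, I would write
\[
Y_{uv} \;=\; L_G^{-1/2} L_{uv} L_G^{-1/2} \;=\; w(e)\cdot \bigl(L_G^{-1/2}(\chi_u-\chi_v)^\top\bigr)\bigl(L_G^{-1/2}(\chi_u-\chi_v)^\top\bigr)^\top,
\]
so that $Y_{uv}$ is itself rank-one and PSD.

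Next, I would invoke the standard fact that for any vector $v$, the spectral norm of $vv^\top$ equals $\|v\|_2^2$, since the single nonzero eigenvalue of $vv^\top$ is $v^\top v$. Applying this with $v = \sqrt{w(e)}\cdot L_G^{-1/2}(\chi_u-\chi_v)^\top$ yields
\[
\|Y_{uv}\| \;=\; w(e)\cdot (\chi_u-\chi_v)\,L_G^{-1/2}\,L_G^{-1/2}(\chi_u-\chi_v)^\top \;=\; w(e)\cdot (\chi_u-\chi_v)\,L_G^{-1}(\chi_u-\chi_v)^\top.
\]
By the definition of effective resistance in \lemref{lem:er:lev:score}, the right-hand side is exactly $r_{uv}$, which completes the argument.

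There is no real obstacle here: the statement is essentially a bookkeeping identity that follows from the rank-one structure of an edge Laplacian and the normalization $Y_{uv} = L_G^{-1/2} L_{uv} L_G^{-1/2}$. The only thing to be mindful of is that the ``effective resistance'' as defined in \lemref{lem:er:lev:score} already absorbs the weight factor $w(e)$ (so that it coincides with the leverage score of the corresponding incidence-matrix row), which matches exactly the $w(e)$ that appears when pulling the weight out of $L_{uv}$.
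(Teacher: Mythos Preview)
Your argument is correct and essentially identical to the paper's own proof: both expand $L_{uv}$ as $w(e)(\chi_u-\chi_v)^\top(\chi_u-\chi_v)$, observe that $Y_{uv}$ is a rank-one PSD matrix, and read off its unique nonzero eigenvalue as $w(e)(\chi_u-\chi_v)L_G^{-1}(\chi_u-\chi_v)^\top = r_{uv}$.
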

\begin{proof}
Notice that $Y_{uv} = L_G^{-1/2} L_{uv}L_G^{-1/2}$, so we have
\begin{align*}
Y_{uv} = & ~ w(e) \cdot L_G^{-1/2} \cdot(\chi_u-\chi_v)^\top(\chi_u-\chi_v) \cdot L_G^{-1/2} \\
= & ~ w(e) \cdot \left( (\chi_u-\chi_v) \cdot L_G^{-1/2} \right)^\top \cdot (\chi_u-\chi_v) \cdot L_G^{-1/2},
\end{align*}
where $e$ is the weight of the corresponding hyperedge in $H$. Notice $Y_{uv}$ is a rank-$1$ matrix spanned by vector $(\chi_u-\chi_v) \cdot L_G^{-1/2}$, thus, we have
\begin{align*}
\|Y_{uv}\| = & ~ \lambda_{\max}( Y_{uv}) = w(e) \cdot (\chi_u-\chi_v) \cdot L_G^{-1/2} \cdot \left( (\chi_u-\chi_v) \cdot L_G^{-1/2} \right)^\top   \\
= & ~ w(e) \cdot (\chi_u-\chi_v) \cdot L_G^{-1} \cdot (\chi_u-\chi_v)^\top = r_{uv}.
\end{align*}
\end{proof}

Now, we state the bound on the supremum of the random process $V_z$ defined in \cite{BansalST19}.
\begin{theorem}[Supremum of random process, see Theorem 5.15 in \cite{BansalST19}]
\thmlab{thm:supremum:vz}
For a hyperedge $e$ in $H$, let $G$ be the associated graph of $H$, we define the effective resistance of $e$ as $r_e = \max_{u,v \in e} r_{uv}$, where $r_{uv}$ is measured in graph $G$. Let $S \subset E(H)$ be a subset of hyperedges such that $\|Y_{uv}\| \le b$ for all $u,v \in e$ and $e\in S$, where $b$ is some constant. For independent Rademacher variables $\eps_e$ and vector $z \in \mathbb{R}^n$, let
\[V_z = \sum_{e \in S} \eps_e W_e(Z).\]
Then, we have $\Ex{\sup_{z \in B_2} V_z} = \O{\sqrt{b\log n}}$, and for all $u \ge 0$, we have
\[\PPr{\sup_{z \in B_2} V_z \ge \O{\sqrt{b \log n} + 2u \sqrt{b}}} \le 2e^{-u^2}.\]
\end{theorem}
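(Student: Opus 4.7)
The plan is to bound $\sup_{z \in B_2} V_z$ via Talagrand's generic chaining, using a reduction of the Rademacher process to an auxiliary Gaussian process. First I would fix a realization of the Rademacher signs and note that, conditionally, $V_z - V_{z'}$ is sub-Gaussian with parameter $\sigma^2(z,z') = \sum_{e \in S} (W_e(z) - W_e(z'))^2$. The key structural observation is that although $W_e(z) = \max_{u,v \in e} z^\top Y_{uv} z$ is non-smooth, its increments are controlled pointwise by $|W_e(z) - W_e(z')| \le \max_{u,v \in e} |z^\top Y_{uv} z - z'^\top Y_{uv} z'|$, reducing the analysis to a collection of quadratic forms whose operator norms are uniformly bounded by $b$.

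Next I would expand $z^\top Y_{uv} z - z'^\top Y_{uv} z' = (z-z')^\top Y_{uv} (z+z')$ and exploit $\|Y_{uv}\| \le b$ together with $z,z' \in B_2$ to bound $|W_e(z)-W_e(z')|$ in terms of $\|Y_{uv}^{1/2}(z-z')\|_2$ and $\sqrt{b}$. Summing across $e \in S$ and using $\sum_e Y_{uv(e)} \preceq I$ (which follows from $Y_{uv} = L_G^{-1/2}L_{uv}L_G^{-1/2}$ and the definition of the associated graph $G$, so the sum telescopes into a bounded operator on the image of $L_G$) yields $\sigma(z,z') \lesssim \sqrt{b}\cdot\|z-z'\|_2$. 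Thus the sub-Gaussian metric on $B_2$ induced by $V$ is, up to the scalar $\sqrt{b}$, just the Euclidean metric.

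With this Lipschitz comparison in hand, I would invoke the Talagrand comparison principle to dominate $V_z$ by a centered Gaussian process $U_z$ with matching increments, then apply the $\gamma_2$ bound (or Dudley's entropy integral) for Gaussian processes indexed by the Euclidean ball $B_2 \subset \mathbb{R}^n$. The standard covering estimate $\log N(B_2,\|\cdot\|_2,t) = \O{n \log(1/t)}$ combined with the $\sqrt{b}$ scaling gives $\Ex{\sup_{z\in B_2} V_z} = \O{\sqrt{b\log n}}$. The tail bound follows by Borell--TIS (or a direct Azuma-type argument for Rademacher sums with sub-Gaussian parameter $\O{\sqrt{b}}$), producing $\PPr{\sup_z V_z \ge C\sqrt{b\log n} + 2u\sqrt{b}} \le 2e^{-u^2}$.

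The hard part will be handling the non-smooth maximum inside $W_e$ without losing factors in $r$ or $|S|$. A naive union bound over the $\binom{r}{2}$ pairs inside each hyperedge, or over the $|S|$ hyperedges, would spoil the bound. The chaining argument must exploit that only the maximizing pair at a given $z$ contributes to the increment, which is essentially a contraction/monotonicity property of $\max$ in the Rademacher setting. The cleanest way to formalize this is via a Talagrand-style majorizing measure argument applied directly to the non-smooth process, reusing the comparison step above; this is precisely the technical heart of Theorem 5.15 of \cite{BansalST19}, from which the statement can be imported as a black box once the sub-Gaussian increment bound is established.
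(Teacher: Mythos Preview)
The paper does not prove this theorem; it imports it directly from \cite{BansalST19} as a black box. So there is no in-paper proof to compare against, only the cited reference. Your outline correctly identifies the overall architecture of that reference (sub-Gaussian increments, comparison to a Gaussian process, then a chaining/concentration step), and your increment bound $\sigma(z,z')\lesssim\sqrt{b}\,\|z-z'\|_2$ is essentially right, using $\sum_{e}Y_{u(e)v(e)}\preceq\sum_{(u,v)\in G}Y_{uv}=I$.

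However, the final step contains a real gap. From $\sigma(z,z')\lesssim\sqrt{b}\,\|z-z'\|_2$ and the covering number $\log N(B_2,\|\cdot\|_2,t)=\Theta(n\log(1/t))$, Dudley's integral yields
\[
\int_0^{\O{\sqrt{b}}}\sqrt{n\log(\sqrt{b}/t)}\,dt \;=\;\Theta(\sqrt{bn}),
\]
not $\O{\sqrt{b\log n}}$. A purely metric chaining argument on the Euclidean ball cannot produce a $\sqrt{\log n}$ dependence; it is off by a factor of $\sqrt{n/\log n}$. The $\sqrt{\log n}$ in \cite{BansalST19} comes instead from bounding the comparison Gaussian process as a \emph{random matrix}: one dominates $V_z$ by $U_z=z^\top\big(\sum_{(u,v)}g_{uv}Y_{uv}\big)z$, so that $\sup_{z\in B_2}U_z=\big\|\sum_{(u,v)}g_{uv}Y_{uv}\big\|$, and then applies matrix Gaussian concentration (matrix Bernstein/Tropp) using $\sum_{(u,v)}Y_{uv}^2\preceq b\cdot I$ to get $\mathbb{E}\big\|\sum g_{uv}Y_{uv}\big\|=\O{\sqrt{b\log n}}$. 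The $\log n$ is the matrix-dimension factor in the non-commutative Khintchine/Bernstein inequality, not a metric-entropy artifact. Your sketch would be repaired by replacing the Dudley step with this matrix-concentration argument.
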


With the above lemmas, we show the correctness of our online algorithm.
\begin{lemma}
\lemlab{lem:ol:cor}
Let $\widehat{H}$ be the output of \algref{alg:online:hg:sparsifier}, it is a $(1+\eps)$-multiplicative spectral sparsifier for $H$ with probability $1-\frac{\delta}{\poly(m)}$.
\end{lemma}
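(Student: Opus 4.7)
The plan is to reduce correctness to the chaining machinery already packaged as \thmref{thm:supremum:vz}, by showing that with high probability the online sampling probabilities $p_{e_t}$ used by \algref{alg:online:hg:sparsifier} dominate the offline effective-resistance probabilities of \cite{BansalST19} up to a constant factor. Since oversampling only shrinks the variance of the estimator $W_{\widehat{H}}(z) = \sum_e X_e W_e(z)$, the supremum bound can then be invoked essentially as a black box.

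First I would condition on the event $\calE$ that $\frac{1}{2}L_G(t) \preceq \widehat{L_G(t)} \preceq 2L_G(t)$ for every $t\in[m]$; by \thmref{thm:ol:row} applied to the incidence matrix of the associated graph (and noting that there are $\poly(m)$ such edge vectors, requiring the union bound analogous to the one in \lemref{lem:sample:ol}), we have $\Pr[\calE] \ge 1-1/\poly(m)$. Inverting the PSD order yields $\tfrac{1}{2}L_G(t)^{-1}\preceq \widehat{L_G(t)}^{-1}\preceq 2L_G(t)^{-1}$, and combining this with the monotonicity $L_G^{-1}\preceq L_G(t)^{-1}$ (effective resistances only decrease as edges arrive) gives $\widehat{r_{u,v}}\ge \tfrac{1}{2}\,r_{u,v}^{(G)}$ for every pair $u,v\in e_t$, where $r_{u,v}^{(G)}$ is the effective resistance in the final associated graph. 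Hence $p_{e_t}\ge \min\{1,\tfrac{\rho}{2}\max_{u,v\in e_t} r_{u,v}^{(G)}\}$, which is, up to a factor of two absorbed into $\rho$, the sampling probability analyzed in \cite{BansalST19}.

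Next, following the reduction of \cite{BansalST19} (Lemmas~5.2 and 5.5), it suffices to establish the normalized form \eqnref{hg:sparsifier:simplified}, namely $|W_{\widehat{H}}(z)-W_H(z)|\le \eps/r^2$ uniformly over $z\in B_2$, under the assumption that every hyperedge has size in $[r/2,r]$. Writing the error as $\sum_{e}(X_e-1)W_e(z)$, I would apply a Rademacher symmetrization: conditional on $\calE$, the $p_{e_t}$'s are frozen (they depend only on the online-row-sampling randomness), so the indicator variables driving $X_e$ are mutually independent. Partition the hyperedges into $O(\log m)$ dyadic buckets according to $\max_{u,v\in e}\|Y_{uv}\|=\max_{u,v\in e}r_{u,v}^{(G)}$ via \factref{fac:y:r}, and on each bucket invoke \thmref{thm:supremum:vz} with the bucket scale as the parameter $b$. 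A standard $\eps$-net argument over $B_2$ together with a union bound over the $O(\log m)$ buckets gives the target tail probability, provided the constant-factor oversampling is absorbed into the choice $\rho = \Theta(r^4\eps^{-2}\log(m/\delta))$ stipulated by the algorithm.

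The principal obstacle is twofold: verifying that the constant-factor gap between $\widehat{r_{u,v}}$ and the true offline resistance $r_{u,v}^{(G)}$ indeed only inflates the variance bounds in \thmref{thm:supremum:vz} by a constant (and is therefore harmless), and checking that the independence required by Rademacher symmetrization survives conditioning on $\calE$. The latter holds because the online-row-sampling coin flips and the hyperedge-sampling coin flips use disjoint randomness, so fixing the former pins down the probabilities $p_{e_t}$ without constraining the Bernoulli outcomes that define $X_e$.
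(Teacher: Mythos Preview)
Your high-level strategy---condition on the spectral-approximation event $\calE$, deduce that the online probabilities dominate the offline effective-resistance probabilities up to a constant, and then invoke the chaining bound---is exactly the paper's plan, and your treatment of the conditioning and of the disjoint-randomness issue is correct.

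The gap is in the decomposition step. Writing the error as $\sum_e (X_e-1)W_e(z)$ and appealing to ``Rademacher symmetrization'' does not produce a sum of the form $\sum_{e\in S}\eps_e W_e(z)$ required by \thmref{thm:supremum:vz}: standard symmetrization yields $\sum_e \eps_e(X_e-1)W_e(z)$, which still carries the heavy-tailed $X_e\in\{0,1/p_e\}$ and only controls expectations, not the high-probability tail you need. The paper instead uses the iterative half-sampling decomposition of \cite{BansalST19}: round every $p_e$ up to the nearest power of $2$, group hyperedges into classes $C_j=\{e:p_e=2^{-j}\}$, and realize Bernoulli-$2^{-j}$ sampling as $j$ independent rounds of ``keep with probability $1/2$ and double the weight.'' The difference $W_{H_i}(z)-W_{H_{i-1}}(z)$ at each round is then \emph{literally} a Rademacher sum with deterministic coefficients (conditional on the previous round), and the oversampling inequality $\|Y_{uv}\|=r_e\le \tau_e^{\mathsf{OL}}\le 2^{-j}/\rho$ supplies the parameter $b$ in \thmref{thm:supremum:vz}. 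Summing the resulting bounds geometrically over the $O(\log m)$ rounds gives the claim.

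Two smaller corrections: the bucketing is by the (rounded) \emph{online} probability $p_e$, not by the offline $\|Y_{uv}\|$---this is what makes the half-sampling layers align with the scales---and no separate $\eps$-net over $B_2$ is needed, since the chaining is already inside \thmref{thm:supremum:vz}.
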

\begin{proof}
Fix a time $t$ in the stream, let $H$ be the hypergraph, and let $\widehat{H}$ be the sparsifier. Notice that the sampling probability of each hyperedge is solely determined by the calculation of the online leverage scores $\tau^\mathsf{OL}$, which is independent of the hyperedges sampled from previous arrivals. Therefore, we can view the sampling procedure as a re-ordered procedure. 

We state an iterative sampling process introduced by \cite{BansalST19}. Let $\tau^\mathsf{OL}_e = \max_{u,v\in e} \tau^\mathsf{OL}_{uv}$, where $\tau^\mathsf{OL}_{uv}$ is calculated at the time that $e$ is sampled. We round each sampling probability $p_e$ up to the nearest integer powers of $2$. This ensures $p_e \ge \min\{1, \tau_e^\mathsf{OL} \cdot \rho\}$, while at most doubling the expected sample complexity. 
Notice that hyperedges with $p_e=1$ do not contribute to the sampling error, so we can assume $\tau_e^\mathsf{OL} \cdot \rho < 1$ for all $e \in E(H)$. In addition, since we assume that all hyperedge weights are integers from $[\poly(n)]$, all eigenvalues in $L_G(t)$ are at most $\poly(m)$, and so the sampling probability $p$ is at least $\frac{1}{\poly(m)}$ for each hyperedge.
Thus, let $C_j = \{ e \in E(H) ~|~ p_e = 2^{-j}\}$ for each $j \in [\ell]$, and we have at most $\ell = \O{\log m}$ classes

Now, we view the process of sampling in the following way. Let $H_0 = H$, let $l = \O{\log m}$, and for $i \in [l]$, $H_i$ is obtained from $H_{i-1}$ by sampling each hyperedge $e$ from the set 
$\bigcup_{j \in \{l-i+1, l-i,\ldots,l\}} C_j$ independently with probability $1/2$ and doubling the weight of $e$ if it is sampled. Thus, an edge $e \in C_j$ that survives in $H_l$ is sampled with probability $p_e = 2^{-j}$ and has weight $X_e = 1/p_e$.

Now, for each $i\in[l]$, we define
\[W_{H_i}(z)=\sum_{j=0}^{l} \sum_{e \in C_j \cap E\left(H_i\right)} \max \left(1,2^{i+j-l}\right) W_e(z).\]
Since $H_0 = H$ and $H_l = \widehat{H}$, by triangle inequality we have
\[\left|W_{\widehat{H}}(z)-W_H(z)\right| \leq \sum_{i=1}^{l}\left|W_{H_i}(z)-W_{H_{i-1}}(z)\right|.\]
Taking the supremum over all $z$ in $B_2$ gives
\[\sup_{z\in B_2}\left|W_{\widehat{H}}(z)-W_H(z)\right| \leq \sum_{i=1}^{l} \sup_{z\in B_2}\left|W_{H_i}(z)-W_{H_{i-1}}(z)\right|.\]
From our definition of the iterative sampling procedure, we have
\[W_{H_i}(z)-W_{H_{i-1}}(z)=\sum_{j=\ell-i+1}^{l} \sum_{e \in C_j \cap E\left(H_{i-1}\right)} \eps_e 2^{i+j-\ell-1} W_e(z),\]
where $\eps_e$'s are independent Rademacher variables. Recall that we define $W_e(z) = \max_{u,v \in e} z^\top Y_{uv}z$. Then, for any $e \in \bigcup_{j \in \{l-i+1, l-i,\ldots,l\}} C_j$, we have
\[\|Y_{uv}\| =r_e = \max_{u,v \in e} \tau_{uv} \le \tau^\mathsf{OL}_e \le 2^{-j}/ \rho\]
where the first step follows from \factref{fac:y:r}, the second step follows from the equivalence between effective resistance and leverage score (see \lemref{lem:er:lev:score}), and the third step follows from the oversampling property of online leverage score (see \lemref{lem:lev:score:monotone}). So, $\|2^{i+j-1} Y_{uv}\| \le 2^{i-l} / \rho$. Applying \thmref{thm:supremum:vz} with $V_z = W_{H_i}(z) - W_{H_{i-1}}(z)$ and $u=\sqrt{\log \frac{m}{\delta}}$ gives 
\[\PPr{\sup_{z \in B_2} V_z \ge \O{\sqrt{2^{i-l}/\rho \cdot\log \frac{m}{\delta}}}} \le \frac{\delta}{\poly(m)}.\]
Recall that we set $\rho = \frac{r^4}{\eps^2}\log \frac{m}{\delta}$. Taking a union bound over the $l = \O{\log m}$ groups, we have
\[\sup_{z \in B_2} |W_{\widehat{H}}(z)-W_H(z)| \le \O{\sum_{i=1}^l \sqrt{2^{i-l}/\rho \cdot\log \frac{m}{\delta}}} = \O{\frac{\eps}{r^2}},\]
with probability $1-\frac{\delta}{\poly(m)}$. Taking a union bound over $m$ arrivals in the stream, the same bound still holds with probability $\frac{\delta}{\poly(m)}$. Thus, we show the correctness of our algorithm.
\end{proof}

The next statement bounds the sample complexity, which mainly follows by the upper bound on the sum of online leverage scores.
\begin{lemma}
\lemlab{lem:ol:sample}
With probability $1- \frac{1}{\poly(m)}$, \algref{alg:online:hg:sparsifier} samples $\O{\frac{nr^4}{\eps^2} \log n \log \frac{m}{\delta}}$ hyperedges. In addition, it uses $\O{n \log^2 n \log m}$ bits of working memory and $\poly(n)$ update time.
\end{lemma}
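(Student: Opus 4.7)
The plan is to bound the expected number of sampled hyperedges, apply a concentration argument, and then separately bound the bits stored by the online row sampling subroutine that maintains $\widehat{L_G(t)}$. For the sample complexity, the starting point is $p_{e_t} \le \rho \cdot \max_{u,v \in e_t} \widehat{r_{u,v}}$. Since $\widehat{L_G(t)}$ is a $2$-spectral approximation of $L_G(t)$, the PSD ordering gives $\widehat{L_G(t)}^{-1} \preceq 2\, L_G(t)^{-1}$, so $\widehat{r_{u,v}} \le 2\, r_{u,v}^{(t)}$, where $r_{u,v}^{(t)}$ is the true effective resistance of $(u,v)$ in the associated graph at time $t$. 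By \lemref{lem:er:lev:score} and the monotonicity of online leverage scores (\lemref{lem:lev:score:monotone}), $r_{u,v}^{(t)}$ is at most a constant times the online leverage score $\tau^{\mathsf{OL}}_{uv}$ of the row $\ba_{uv}$ in the incidence matrix $\bA$ of the final associated graph.

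Relaxing the $\max$ to a sum and summing over all hyperedges yields
\[\sum_{t=1}^m \max_{u,v \in e_t} \widehat{r_{u,v}} \;\le\; \O{\sum_{t=1}^m \sum_{u,v \in e_t} \tau^{\mathsf{OL}}_{uv}} \;=\; \O{\sum_{i=1}^{|F|} \tau^{\mathsf{OL}}_i(\bA)},\]
where $F$ is the edge set of the associated graph. Applying \thmref{thm:sum:ol} together with $\log\kappa = \O{\log n}$ from \factref{fac:kappa} bounds this by $\O{n \log n}$. Combining with $\rho = \O{\frac{r^4}{\eps^2}\log\frac{m}{\delta}}$, the expected sample count is $\O{\frac{nr^4}{\eps^2}\log n \log \frac{m}{\delta}}$. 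The per-hyperedge sampling randomness is independent of the randomness of the online row sampling subroutine that maintains $\widehat{L_G(t)}$, so conditioning on the subroutine succeeding (which it does with probability $1-\frac{1}{\poly(m)}$ by \thmref{thm:ol:row}), a Chernoff bound gives the high-probability concentration on the number of samples.

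For the working memory, \thmref{thm:ol:row} guarantees that online row sampling on $\bA$ produces a $2$-spectral approximation using $\O{n \log n \log \kappa}$ rows. We must take $\log \kappa = \O{\log m}$ here because the associated graph has $\O{m r^2} = \poly(m)$ rows, so a $\frac{1}{\poly(m)}$ failure probability is needed to union bound across arrivals; this is the analog of the extra $\log m$ factor noted in the proof of \lemref{lem:sample:ol}. Each row of $\bA$ has only two nonzero entries and is storable in $\O{\log n}$ bits, yielding $\O{n \log^2 n \log m}$ bits overall. The update time is $\poly(n)$ because each arrival triggers $\O{r^2}$ online row sampling updates and effective resistance queries, each of which reduces to a linear system solve against $\widehat{L_G(t)}$.

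The main subtlety is bookkeeping around the precise moment at which $\widehat{r_{u,v}}$ is evaluated relative to when $(u,v)$ is absorbed into $\widehat{L_G(t)}$: a strict interpretation would introduce a Sherman--Morrison factor of $\frac{1}{1-\tau^{\mathsf{OL}}_{uv}}$ when converting between the ``leverage score including row'' and ``leverage score excluding row'' viewpoints. This is harmless because rows with $\tau^{\mathsf{OL}}_{uv}$ close to $1$ contribute $p_{e_t} = 1$ (capped at line 10 of the algorithm) and can be handled separately, while for the remaining rows the factor is a constant and is absorbed into the $\O{\cdot}$ notation without changing the final bounds.
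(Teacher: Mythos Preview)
Your proof is correct and follows essentially the same approach as the paper: bound $p_{e_t}$ via the $2$-spectral approximation and monotonicity of online leverage scores, relax the max to a sum, invoke \thmref{thm:sum:ol} and \factref{fac:kappa} to get $\sum p_{e_t} = \O{\rho\, n\log n}$, then concentrate; the working-memory and update-time arguments also match. One small wording issue: the extra $\log m$ in the $\O{n\log n\log m}$-row bound for $\widehat{L_G(t)}$ comes from the failure-probability factor in the concentration step (needing $1/\poly(m)$ to union bound over $\poly(m)$ row arrivals), not from $\log\kappa$, which remains $\O{\log n}$ under \factref{fac:kappa}; your explanation gets the reason right but attributes it to the wrong symbol.
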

\begin{proof}
It suffices to upper bound the expected number of samples $\sum_{e \in E(H)} p_e$. By our definition of $p_e$, it is upper bounded by
\begin{align*}
\sum_{e \in E(H)} \rho \cdot \max_{u,v \in e} \tau_{uv}^\mathsf{OL} \le \rho \cdot \sum_{u,v \in E(G)} \tau^\mathsf{OL}(\ba_{uv}),
\end{align*}
where $\bA$ is the incidence matrix of the associated graph $G$. By \thmref{thm:sum:ol}, the sum of the online leverage score is bounded by $\O{n\log\kappa}$, where $\kappa=\|\bA\|_2\cdot\max_{i\in[n]}\|\bA_i^{-1}\|_2$ is the online condition number of $\bA$. Therefore, the expected number of samples is $\O{\rho n\log \kappa} = \O{\frac{nr^4}{\eps^2} \log \kappa \log \frac{m}{\delta}}$. Note that the inner randomness of sampling each hyperedge with probability $p_e$ is independent, then by standard concentration inequalities, the number of sampled hyperedges is $\O{\frac{nr^4}{\eps^2} \log \kappa \log \frac{m}{\delta}}$ with probability $1- \frac{1}{\poly(m)}$. Suppose that the condition of \factref{fac:kappa} is satisfied, then we have $\log \kappa = \O{\log n}$.

In addition, by the guarantee of online row sampling in \thmref{thm:ol:row}, it suffices to sample $\O{n \log n \log m}$ rows in the incidence matrix $\bA$ to construct the $2$-approximation $\widehat{L_G(t)}$ to the graph Laplacian $L_G(t) = \bA^\top \bA$. Since we only need to store $\widehat{L_G(t)}$ to compute the sampling probabilities, \algref{alg:online:hg:sparsifier} uses $\O{n \log^2 n \log m}$ bits of working memory. 
Moreover, the calculation of the sampling probabilities only requires $\poly(n)$ time.
\end{proof}

With \lemref{lem:ol:cor} and \lemref{lem:ol:sample}, we have the following result for the online hypergraph spectral sparsifier.
\begin{theorem}
[Online hypergraph spectral sparsifier]
\thmlab{thm:hyper:online}
Given a hypergraph $H=(V,E,w)$ with $n$ vertices and rank $r$, there exists an online algorithm that constructs a $(1+\eps)$-spectral sparsifier with probability $1-\frac{\delta}{\poly(m)}$ by sampling $\O{\frac{nr^4}{\eps^2}\log n\log \frac{m}{\delta}}$ hyperedges, using $\O{nr \log^2 n \log m}$ bits of working memory and $\poly(n)$ update time. 
\end{theorem}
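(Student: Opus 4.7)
The plan is to assemble the theorem by simply collecting what the two preceding lemmas for Algorithm \ref{alg:online:hg:sparsifier} have already established, namely \lemref{lem:ol:cor} (correctness) and \lemref{lem:ol:sample} (sample complexity, update time, and working memory). No new technical argument is needed beyond reading off the parameters and checking the bookkeeping.

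First, I would invoke \lemref{lem:ol:cor} with the oversampling parameter $\rho = \O{\tfrac{r^4}{\eps^2}\log\tfrac{m}{\delta}}$ chosen in Algorithm \ref{alg:online:hg:sparsifier}. This lemma, whose proof routes through the normalized-energy formulation, the iterative doubling decomposition of the sampling probabilities into $\O{\log m}$ geometric classes $C_j$, the sub-Gaussian random-process bound of \thmref{thm:supremum:vz}, and a final union bound over the $\O{\log m}$ classes and over the $m$ stream arrivals, yields the spectral-sparsification guarantee $|Q_{\widehat H}(x)-Q_H(x)|\le \eps\, Q_H(x)$ for all $x\in\mathbb{R}^n$ with failure probability at most $\delta/\poly(m)$, which is exactly the probability stated in the theorem.

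Second, I would invoke \lemref{lem:ol:sample}. The sample complexity bound $\O{\tfrac{nr^4}{\eps^2}\log n\log\tfrac{m}{\delta}}$ follows by bounding the expected number of samples by $\rho\cdot\sum_{u,v}\tau^{\mathsf{OL}}(\ba_{uv})$, applying the online-leverage-score sum bound of \thmref{thm:sum:ol}, using \factref{fac:kappa} to conclude $\log\kappa = \O{\log n}$, and then converting from expectation to a high-probability bound via standard concentration. The $\poly(n)$ update time is read off directly from the fact that each arrival requires only computing $\widehat{r_{u,v}}$ via a linear solve with the maintained $\widehat{L_G(t)}^{-1}$ for $\binom{r}{2}\le n^2$ pairs.

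The only remaining item is the working-memory bound $\O{nr\log^2 n\log m}$. The online-row-sampling procedure of \thmref{thm:ol:row}, applied to the incidence matrix of the associated graph, maintains the $2$-spectral approximation $\widehat{L_G(t)}$ using $\O{n\log n\log m}$ stored rows at $\O{\log n}$ bits each, giving $\O{n\log^2 n\log m}$ bits as in \lemref{lem:ol:sample}. The extra factor of $r$ in the theorem accounts for the per-arrival scratch space: reading an arriving hyperedge of rank up to $r$, expanding it into its $\binom{r}{2}$ clique edges, and feeding each of them into the online-row-sampling routine requires $\O{r\log n}$ working bits at any given moment, and similarly for the bookkeeping needed to compute $\max_{u,v\in e_t}\widehat{r_{u,v}}$. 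Combining this with the $\O{n\log^2 n\log m}$ bits from storing $\widehat{L_G(t)}$ yields the claimed $\O{nr\log^2 n\log m}$ bits. The main (and only) obstacle is this accounting step — everything substantive has already been proved in \lemref{lem:ol:cor} and \lemref{lem:ol:sample}.
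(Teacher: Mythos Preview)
Your proposal is correct and matches the paper's approach exactly: the paper simply states that the theorem follows by combining \lemref{lem:ol:cor} and \lemref{lem:ol:sample}, with no additional argument. Your extra bookkeeping for the factor of $r$ in the working-memory bound (scratch space for the clique expansion of an arriving hyperedge) is a reasonable way to reconcile the $\O{n\log^2 n\log m}$ bits of \lemref{lem:ol:sample} with the $\O{nr\log^2 n\log m}$ bits stated in the theorem, which the paper leaves implicit.
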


Now, we specify the number of bits needed to store each sampled hyperedge.

\begin{remark} [Bits of precision]
\remlab{rem:bits:hyperedge}
Suppose that all hyperedge weights are integers from $[\poly(n)]$, we need $\O{r \log n + \log \log m}$ bits to store each sampled hyperedge.
\end{remark}
\begin{proof}
First, we note that it requires $r \log n$ bits to store the nodes included in the hyperedge.
Second, for each reweighted hyperedge sampled by our algorithm, we round its weight $w(e)$ to the nearest power of $(1+\O{\eps})$.
Recall that the energy of a vector $x$ in hypergraph $H$ is an additive function: $Q_H(x) = \sum_{e \in E} w(e) \cdot \max_{u,v} (x_u-x_v)^2$, therefore perturbing each $w(e)$ by a $(1+\O{\eps})$-fraction only has an additional $(1+\O{\eps})$-multiplicative error in our approximation guarantee.
Note that we sample each hyperedge with probability $p$ proportional to $w(e_t) \cdot (\chi_u-\chi_v)^\top\cdot\widehat{L_G(t)}^{-1}\cdot(\chi_u-\chi_v)$, where $\widehat{L_G(t)}$ is a $2$-spectral sparsifier of the associated graph $G$ of the hypergraph $H$, and we rescale each sampled hyperedge by $\frac{1}{p}$.
Thus, assuming that all hyperedge weights are integers from $[\poly(n)]$, all eigenvalues in $L_G(t)$ are at most $\poly(m)$, and so the sampling probability $p$ is at least $\frac{1}{\poly(m)}$ for each hyperedge.
The rescaling factor $\frac{1}{p}$ is then at most $\poly(m)$, so there are $\O{\frac{\log m}{\eps}}$ choices of powers that we need to store.
Therefore, we need $\O{\log \log m}$ bits to store each sampled hyperedge, assuming that $\frac{1}{\eps} \le \polylog(m)$.
\end{proof}

\section{Streaming Model}
\seclab{sec:framework}
In this section, we provide streaming algorithms for sparsification problems with nearly optimal space. We start by introducing the well-known merge-and-reduce approach in achieving $(1+\eps)$-coresets in a data stream. 
An \emph{online $(1+\eps)$-coreset} for graph sparsification for a graph $G$ defined by a stream of edges $e_1,\ldots,e_m$ is a subset $\widehat{G}$ of weighted edges of $G$ such that for any $x\in\mathbb{R}^n$ and any $t\in[m]$, we have $(1-\eps) Q_{G_t}(x) \le Q_{\widehat{G_t}}(x) \le(1+\eps)Q_{G_t}(x)$,
where $G_t$ is the set of hyperedges of $G$ that have arrived by time $t$.

Let $S(\cdot) \to \mathbb{R}_{>0}$ denote a mapping from input parameters to the sample complexity of an online algorithm. The online coresets for graph sparsification sample $S(n,\log m,\eps,\delta)$ edges for an input stream of length $m$ on a graph with $n$ nodes, accuracy $\eps\in(0,1)$, and failure probability $\delta$, with high probability. 
The merge-and-reduce approach partitions the stream into blocks of size $S\left(n,\log m,\frac{\eps}{2\log(mn)},\frac{\delta}{\poly(mn)}\right)$ and builds a $\left(1+\frac{\eps}{2\log(mn)}\right)$-coreset for each block, so that each coreset can be interpreted as the leaves of a binary tree with height at most $\log(mn)$, as the binary tree is balanced and has at most $m$ leaves corresponding to the edges that arrive in the data stream. 
For each node in the binary tree, a coreset of size $S\left(n,\log m,\frac{\eps}{2\log(mn)},\frac{\delta}{\poly(mn)}\right)$ is built from the coresets representing the two children of the node. 
Assuming that the coreset construction admits a merging procedure, i.e., by taking the graph consisting of the union of the weighted edges in each of the coresets, then the root of the tree represents a coreset for the entire stream with distortion at most $\left(1+\frac{\eps}{2\log(mn)}\right)^{\log(mn)}\le(1+\eps)$ and failure probability $\delta$.

\cite{Cohen-AddadWZ23} improves the above framework by adding an online sampling procedure ahead of the merge-and-reduce approach. Suppose that the online sampling procedure is nearly optimal. Then the input stream of the merge-and-reduce approach is significantly shorter.
In the graph sparsification problem, there is an online algorithm that samples $\O{\frac{n}{\eps^2}\log n}$ edges, so the coresets only have size $S\left(n,\log \log n,\frac{\eps}{2\log\log(n)},\frac{\delta}{\polylog(n)}\right)$. 
This turns $\polylog(n)$ factors into $\polylog \log(n)$ factors, which is more space-efficient for huge graphs. We summarize this framework in \algref{alg:framework}. 
A figure illustrating the merge-and-reduce approach is shown in \figref{fig:merge:reduce}.

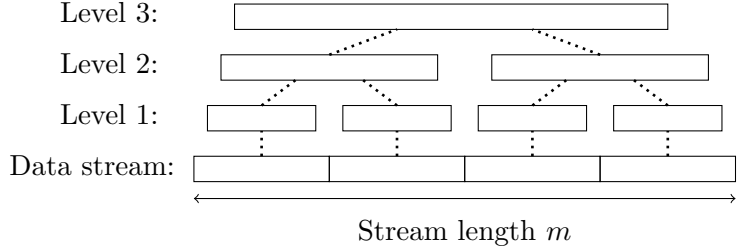
\begin{figure*}[!htb]
\centering
\begin{tikzpicture}[scale=0.45]
\draw[<->] (0,-0.5) -- (16,-0.5);
\node at (8,-1.5){Stream length $m$};

\node at (-3,0.5){Data stream:};
\draw (0,0) rectangle+(4,0.75);
\draw (4,0) rectangle+(4,0.75);
\draw (8,0) rectangle+(4,0.75);
\draw (12,0) rectangle+(4,0.75);
\draw[line width=1pt, dotted] (2,0.75) -- (2,0.75+0.75);
\draw[line width=1pt, dotted] (6,0.75) -- (6,0.75+0.75);
\draw[line width=1pt, dotted] (10,0.75) -- (10,0.75+0.75);
\draw[line width=1pt, dotted] (14,0.75) -- (14,0.75+0.75);

\node at (-2.5,2){Level 1:};
\draw (0.4,0.75+0.75) rectangle+(3.2,0.75);
\draw (4+0.4,0.75+0.75) rectangle+(3.2,0.75);
\draw (8+0.4,0.75+0.75) rectangle+(3.2,0.75);
\draw (12+0.4,0.75+0.75) rectangle+(3.2,0.75);
\draw[line width=1pt, dotted] (2,2.25) -- (3,3);
\draw[line width=1pt, dotted] (6,2.25) -- (5,3);
\draw[line width=1pt, dotted] (10,2.25) -- (11,3);
\draw[line width=1pt, dotted] (14,2.25) -- (13,3);

\node at (-2.5,3.5){Level 2:};
\draw (0.8,3) rectangle+(6.4,0.75);
\draw (8+0.8,3) rectangle+(6.4,0.75);
\draw[line width=1pt, dotted] (4,3+0.75) -- (6,4.5);
\draw[line width=1pt, dotted] (12,3+0.75) -- (10,4.5);

\node at (-2.5,5){Level 3:};
\draw (1.2,4.5) rectangle+(12.8,0.75);
\end{tikzpicture}
\caption{Merge and reduce framework on a stream of length $m$. The coresets at level 1 are precisely the hyperedges in the block, while the coresets at level $\ell>1$ are $\left(1+\O{\frac{\eps}{2\log(mn)}}\right)$-coresets of the hyperedges contained in the coresets of the children nodes in level $\ell-1$.}
\figlab{fig:merge:reduce}
\end{figure*}

\begin{algorithm}[!htb]
\caption{Streaming framework of \cite{Cohen-AddadWZ23}, using online sampling and merge-and-reduce}
\alglab{alg:framework}
\begin{algorithmic}[1]
\State{{\bf Require:} Stream $\calS$, online sampling procedure for $\calS$, merge-and-reduce procedure}
\State{{\bf Ensure:} Coreset on $\calS$}
\For{each update $s_t$ in the stream $\calS$}
\If{$s_t$ is sampled by online sampling}
\State{Add the corresponding update to $\calS'$}
\EndIf
\State{Run merge-and-reduce on $\calS'$}
\EndFor
\end{algorithmic}
\end{algorithm}

\subsection{Graph Spectral Sparsifier}
\seclab{sec:graph:stream}
We show that the streaming framework produces space-optimal streaming algorithms for constructing graph and hypergraph sparsifiers. First, in the offline setting, an efficient way to construct $(1+\eps)$-coresets for graph sparsifier is given by \cite{BatsonSS14}, using only $\O{\frac{n}{\eps^2}}$ edges. We provide the formal statement below.
\begin{theorem}[Offline algorithm for graph spectral sparsifier]
\thmlab{thm:g:offline}
\cite{BatsonSS14}
Given a graph $G=(V,E,w)$ with $n$ vertices, there exists an algorithm that constructs a $(1+\eps)$-spectral sparsifier with probability $1-\frac{1}{\poly(n)}$ using $\O{\frac{n}{\eps^2}}$ edges in $\poly(n)$ time.
\end{theorem}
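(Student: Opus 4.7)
The plan is to invoke the Batson--Spielman--Srivastava (BSS) barrier-function technique, which gives a deterministic construction of a $(1+\eps)$-spectral sparsifier with $\O{\frac{n}{\eps^2}}$ edges in polynomial time. Since the procedure is deterministic, the success probability is $1$ and so trivially exceeds $1-\frac{1}{\poly(n)}$.

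First, I would reduce to spectral approximation of a projector. Write $L_G = \bA^\top \bA$ where $\bA$ is the weighted incidence matrix described in the preliminaries. For every edge $e$, set $\bv_e = L_G^{+/2} \ba_e$, using the Moore--Penrose pseudo-inverse. Then
\[
    \sum_{e \in E} \bv_e \bv_e^\top = \Pi,
\]
the orthogonal projector onto $\mathrm{im}(L_G)$. Finding a $(1+\eps)$-spectral sparsifier of $G$ is therefore equivalent to finding non-negative scalars $\{s_e\}_{e \in E}$ with at most $\O{\frac{n}{\eps^2}}$ nonzero entries such that $(1-\eps)\Pi \preceq \sum_e s_e \bv_e \bv_e^\top \preceq (1+\eps)\Pi$; the edge weights of the sparsifier are then $s_e \cdot w(e)$.

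Second, I would execute the BSS barrier construction. Define the upper and lower barrier potentials $\Phi^u(M,u) = \mathrm{tr}\bigl((uI - M)^{-1}\bigr)$ and $\Phi^\ell(M,\ell) = \mathrm{tr}\bigl((M - \ell I)^{-1}\bigr)$. Initialize $M_0 = 0$, with upper/lower barriers at $u_0 = \Theta(\sqrt{n}/\eps)$ and $\ell_0 = -\Theta(\sqrt{n}/\eps)$, and advance each barrier by fixed increments $\delta_U, \delta_L$ per step. At step $t$, I would invoke the key averaging lemma: summing, over all edges $e$, the maximum scalar $s$ for which adding $s \bv_e \bv_e^\top$ does not increase $\Phi^u$ (with barrier advanced by $\delta_U$) and the minimum scalar $s$ for which adding $s \bv_e \bv_e^\top$ does not increase $\Phi^\ell$ (with barrier advanced by $\delta_L$), and comparing via the Sherman--Morrison formula together with $\sum_e \bv_e \bv_e^\top = \Pi$, one shows that for suitable constants there always exists an edge $e$ and a weight $s$ satisfying both constraints simultaneously. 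Iterating $T = \O{\frac{n}{\eps^2}}$ times produces a weighted sum with $u_T/\ell_T \le (1+\eps)/(1-\eps) = 1 + \O{\eps}$, which after a global rescaling yields the desired spectral approximation.

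The hard part is the simultaneous barrier-advancement lemma: one must choose $\delta_U, \delta_L$ precisely so that the combined upper and lower feasibility averages strictly favor at least one edge at every step, which requires controlling quadratic corrections to $\Phi^u$ and $\Phi^\ell$ under rank-one updates. Once that inequality is established, the rest of the argument (counting iterations, bounding the final condition number, and reading off edge weights) is routine. The polynomial running time follows because each iteration requires only the inverses $(u_t I - M_t)^{-1}$ and $(M_t - \ell_t I)^{-1}$ and inner products with the $\bv_e$, all computable in $\poly(n)$ time, and there are $\O{\frac{n}{\eps^2}} = \poly(n)$ iterations.
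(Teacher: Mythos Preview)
Your sketch is a correct outline of the Batson--Spielman--Srivastava barrier argument from the cited reference. Note, however, that the paper does not actually prove this theorem: it is stated as a black-box citation of \cite{BatsonSS14} and used as an offline subroutine inside the merge-and-reduce framework. So there is no ``paper's own proof'' to compare against here; your proposal simply unpacks the content of the citation, and does so accurately.
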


We then state the result for the online graph spectral sparsifier in \cite{CohenMP20}.
\begin{theorem}[Online algorithm for graph spectral sparsifier]
\thmlab{thm:graph:online}
\cite{CohenMP20} Given a graph $G =(V, E, w)$ with $n$ vertices and $m$ edges defined by an insertion-only stream, there exists an online algorithm that constructs a $(1+\eps)$-spectral sparsifier with probability $1-\frac{1}{\poly(n)}$. The algorithm samples $\calO\left(\frac{n}{\eps^2} \log ^2 n\right)$ edges and uses $\O{n \log ^2 n}$ words of working memory and $\poly(n)$ update time.
\end{theorem}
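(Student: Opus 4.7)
The plan is to follow the online row sampling framework. First, I would interpret the graph spectral sparsification problem as a matrix spectral approximation problem via Fact/Lemma~\ref{lem:er:lev:score}: writing $L_G = \bA^\top\bA$ where $\bA$ is the (weighted) incidence matrix, a $(1+\eps)$-spectral sparsifier corresponds to a weighted row sample $\widetilde{\bA}$ satisfying $(1-\eps)\bA^\top\bA \preceq \widetilde{\bA}^\top\widetilde{\bA}\preceq (1+\eps)\bA^\top\bA$. So it suffices to produce such a row sample in the online setting where rows $\ba_1,\ldots,\ba_m$ arrive one at a time.

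Second, I would design the sampling rule. When row $\ba_i$ arrives, compute the online leverage score $\tau_i^{\mathsf{OL}} = \ba_i(\bA_i^\top\bA_i)^{-1}\ba_i^\top$ using the prefix matrix $\bA_i$ built from rows that have already arrived. By \lemref{lem:lev:score:monotone}, this overestimates the true leverage score. Set $p_i = \min\{1, C\eps^{-2}\log(n)\cdot \tau_i^{\mathsf{OL}}\}$ for a sufficiently large constant $C$, include $\ba_i/\sqrt{p_i}$ in the sample with probability $p_i$, and write the error as a sum of independent (conditionally, given the filtration of prior arrivals and prior sampling decisions) rank-one matrix increments $\bX_i = \frac{1}{p_i}\cdot Y_i\cdot \ba_i\ba_i^\top - \ba_i\ba_i^\top$, where $Y_i \in \{0,1\}$ indicates whether $\ba_i$ is sampled.

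Third, I would verify the spectral approximation guarantee via a matrix martingale argument. After normalizing by $(\bA^\top\bA)^{-1/2}$ on both sides, the sequence $\{\sum_{j\le i} \bM_j\}$ (where $\bM_j$ is the normalized version of $\bX_j$) forms a matrix martingale. Applying the matrix Freedman inequality and using that each $\|\bM_j\|\lesssim 1/(C\eps^{-2}\log n)$ (because $p_j\gtrsim C\eps^{-2}\log(n)\cdot\tau_j$, by monotonicity of the leverage scores) and that the predictable quadratic variation is controlled by $\sum_j \tau_j/p_j \lesssim 1/(C\eps^{-2}\log n)\cdot n\log\kappa$ would yield the $(1\pm\eps)$ spectral approximation with failure probability $1/\poly(n)$, for $C$ chosen large enough.

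Fourth, I would bound the sample complexity: $\mathbb{E}[\,|\widehat{G}|\,] = \sum_i p_i \le C\eps^{-2}\log(n)\sum_i \tau_i^{\mathsf{OL}} = \O{\eps^{-2} n\log n\log\kappa}$ by \thmref{thm:sum:ol}, and under \factref{fac:kappa} this is $\O{\eps^{-2}n\log^2 n}$, with concentration by Chernoff. The working memory needs only a running $2$-spectral approximation of $\bA_i^\top\bA_i$ for computing $\tau_i^{\mathsf{OL}}$, which by a standard bootstrapping argument (run the same scheme recursively at constant approximation) uses $\O{n\log^2 n}$ words, and $\poly(n)$ update time suffices to maintain the required pseudoinverse. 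The main obstacle is the concentration step: since the sampling probability of each row depends adaptively on the prior random decisions, one cannot use matrix Chernoff directly; one must carefully set up the conditional expectations so that the matrix Freedman inequality applies, and tune the oversampling constant $C$ so that the variance and operator-norm bounds simultaneously give the desired $(1\pm\eps)$ guarantee with $1/\poly(n)$ failure probability.
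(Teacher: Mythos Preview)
The paper does not prove this theorem; it is stated as a known result from \cite{CohenMP20} and used as a black box. Your sketch is correct and follows precisely the approach of that reference: reduce to matrix spectral approximation via the incidence matrix, sample rows with probability proportional to online leverage scores (which overestimate the true ones by \lemref{lem:lev:score:monotone}), control the spectral error via a matrix-Freedman argument, and bound the sample count via \thmref{thm:sum:ol} together with \factref{fac:kappa}. One small inconsistency: you invoke ``Chernoff'' for concentration of the number of sampled rows, but as you yourself point out at the end, the $p_i$'s depend adaptively on prior sampling outcomes, so the scalar Freedman inequality (cf.\ \thmref{thm:free}) is the right tool there as well---this is exactly the fix the present paper applies in \thmref{thm:working:memory}.
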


Using the above subroutines, we show that our streaming algorithm has an optimal space up to a $\poly(\log \log n)$ factor, assuming $m=\poly(n)$ in the graph.
\begin{theorem}[Streaming algorithm for graph spectral sparsifier]
\thmlab{thm:graph:streaming}
Given a graph $G=(V,E,w)$ with $n$ vertices and $m$ edges defined by an insertion-only stream, there exists an algorithm that constructs a $(1+\eps)$-spectral sparsifier with probability $1-\frac{1}{\poly(n)}$.
The algorithm stores $\frac{n}{\eps^2}\cdot \poly(\log \log n)$ edges, i.e., $\frac{n}{\eps^2}\cdot \log n \poly(\log \log n)$ bits, and has $\poly(n)$ update time. 
\end{theorem}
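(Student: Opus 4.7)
The plan is to instantiate the streaming framework of \algref{alg:framework}, using the online graph sparsifier of \thmref{thm:graph:online} as the online sampling subroutine and the offline $(1+\eps)$-sparsifier of \thmref{thm:g:offline} as the coreset construction inside merge-and-reduce. First I would run the online algorithm with accuracy $\eps/2$ and failure probability $1/\poly(n)$; this produces, at every prefix of the insertion stream, a $(1+\eps/2)$-spectral sparsifier of the current graph and in total forwards only $N=\O{\frac{n}{\eps^2}\log^2 n}$ weighted edges. So the ``effective'' stream that actually reaches merge-and-reduce has length $N$ rather than $m$, even though $m$ may be as large as $\poly(n)$.

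Second I would apply merge-and-reduce on this forwarded stream with internal accuracy $\eps'=\eps/(c\log\log n)$ for a suitable constant $c$ and block size $S=\O{n/\eps'^2}=\O{\frac{n}{\eps^2}\log^2\log n}$, invoking \thmref{thm:g:offline} at each tree node. Since the number of blocks is $N/S=\O{\log^2 n/\log^2\log n}$, the depth of the balanced binary merge tree is only $\log_2(N/S)=\O{\log\log n}$. Boosting the per-node failure probability to $1/\poly(n)$ to support a union bound over all $\O{N}$ tree nodes and over the $m$ prefixes adds only $\polylog(n)$ factors inside the $\poly(n)$ update time but does not affect the leading space term.

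The space analysis is then straightforward bookkeeping: at any instant the algorithm keeps at most one partial coreset per active level of the tree, i.e.~$\O{\log\log n}$ coresets each of size $\O{\frac{n}{\eps^2}\log^2\log n}$, yielding $\frac{n}{\eps^2}\poly(\log\log n)$ stored edges in total. Each edge costs $\O{\log n}$ bits for its endpoints and $\O{\log\log n}$ bits for its weight after rounding to a power of $(1+\eps')$ (following the argument of \remref{rem:bits:hyperedge}), matching the $\frac{n}{\eps^2}\log n\,\poly(\log\log n)$ bit bound. Correctness composes multiplicatively: the online pass contributes a $(1+\eps/2)$ factor, each of the $\O{\log\log n}$ merge levels contributes a $(1+\eps')$ factor, and so the end-to-end distortion is at most $(1+\eps/2)(1+\eps')^{\O{\log\log n}}\le 1+\O{\eps}$, which rescales to $(1+\eps)$ by adjusting the constant $c$.

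The main obstacle I expect is verifying that the spectral sparsifier property composes cleanly through the entire pipeline. Specifically, one must check that (i) the output of the online algorithm, viewed as a weighted edge stream, behaves like a genuine input in the sense that its Laplacian at every prefix is a $(1+\eps/2)$-approximation of $L_G$, so that applying an offline sparsifier to any contiguous block yields a sparsifier of the corresponding sub-stream with respect to $G$; and (ii) the coreset property chains through $\O{\log\log n}$ levels without blowup in failure probability or weight precision. Both parts go through because spectral sparsification is defined via the PSD ordering on Laplacians and Laplacians of edge-disjoint unions add, so each merge step is simply summing PSD matrices; this lets us invoke \thmref{thm:g:offline} as a black box at every internal node and conclude by a union bound, exactly as encoded in \algref{alg:framework}.
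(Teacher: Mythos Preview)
Your overall framework matches the paper's approach: run the online sparsifier of \thmref{thm:graph:online} to shorten the effective stream to $N=\O{\frac{n}{\eps^2}\log^2 n}$ edges, then apply merge-and-reduce with the offline sparsifier of \thmref{thm:g:offline} and internal accuracy $\eps'=\Theta(\eps/\log\log n)$, giving tree height $\O{\log\log n}$ and the stated coreset count. The composition argument and the per-level error accumulation are also essentially what the paper does.

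However, there is a genuine gap in your space accounting. You only tally the coresets stored by merge-and-reduce, but the online sampler of \thmref{thm:graph:online} itself requires $\O{n\log^2 n}$ words of \emph{working memory} to maintain the sketch matrix against which it computes online leverage scores. For constant $\eps$ this is $\Theta(n\log^3 n)$ bits, which already exceeds the claimed $\frac{n}{\eps^2}\log n\,\poly(\log\log n)$ bit bound. The paper confronts this explicitly: it observes that the root of the merge-and-reduce tree is at all times a $2$-spectral approximation of the current Laplacian, and replaces the online sampler's internal sketch by this root coreset when computing sampling probabilities. Verifying that this replacement preserves both correctness and the sample-complexity bound is not automatic, because now the online sampling probabilities depend on the downstream merge-and-reduce randomness; the paper handles this via an inductive argument and Freedman's inequality in \thmref{thm:working:memory}. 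Without this step, your algorithm as described does not achieve the stated space bound.
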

\begin{proof}
The proof relies on the decomposability of the coresets, i.e., if $\widehat{G_1}$ is an $\eps$-sparsifier of $G_1$ and $\widehat{G_2}$ is a $\eps'$-sparsifier of $G_2$, then $\widehat{G_1} \cup \widehat{G_2}$ is a $\eps'$-sparsifier of $G_1 \cup G_2$.
This is simply because the energy of a vector $x$ on the graph $G$ is the sum of the energy of all edges in $G$: $Q_G(x) = \sum_{e \in G} Q_e(x)$.
Then, the edges in the stream at level $l+1$ in the merge-and-reduce binary tree construct a $\eps'$-sparsifier of the edges in the stream at level $l$.
Thus, the accumulative error of the coreset on the roof of a tree of height $h$ is $\eps' h$.

In our online framework \algref{alg:framework}, the hyperedges at the bottom level are the output $\calS'$ of the online algorithm, which contains $|\calS'|=\O{\frac{n}{\eps'^2}\log^2 n}$ edges by \thmref{thm:graph:online}.
Thus, the height of the merge-and-reduce data structure is $h = \O{\log \frac{|\calS'|}{|C|}}$, where $|C|$ is the size of each coreset in the merge-and-reduce structure. 
We define $C$ by \thmref{thm:g:offline}, which selects $\Theta(\frac{n}{\eps'^2})$ hyperedges. Hence, $h = \O{\log \log n}$. 
Therefore, taking $\eps'=\O{\frac{\eps }{\log \log n}}$ achieves an accumulative error of at most $\O{\eps' h = \eps}$, which implies a $(1+\eps)$-sparsifier. 
Then, $|C| = \O{\frac{n}{\eps'^2}} = \O{\frac{n}{\eps^2} \log \log^2 n}$. 
So, we need at most $\O{|C| \cdot h} = \O{\frac{n}{\eps^2} \log \log^3 n}$ words of memory in total.
The update time of our streaming algorithm directly follows from the results from \thmref{thm:graph:online} and \thmref{thm:g:offline}.

Last, we remark that the online row sampling procedure in \thmref{thm:graph:online} requires a sketch matrix of $\O{n \log^2 n}$ rows to define the online sampling probabilities, which exceeds our desired space complexity.
We note that it suffices to use the matrix consisting of the edges in the roof of our merge-and-reduce tree structure, i.e., the output of our streaming algorithm.
This gives the same guarantees for correctness and complexity as in \thmref{thm:graph:online} since our output is always a $2$-spectral approximation at each step $i$.
We defer the formal proof to \thmref{thm:working:memory}.
Then, since we only need to store our streaming output to define the online sampling probabilities, which has $\frac{n}{\eps^2}\cdot \poly(\log\log (n))$ edges, we use $\frac{n}{\eps^2}\cdot \log n \poly(\log\log (n))$ bits of working memory.
\end{proof}

\paragraph{Optimizing the Working Memory.}
We introduce the method to optimize the size of the sketch matrix.
In online row sampling, when a row $\ba_i$ arrives, the algorithm samples $\ba_i$ with probability proportional to the ``$\lambda$-ridge leverage score'':
\[\ell_i := \ba_{i-1}^T\left(\bA_{i-1}^T \bA_{i-1}+\lambda \mathbf{I}\right)^{-1} \ba_i,\]
where $\bA_{i-1}$ is the prefix matrix containing all the rows arrived before $\ba_i$ and $\lambda$ is some parameter defined in advance.
However, storing each $\bA_i$ explicitly requires a prohibitively large working memory.
Instead, it suffices to use a $2$-spectral approximation to $\bA_i$ to define the sampling probability, so \cite{CohenMP20} used $\widetilde{\bA_i}$, which is the approximation given by their online algorithm (see Algorithm 1 in \cite{CohenMP20}).  
However, $\widetilde{\bA_i}$ still has $\O{n \log^2 n}$ rows, which is prohibitively large for our purpose, so we replace it by the matrix consisting of the edges in the roof of our merge-and-reduce tree structure, i.e., the output of our streaming algorithm.
Next, we show that this modified online sampling scheme still gives the guaranties in \thmref{thm:graph:online}.
First, we state that oversampling gives a valid graph sparsifier, i.e., the online sampling probabilities only need to be defined by $\widetilde{\ell_i} \ge \ell_i$.

\begin{theorem}[Online Sampling Probability, \cite{CohenMP20}]
\thmlab{thm:online:scheme}
Let $\bA \in \mathbb{R}^{m \times n}$ be an incidence matrix of a (multi-)graph with $n$ vertices and $m$ edges, whose edge weights are integers in $[\poly(n)]$, let $\ba_i$ be the $i$-th row in $\bA$, and let $\bA_{i-1}$ be the prefix matrix containing all the rows arrived before $\ba_i$.
We define an online sampling scheme as follows.
Let $\widetilde{\ell}_i$ satisfies
\[\widetilde{\ell}_i \geq \mathbf{a}_i^T\left(\mathbf{A}_{i-1}^T \mathbf{A}_{i-1}+\lambda \mathbf{I}\right)^{-1} \mathbf{a}_i,\]
where $\lambda = \frac{\eps}{\poly(n)} \le \frac{\eps}{\sigma^2_{\min}(\bA)}$.
We set the sampling probability of row $\ba_i$ as $p_i = \min \{ c \widetilde{\ell}_i, 1\}$, where $c = \O{\frac{\log m}{\eps^2}}$, and if $\ba_i$ is sampled, we store the re-weighted row $\frac{\ba_i}{\sqrt{p_i}}$.
Then, let $\widetilde{\bA}_i$ be the matrix outputted by our sampling scheme  for each $i \in [m]$, with probability at least $\frac{1}{\poly(m)}$ we have
\[(1-\eps) \bA_i^\top \bA_i \preceq \widetilde{\bA}_i^T \widetilde{\bA}_i \preceq(1+\eps) \bA_i^\top \bA_i.\]
\end{theorem}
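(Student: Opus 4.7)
The plan is to follow the matrix martingale framework used in \cite{CohenMP20}. Define the error matrix $Z_i = \widetilde{\bA}_i^\top \widetilde{\bA}_i - \bA_i^\top \bA_i$, and let $Y_i = Z_i - Z_{i-1}$ denote the per-row increment. When $\ba_i$ is sampled with probability $p_i$, we have $Y_i = \left(\frac{1}{p_i}-1\right)\ba_i \ba_i^\top$, and when it is not sampled, $Y_i = -\ba_i \ba_i^\top$. Either way $\Ex{Y_i \mid \mathcal{F}_{i-1}} = 0$, so $\{Z_i\}$ is a matrix martingale with $Z_0 = 0$. Once we know that $\|Z_i\|$ is small in the appropriate whitened norm, the desired two-sided spectral bound will follow.

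The key step is to whiten. Let $\bM_i = \bA_i^\top \bA_i + \lambda \bI$, and consider $W_i = \bM_i^{-1/2} Y_i \bM_i^{-1/2}$. When $\ba_i$ is sampled, $W_i$ is a scaled rank-one matrix with spectral norm at most $\frac{1}{p_i}\, \ba_i^\top \bM_i^{-1} \ba_i$. By monotonicity of ridge leverage scores, this is bounded by $\widetilde{\ell}_i / p_i \le 1/c = O(\eps^2/\log m)$, so each step of the martingale has small norm. The predictable quadratic variation $\sum_{i \le t} \Ex{W_i^2 \mid \mathcal{F}_{i-1}}$ is dominated (up to scaling) by $\sum_i \widetilde{\ell}_i$ times the whitened rank-one direction, and by the sum-of-online-ridge-leverage-scores bound (\thmref{thm:sum:ol} applied with the regularizer $\lambda$), this sum is $O(n \log \kappa) = O(n \log n)$ under the polynomial-weight assumption in \factref{fac:kappa}.

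With these ingredients, the matrix Freedman inequality gives $\|\bM_t^{-1/2} Z_t \bM_t^{-1/2}\|_2 \le \eps$ for a fixed $t$ with probability $1 - 1/\poly(mn)$, and a union bound over $t \in [m]$ yields the uniform guarantee. Unwhitening gives the additive-with-regularization bound $-\eps\, \bM_i \preceq Z_i \preceq \eps\, \bM_i$. Since edge weights are positive integers in $[\poly(n)]$, once $\bA_i$ spans the image of the full Laplacian its smallest nonzero eigenvalue is at least $1/\poly(n)$, so the choice $\lambda = \eps/\poly(n) \le \eps \sigma_{\min}^2(\bA)$ ensures $\lambda \bI \preceq \eps\, \bA_i^\top \bA_i$ on the relevant subspace (and acts trivially on the kernel shared by both sides). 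Absorbing $\lambda \bI$ into $\eps\, \bA_i^\top \bA_i$ at the cost of inflating $\eps$ by a constant then yields the desired multiplicative sandwich $(1-\eps)\bA_i^\top \bA_i \preceq \widetilde{\bA}_i^\top \widetilde{\bA}_i \preceq (1+\eps)\bA_i^\top \bA_i$.

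The main obstacle is the uniform control across all $m$ time steps simultaneously: this forces the $\log m$ factor in $c$ to pay for the union bound, and requires a martingale-style (rather than per-step) concentration inequality. A secondary subtlety is the interplay between the regularizer $\lambda$ and the possibly rank-deficient early prefixes $\bA_i^\top \bA_i$; handling this cleanly requires decomposing along the kernel of the full Laplacian and using the polynomial lower bound on the smallest nonzero eigenvalue to absorb the $\lambda \bI$ additive term into a multiplicative error at the very end.
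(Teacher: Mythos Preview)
The paper does not prove this theorem; it is stated as a result from \cite{CohenMP20} (with the minor modification $c=\O{\frac{\log m}{\eps^2}}$ noted in the remark following the statement) and used as a black box. Your sketch follows exactly the matrix-martingale-plus-whitening argument of \cite{CohenMP20}, so it is the correct approach and matches what the paper is implicitly invoking.
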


We remark that we set $c = \O{\frac{\log m}{\eps^2}}$, instead of $c = \O{\frac{\log n}{\eps^2}}$ in Algorithm 1 in \cite{CohenMP20}, to achieve a high success probability $1-\frac{1}{\poly(m)}$, enabling us to condition on the correctness upon each arrival $\ba_i$.
Next, we state an upper bound on the sum of ridge leverage scores.

\begin{theorem}[Sum of ridge leverage scores, \cite{CohenMP20}]
\thmlab{thm:sum:ridge}
In the sampling schemed described in \thmref{thm:online:scheme}, we define
\[\ell_i = \mathbf{a}_i^T\left(\mathbf{A}_{i-1}^T \mathbf{A}_{i-1}+\lambda \mathbf{I}\right)^{-1} \mathbf{a}_i,\]
where $\lambda = \frac{\eps}{\poly(n)} \le \frac{\eps}{\sigma^2_{\min}(\bA)}$.
Then we have $\sum_{i \in [m]} \ell_i = \O{n \log n}$.
\end{theorem}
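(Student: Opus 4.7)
The plan is to carry out the standard matrix-determinant-lemma / telescoping argument (following \cite{CohenMP20}), where we convert the sum of ridge leverage scores into a log-determinant and bound that quantity via the spectrum of $\bA$.

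First, I would apply the matrix determinant lemma to the rank-one update $\bA_i^\top \bA_i + \lambda \bI = \bA_{i-1}^\top \bA_{i-1} + \lambda \bI + \ba_i \ba_i^\top$ to obtain
\[
\det\!\left(\bA_i^\top \bA_i + \lambda \bI\right) = \det\!\left(\bA_{i-1}^\top \bA_{i-1} + \lambda \bI\right)\cdot(1+\ell_i).
\]
Taking logs and telescoping across $i = 1, \ldots, m$ gives the key identity
\[
\sum_{i=1}^m \log(1+\ell_i) \;=\; \log\det\!\left(\bA_m^\top \bA_m + \lambda \bI\right) - \log\det(\lambda \bI).
\]

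Next, I would bound the right-hand side using AM--GM on the eigenvalues of $\bA_m^\top \bA_m + \lambda \bI$: since the trace is at most $\lambda n + \|\bA\|_F^2 \le n(\lambda + \|\bA\|_2^2)$, we get $\log\det(\bA_m^\top \bA_m + \lambda \bI) \le n\log(\lambda + \|\bA\|_2^2)$, so
\[
\sum_{i=1}^m \log(1+\ell_i) \;\le\; n\log\!\left(1 + \tfrac{\|\bA\|_2^2}{\lambda}\right).
\]
Plugging in $\|\bA\|_2^2 \le \poly(n)$ (which follows from the integer weight assumption, cf.\ \factref{fac:kappa}) and $\lambda = \eps/\poly(n)$, this right-hand side is $O(n\log(n/\eps)) = O(n\log n)$ in the regime of interest.

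Finally, I would convert the log-sum bound into the desired bound on $\sum_i \ell_i$. Since $\log(1+x)\ge x/2$ for $x\in[0,1]$ and $\log(1+x)\ge \log 2$ for $x\ge 1$, every index with $\ell_i \le 1$ contributes at least $\ell_i/2$, while each index with $\ell_i > 1$ is effectively treated via the capping $p_i = \min(c\widetilde{\ell}_i,1)$ in the sampling scheme of \thmref{thm:online:scheme}; in either case, $\min(\ell_i,1) \le 2\log(1+\ell_i)$, yielding $\sum_i \min(\ell_i,1) = O(n\log n)$, which is the quantity that controls the sample complexity.

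The main obstacle I anticipate is not the telescoping but the final conversion step: strictly speaking, $\ell_i$ can exceed $1$ in the online setting (unlike offline leverage scores which are bounded by $1$), so one must argue carefully that the ``large'' $\ell_i$'s either are infrequent (since each contributes $\Omega(1)$ to the log-sum and thus there are at most $O(n\log n)$ of them) or are harmless because the sampling probability is capped. This subtlety is why the theorem is really a statement about the effectively-capped sum, and the log-determinant telescoping provides exactly the right clean quantity to control.
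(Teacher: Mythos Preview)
The paper does not provide its own proof of this statement; it is quoted directly from \cite{CohenMP20} and used as a black box in the proof of \thmref{thm:working:memory}. Your proposal reproduces the standard matrix-determinant-lemma telescoping argument from that reference, and the argument is correct, including your observation that the relevant controlled quantity is $\sum_i \min(\ell_i,1)$ rather than $\sum_i \ell_i$ itself.
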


The upper bound in \thmref{thm:sum:ridge} gives an upper bound on the sum of sampling probabilities; however, we cannot apply standard concentration inequalities to bound the number of sampled rows.
This is because in the online subroutine, the sample probability of the current $\ba_i$ depends on whether the previous rows are sampled, which is still the case after we replace the sketch matrix in the online subroutine by our output from the merge-and-reduce process.
Thus, we need to apply the following Freedman's inequality, which does not assume the independence of each sample.
\begin{theorem}[Freedman's inequality, \cite{Freedman75}] 
\thmlab{thm:free}
Let $Y_0, Y_1, \ldots, Y_n$ be a scalar martingale with difference sequence $X_1, \ldots, X_n$, i.e., $Y_0=0$ and $Y_t=Y_{t-1}+X_t$ for all $t \in[n]$.
Let $\left|X_t\right| \le R$ for all $t \in[n]$ with high probability. 
We define the predictable quadratic variation process of the martingale by $w_k:=\sum_{t=1}^k \underset{t-1}{\mathbb{E}}\left[X_t^2\right]$, for $k \in[n]$. Then for all $\epsilon \geq 0$ and $\sigma^2>0$, and every $k \in[n]$,
\[\PPr{\max _{t \in[k]}\left|Y_t\right|>\epsilon \text { and } w_k \leq \sigma^2} \leq 2 \exp \left(-\frac{\epsilon^2 / 2}{\sigma^2+R \epsilon / 3}\right).\]
\end{theorem}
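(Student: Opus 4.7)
The plan is to prove Freedman's inequality via the classical exponential supermartingale (Bennett/Bernstein) method, carefully adapted to the martingale setting so that the predictable quadratic variation $w_k$ plays the role usually played by a sum of variances for independent variables.

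First I would set up an exponential process. For each $\theta > 0$, define
\[
Z_t(\theta) := \exp\!\bigl(\theta Y_t - g(\theta)\, w_t\bigr), \qquad g(\theta) := \frac{e^{\theta R} - 1 - \theta R}{R^2}.
\]
The key analytic input is the pointwise inequality $e^{\theta x} \le 1 + \theta x + g(\theta)\, x^2$ valid for every $x \le R$, which follows from the fact that $(e^y - 1 - y)/y^2$ is nondecreasing. Since $|X_t| \le R$, taking conditional expectation at time $t-1$ and using $\mathbb{E}_{t-1}[X_t] = 0$ (from the martingale property) yields
\[
\mathbb{E}_{t-1}\!\left[e^{\theta X_t}\right] \le 1 + g(\theta)\,\mathbb{E}_{t-1}[X_t^2] \le \exp\!\bigl(g(\theta)\,\mathbb{E}_{t-1}[X_t^2]\bigr).
\]
Because $w_t - w_{t-1} = \mathbb{E}_{t-1}[X_t^2]$ is $\mathcal{F}_{t-1}$-measurable, this rearranges to $\mathbb{E}_{t-1}[Z_t] \le Z_{t-1}$, so $Z_t$ is a nonnegative supermartingale with $\mathbb{E}[Z_0] = 1$.

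Next I would run the standard stopping-time/maximal-inequality argument. Let $\tau := \min\{t \in [k] : Y_t > \epsilon\}$, with $\tau = \infty$ if the event never occurs, and consider the stopped process $Z_{\tau \wedge k}(\theta)$, which is still a supermartingale with expectation at most $1$. On the event $\mathcal{E} := \{\max_{t \in [k]} Y_t > \epsilon \text{ and } w_k \le \sigma^2\}$, predictability of $w_t$ gives $w_\tau \le w_k \le \sigma^2$, hence $Z_\tau \ge \exp(\theta \epsilon - g(\theta)\sigma^2)$. Markov's inequality therefore yields
\[
\PPr{\mathcal{E}} \le \exp\!\bigl(-\theta\epsilon + g(\theta)\,\sigma^2\bigr).
\]
Applying the same argument to $-Y_t$ (also a martingale with differences bounded by $R$ and with the same $w_k$) and taking a union bound provides the factor of $2$ in the two-sided statement.

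Finally I would optimize over $\theta$. Using the elementary bound $g(\theta) \le \tfrac{\theta^2}{2(1 - \theta R/3)}$, valid for $0 < \theta < 3/R$, the exponent becomes $-\theta\epsilon + \tfrac{\theta^2 \sigma^2}{2(1 - \theta R/3)}$; the choice $\theta = \epsilon/(\sigma^2 + R\epsilon/3)$ (which does satisfy $\theta R < 3$) makes this expression equal to $-\tfrac{\epsilon^2/2}{\sigma^2 + R\epsilon/3}$, yielding exactly the stated bound. The main technical obstacle is verifying the supermartingale property together with the stopping argument in the presence of the random variance proxy $w_t$: one must be careful that $w_t$ is predictable (which follows from its definition as a sum of conditional expectations) so that $w_\tau \le \sigma^2$ on $\mathcal{E}$, and that the $|X_t| \le R$ bound holds with high probability rather than almost surely requires a small additional conditioning step that contributes negligibly to the failure probability.
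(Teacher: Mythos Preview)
The paper does not prove this statement at all: it is quoted as a known result with a citation to \cite{Freedman75} and used as a black box in the analysis of \thmref{thm:working:memory}. Your proposal supplies the classical exponential-supermartingale proof of Freedman's inequality, and the argument is correct in outline and in its computations (the supermartingale property of $Z_t(\theta)$, the stopping-time step, and the optimization over $\theta$ all check out). So there is nothing to compare against in the paper itself; you have simply filled in a standard proof that the authors chose to cite rather than reproduce.
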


Now, we show that the modified online subroutine satisfies the guaranties in \thmref{thm:graph:online}.
Thus, we only need to store a sketch matrix of $\O{\frac{n}{\eps^2}\poly(\log\log(n)}$ rows to define the online sampling probabilities, which achieves the desired space complexity.
\begin{theorem}
\thmlab{thm:working:memory}
Let graph $G =(V, E, w)$ with $n$ vertices and $m$ edges be defined by an insertion-only stream, let $\bA$ be the incidence matrix of $G$, and let $\ba_i$ be the $i$-th row in $\bA$.
We define an online sampling scheme as follows.
Let $\widetilde{\ell}_i$ satisfies
\[\widetilde{\ell}_i = \O{1} \cdot \mathbf{a}_i^T\left(\widehat{\bA}_{i-1}^T \widehat{\bA}_{i-1}+\lambda \mathbf{I}\right)^{-1} \mathbf{a}_i,\]
where $\lambda = \frac{\eps}{\poly(n)} \le \frac{\eps}{\sigma^2_{\min}(\bA)}$ and $\widehat{\bA}_{i-1}$ is the output of our streaming algorithm the last round (see the description in \thmref{thm:graph:streaming}).
We set the sampling probability of row $\ba_i$ as $p_i = \min \{ c \widetilde{\ell}_i, 1\}$, where $c = \O{\frac{\log m}{\eps^2}}$, and if $\ba_i$ is sampled, we store the re-weighted row $\frac{\ba_i}{\sqrt{p_i}}$.
Then, let $\widetilde{\bA}_i$ be the matrix outputted by our sampling scheme  for each $i \in [m]$, with probability at least $\frac{1}{\poly(m)}$ we have
\[(1-\eps) \bA_i^\top \bA_i \preceq \widetilde{\bA}_i^T \widetilde{\bA}_i \preceq(1+\eps) \bA_i^\top \bA_i.\]
The algorithm samples $\calO\left(\frac{n}{\eps^2} \log m \log n\right)$ edges and uses $\poly(n)$ update time.
\end{theorem}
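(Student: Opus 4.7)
The plan is to prove this by induction on $i$, treating the sampling scheme as a variant of \thmref{thm:online:scheme} in which the sketch $\bA_{i-1}$ is replaced by a constant-factor spectral approximation $\widehat{\bA}_{i-1}$ coming from the merge-and-reduce framework. Since \thmref{thm:online:scheme} already yields the desired conclusion whenever the overestimates $\widetilde{\ell}_i$ exceed the true online ridge leverage scores $\ell_i$, it suffices to show that using $\widehat{\bA}_{i-1}$ in place of $\bA_{i-1}$ inflates each overestimate by at most a constant factor, which can be absorbed by the leading $\O{1}$ constant in the definition of $\widetilde{\ell}_i$.

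First I would establish the inductive hypothesis that $\widehat{\bA}_{i-1}$ is a $(1+\O{\eps})$-spectral sparsifier of $\bA_{i-1}$; in particular, $\frac{1}{2}\bA_{i-1}^\top\bA_{i-1} \preceq \widehat{\bA}_{i-1}^\top\widehat{\bA}_{i-1} \preceq 2\bA_{i-1}^\top\bA_{i-1}$. Adding $\lambda\bI$ preserves this relation up to constants on both sides, and inverting yields
\[\frac{1}{2}(\bA_{i-1}^\top\bA_{i-1}+\lambda\bI)^{-1} \preceq (\widehat{\bA}_{i-1}^\top\widehat{\bA}_{i-1}+\lambda\bI)^{-1} \preceq 2(\bA_{i-1}^\top\bA_{i-1}+\lambda\bI)^{-1},\]
so choosing the hidden $\O{1}$ constant in the definition of $\widetilde{\ell}_i$ sufficiently large guarantees $\widetilde{\ell}_i \geq \ell_i$. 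Invoking the online sampling analysis of \thmref{thm:online:scheme} then yields $(1-\eps)\bA_i^\top\bA_i \preceq \widetilde{\bA}_i^\top\widetilde{\bA}_i \preceq (1+\eps)\bA_i^\top\bA_i$. The streaming output $\widehat{\bA}_i$ is obtained by feeding $\widetilde{\bA}_i$ into the merge-and-reduce structure, which, as in the proof of \thmref{thm:graph:streaming}, introduces only an additional $(1+\O{\eps})$ factor of distortion, closing the induction.

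For the sample complexity, combining $\widetilde{\ell}_i = \O{\ell_i}$ with \thmref{thm:sum:ridge} gives $\sum_i \widetilde{\ell}_i = \O{n\log n}$, so the expected number of samples is $\sum_i p_i = \O{c \cdot n\log n} = \O{\frac{n\log m\log n}{\eps^2}}$. Because each probability $p_i$ depends on previous sampling decisions through $\widehat{\bA}_{i-1}$, the sampling indicators are not mutually independent; nevertheless, the centered indicators form a bounded martingale difference sequence adapted to the filtration generated by the online-sampling randomness, so I would apply Freedman's inequality (\thmref{thm:free}) to concentrate the sample count around its mean with failure probability $1/\poly(m)$.

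The main obstacle is the circular dependency: the sampling probabilities at step $i$ rely on the sparsifier $\widehat{\bA}_{i-1}$ built in previous steps, whose correctness itself rests on the sampling probabilities having been valid overestimates throughout the first $i-1$ arrivals. I would resolve this by conditioning on the conjunction of the $m$ good events across all prior time steps; this is precisely why $c = \Theta(\log m/\eps^2)$ rather than $\Theta(\log n/\eps^2)$ is required, since it boosts each per-step failure probability to $1/\poly(m)$ so that a union bound over the $m$ arrivals still yields overall success probability $1 - 1/\poly(m)$. The $\poly(n)$ update time follows from straightforward linear algebra on the compact sketch $\widehat{\bA}_{i-1}$, whose size is $\frac{n}{\eps^2}\poly(\log\log n)$ by \thmref{thm:graph:streaming}.
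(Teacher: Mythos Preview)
Your proposal is correct and follows essentially the same approach as the paper: an induction on $i$ showing that the constant-factor spectral approximation $\widehat{\bA}_{i-1}$ yields $\widetilde{\ell}_i \ge \ell_i$ so that \thmref{thm:online:scheme} applies, followed by \thmref{thm:sum:ridge} for the sum of scores and Freedman's inequality (\thmref{thm:free}) to concentrate the sample count under the dependent sampling. Your explicit identification of the circular dependency and the role of $c = \Theta(\log m/\eps^2)$ in enabling the union bound over all $m$ arrivals matches the paper's reasoning exactly.
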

\begin{proof}
First, we prove the correctness of our algorithm by applying \thmref{thm:sum:ridge} in an iterative way.
For initial stage, since $\ba_1$ is sampled by the online subroutine, $\widetilde{\bA}_1$ is a $(1+\eps)$-spectral approximation of $\bA_1$, and $\widehat{\bA}_1$ is a $2$-spectral approximation of $\bA_1$.
Now, conditioning on that $\widehat{\bA}_{i-1}$ is a $2$-spectral approximation of $\bA_{i-1}$.
Thus, the eigenvalues of $\widehat{\bA}_{i-1}^T \widehat{\bA}_{i-1}+\lambda \mathbf{I}$ and $\bA_{i-1}^T \bA_{i-1}+\lambda \mathbf{I}$ are within a constant fraction.
Then, by our construction, we have
\[\widetilde{\ell}_i = \O{1} \cdot \mathbf{a}_i^T\left(\widehat{\bA}_{i-1}^T \widehat{\bA}_{i-1}+\lambda \mathbf{I}\right)^{-1} \mathbf{a}_i \ge \mathbf{a}_i^T\left(\bA_{i-1}^T\bA_{i-1}+\lambda \mathbf{I}\right)^{-1} \mathbf{a}_i.\]
Now, consider the next arrival $\ba_i$, by the guarantee of \thmref{thm:sum:ridge}, the output $\widetilde{\bA}_i$ of the online procedure is still a $(1+\eps)$-approximation with probability $1 - \frac{1}{\poly(m)}$.
Condition on this event, by the analysis in \thmref{thm:graph:streaming}, the output $\widehat{\bA}_i$ of our streaming algorithm is a $2$-approximation with probability $1 - \frac{1}{\poly(m)}$.
Following this induction, we prove that $\widetilde{\bA}_i$ is a $(1+\eps)$-approximation for each $i \in [m]$.
Since for each step, the success probability is at least $1 - \frac{1}{\poly(m)}$, we can union bound across the $m$ arrivals.

Next, we bound the number of sampled rows.
From our induction, we also prove that $\widehat{\bA}_i$ is a $(1+\eps)$-approximation for each $i \in [m]$ with probability $1 - \frac{1}{\poly(m)}$. 
By our construction of the scores $\widetilde{\ell}_i$, we have
\[\widetilde{\ell}_i = \O{1} \cdot \mathbf{a}_i^T\left(\widehat{\bA}_{i-1}^T \widehat{\bA}_{i-1}+\lambda \mathbf{I}\right)^{-1} \mathbf{a}_i = \O{1} \cdot  \mathbf{a}_i^T\left(\bA_{i-1}^T\bA_{i-1}+\lambda \mathbf{I}\right)^{-1} \mathbf{a}_i.\]
Then, by the upper bound in \thmref{thm:sum:ridge}, we have $\sum_{i \in [m]} \widetilde{\ell}_i = \O{1}\cdot \sum_{i \in [m]} \widetilde{\ell}_i = \O{n \log n}$.
Moreover, we have $\sum_{i \in [m]} p_i \le c \cdot \sum_{i \in [m]} \widetilde{\ell}_i = \O{\frac{n}{\eps^2}\log m \log n}$.

Next, for each row $\ba_i$, we define $X_i$ as the indicator variable for $\ba_i$ shifted by $p_i$, i.e., $X_i = 1-p_i$ if $\ba_i$ is sampled and $X_i = -p_i$ otherwise.
Thus, $\underset{i-1}{\mathbb{E}}\left[X_i\right] =0$.
Then we define the martingale as $Y_0 =0$ and $Y_i = \sum_{j \in [i]} X_j$ for each $i \in [m]$, which is the difference between the number of sampled rows and $\sum_{i \in [m]} p_i$.
For each $i \in [m]$, we have $|X_i| \le 1$ and $\underset{i-1}{\mathbb{E}}\left[X_i^2\right] = p_i(1-p_i) \le p_i$, so $w_i := \sum_{j \in [i]} \underset{j-1}{\mathbb{E}}\left[X_j^2\right] \le \sum_{j \in [i]} p_j$.
Now, applying the Freedman's inequality (c.f. \thmref{thm:free}) with $\epsilon = \log m \cdot \sqrt{\sum_{i \in [m]} p_i}$ and $\sigma^2 = \sum_{i \in [m]} p_i$, we have
\[\PPr{\left|Y_m\right|> \log m \cdot \sqrt{\sum_{i \in [m]} p_i}} \leq \frac{1}{\poly(m)}.\]
That is, the number of sampled rows is $\O{\frac{n}{\eps^2}\log m \log n}$ with high probability.
\end{proof}

\subsection{Optimizing the Working Memory of Online Hypergraph Sparsification}
In this section, we optimize the working memory required in our online hypergraph sparsifier, which is applied as a black-box in our streaming algorithm for hypergraph sparsification in \algref{alg:framework}.
Recall that both algorithms in \thmref{thm:hg:online} and \thmref{thm:hyper:online} store a sketch matrix obtained by online row sampling the associated graph, which is used to define the sampling probabilities of the hyperedges.
We show that the sketch matrix can be replaced by the output of the streaming algorithm introduced in \secref{sec:graph:stream} with fewer rows.
The construction is based on the subroutine in \thmref{thm:working:memory}.
The following is the improved statement of \thmref{thm:hg:online}.

\begin{theorem}
[Online hypergraph spectral sparsifier]
\thmlab{thm:hg:online:opt:memory}
Given a hypergraph $H=(V,E,w)$ with $n$ vertices, $m$ hyperedges and rank $r$, there exists an online algorithm with $n r\log n \cdot \poly(\log \log m)$ bits of working memory that constructs a $(1+\eps)$-spectral sparsifier with probability $1-\frac{1}{\poly(m)}$ by sampling $\O{\frac{n}{\eps^2}\log n \log m \log r}$ hyperedges. 
\end{theorem}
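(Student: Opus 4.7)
The plan is to replicate the working-memory reduction of \thmref{thm:working:memory} inside \algref{alg:online:hg:sparsifier:opt}. The bottleneck in the $\O{n \log^2 n \log m}$ working-memory bound of \thmref{thm:hg:online} is the sketch matrix $\bM$ maintained via online row sampling on the associated graph's incidence matrix, which carries $\O{n \log n \log m}$ edges. I will replace this sketch with the output of the streaming graph sparsifier of \thmref{thm:graph:streaming}, instantiated with constant accuracy $\eps' = \O{1}$ so that it maintains a $2$-spectral approximation using only $n \cdot \poly(\log\log m)$ edges at every time step.

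My first step is to restructure \algref{alg:online:hg:sparsifier:opt} so that the weighted rows produced at each hyperedge arrival (one for each edge in the clique of $e_t$, with weight given by $\textsc{GetWeightAssignment}$) are fed into the merge-and-reduce pipeline \algref{alg:framework} on the associated graph, rather than directly into the online-row-sampling routine of \thmref{thm:ol:row}. The output $\widehat{\bA}_t$ of this pipeline plays the role of $\bM_t$: a $2$-spectral approximation of $\bZ_t^{1/2} \bA_t$ that determines both the weight assignment for the next hyperedge and the ratios $q_{uv}$ governing hyperedge sampling probabilities. The merge-and-reduce routine itself invokes the online-row-sampling subroutine at the bottom level, and by \thmref{thm:working:memory} that subroutine can compute ridge leverage scores against $\widehat{\bA}_{t-1}$ without blowing up the storage.

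Correctness follows by induction on $t$, mirroring \thmref{thm:working:memory}. Conditioning on $\widehat{\bA}_{t-1}$ being a $2$-spectral approximation of $\bZ_{t-1}^{1/2} \bA_{t-1}$, the quadratic forms $q_{uv}$ computed with $\widehat{\bA}_{t-1}$ are constant-factor overestimates of the true ratios $\ba_{uv}(\bA^\top \bZ \bA)^{-1} \ba_{uv}^\top$; absorbing this constant into $\rho$ keeps $p_{e_t}$ a valid overestimate in the sense of \defref{def:sampling:prob}. The independence argument of \lemref{lem:indep} is unaffected, so \thmref{thm:correctness} delivers a $(1+\eps)$-spectral sparsifier with probability $1 - \frac{1}{\poly(m)}$, and the sample-complexity calculation of \lemref{lem:sample:ol} still yields $\O{\frac{n}{\eps^2}\log n \log m \log r}$ sampled hyperedges since the scores inflate by at most a constant.

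The main obstacle is the cyclic dependence between $\widehat{\bA}_t$ and $\bZ_t$: the weights $\bZ_t$ produced by $\textsc{GetWeightAssignment}(\widehat{\bA}_{t-1}, e_t)$ dictate which rows are streamed into the sketch, which in turn determines $\widehat{\bA}_t$. I would control this via a Freedman-style martingale (\thmref{thm:free}) as in \thmref{thm:working:memory}, closing the induction by union-bounding the $2$-spectral invariant over the $\O{m r^2}$ edges streamed into the associated-graph sparsifier. The final working-memory bound of $nr \log n \cdot \poly(\log\log m)$ bits then decomposes into $n \log n \cdot \poly(\log\log m)$ bits for the sparsifier $\widehat{\bA}$ together with $\O{r \log n}$ scratch bits to hold the current hyperedge $e_t$ and its induced clique during processing.
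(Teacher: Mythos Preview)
Your proposal is correct and follows essentially the same approach as the paper: replace the online-row-sampling sketch $\bM$ in \algref{alg:online:hg:sparsifier:opt} by the merge-and-reduce graph sparsifier of \thmref{thm:graph:streaming} (with constant accuracy), use \thmref{thm:working:memory} to close the loop so that the sketch's own output defines the next online sampling probability, and then re-run the decoupling argument of \lemref{lem:indep} to retain the independence hypothesis needed for \thmref{thm:correctness}. Two small imprecisions worth fixing: in \algref{alg:online:hg:sparsifier:opt} the ratios $q_{uv}$ are computed against the \emph{updated} sketch $\widehat{\bA}_t$ (after inserting the clique of $e_t$), not $\widehat{\bA}_{t-1}$; and the scratch needed to hold the balanced weights $\{z_{uv}: u,v\in e_t\}$ is $\O{r^2}$ numbers rather than $\O{r}$, though this still sits comfortably inside the stated $nr\log n\cdot\poly(\log\log m)$ budget since $r\le n$.
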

\begin{proof}
Recall that we maintain a matrix $\bM$ that is a $2$-approximate spectral approximation of the re-weighted incidence matrix $\bZ_t^{1/2} \bA_t$ in \algref{alg:online:hg:sparsifier:opt}. 
We slightly change the procedure to define $\bM$: when a hyperedge $e_{t+1}$ arrives, we run $\textsc{GetWeightAssignment}(\bM, e)$ to obtain the weight assignment vector $z_{t+1}$; Then, for each edge $uv$ in the clique of $e_{t+1}$, we sample $\ba_{uv} \cdot \sqrt{z_{t+1,uv}}$ by online row sampling and push it to the merge-and-reduce data structure if sampled (see \figref{fig:merge:reduce}); then we set $\bM$ to be the weighted matrix obtained by the merge-and-reduce procedure.

By the guarantee of \thmref{thm:graph:streaming}, sampling $n \poly(\log \log m)$ rows to $\bM$ by the merge-and-reduce process still gives a $2$-approximation to $\bZ_t^{1/2} \bA_t$.
Moreover, we can optimize the working memory by the subroutine in \thmref{thm:working:memory}.
That is, we can use $\bM$ itself to define the online sampling probability for the next row in matrix $\bZ^{1/2} \bA$.

Then, it suffices to show that the updated procedure still satisfies the independence guarantees in the conditions of \thmref{thm:correctness}.
The analysis follows from the decoupling technique in \lemref{lem:indep}.
The above procedure that constructs the sketch matrix $\bM$, which includes the online row sampling process and the merge-and-reduce process, and the sampling of the hyperedges are separate procedures with independent inner randomness.
So, if we fix the inner randomness in the online row sampling process and the merge-and-reduce process in the construction of $\bM$, then the sampling probabilities are fixed.
Then, given these fixed sampling probabilities, the sampling of the hyperedges can be viewed as an offline procedure, and for each $t$, the sampling of $e_t$ is independent of whether the previous hyperedges are sampled.
\end{proof}

We next show the improved statement of \thmref{thm:hyper:online}.
\begin{theorem}
[Online hypergraph spectral sparsifier]
\thmlab{thm:hyper:online:opt:memory}
Given a hypergraph $H=(V,E,w)$ with $n$ vertices and rank $r$, there exists an online algorithm that constructs a $(1+\eps)$-spectral sparsifier with probability $1-\frac{\delta}{\poly(m)}$ by sampling $\O{\frac{nr^4}{\eps^2}\log n\log \frac{m}{\delta}}$ hyperedges, using $nr \log n \cdot \poly(\log \log m)$ bits of working memory and $\poly(n)$ update time. 
\end{theorem}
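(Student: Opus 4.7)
The plan is to parallel the optimization used for \thmref{thm:hg:online:opt:memory}, but applied to the fast online algorithm \algref{alg:online:hg:sparsifier} from \secref{sec:fast}. The bottleneck in \lemref{lem:ol:sample} was maintaining the $2$-spectral approximation $\widehat{L_G(t)}$ via the online row sampling scheme of \thmref{thm:ol:row}, which requires $\O{n \log^2 n \log m}$ bits because the sketch explicitly holds $\O{n \log n \log m}$ rows. The goal is to replace this sketch by the output of the streaming graph sparsifier of \thmref{thm:graph:streaming}, which constructs a $(1+\eps)$-spectral sparsifier of the associated graph using only $\O{n \cdot \poly(\log \log m)}$ rows when $\eps$ is chosen to be a constant (sufficient for a $2$-approximation).

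Concretely, I would modify \algref{alg:online:hg:sparsifier} as follows: upon arrival of each hyperedge $e_t$, push the $\binom{|e_t|}{2}$ clique edges of its associated multi-graph contribution into an instance of the streaming graph sparsification algorithm from \secref{sec:graph:stream}, and let $\widehat{L_G(t)}$ denote the Laplacian of the current stored sparsifier. Then use this $\widehat{L_G(t)}$ exactly as in \algref{alg:online:hg:sparsifier} to compute $\widehat{r_{u,v}}=w(e_t)\cdot(\chi_u-\chi_v)^\top\widehat{L_G(t)}^{-1}(\chi_u-\chi_v)$ and set $p_{e_t}=\min\{1,\rho\cdot \max_{u,v\in e_t}\widehat{r_{u,v}}\}$. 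To further eliminate the need to store a separate $\O{n\log^2 n}$-row online-row-sampling sketch inside the streaming graph sparsifier, apply the self-referential construction of \thmref{thm:working:memory}: the running output of the merge-and-reduce tree itself is used as the sketch defining its own next online leverage score, reducing its working memory to $\O{n\log n\cdot\poly(\log\log m)}$ bits.

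Correctness follows from \lemref{lem:ol:cor} essentially unchanged, since the only property of $\widehat{L_G(t)}$ used there is that it is a $2$-spectral approximation of $L_G(t)$ at every time $t$, which guarantees the overestimate $2\widehat{r_{u,v}}\ge r_{u,v}$ needed so that the sampling probabilities dominate the online leverage scores in $\bA$. The sample complexity bound $\O{\frac{nr^4}{\eps^2}\log n \log\frac{m}{\delta}}$ from \lemref{lem:ol:sample} also transfers verbatim, as its proof relies only on online leverage score monotonicity (\lemref{lem:lev:score:monotone}) and the sum bound of \thmref{thm:sum:ol} together with \factref{fac:kappa}. The main subtlety, and what I expect to be the central obstacle, is preserving the independence structure that the chaining argument underlying \thmref{thm:supremum:vz} requires. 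I would handle this via a decoupling argument mirroring \lemref{lem:indep}: the streaming graph sparsifier's construction (online row sampling plus merge-and-reduce, with its own internal Rademacher/Bernoulli bits) is run with randomness independent of the per-hyperedge Bernoulli decisions. Conditioning on the sequence $\widehat{L_G(1)},\ldots,\widehat{L_G(m)}$ fixes the entire sequence of probabilities $p_{e_t}$, after which each hyperedge is sampled independently with its prescribed probability, and \lemref{lem:ol:cor} applies on this conditional event, which itself occurs with probability $1-\frac{1}{\poly(m)}$ by \thmref{thm:graph:streaming}.

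Finally, the working memory consists of the $\O{n\log n\cdot\poly(\log\log m)}$ bits for the self-referential streaming graph sparsifier on the associated graph, plus the overhead of processing each arriving hyperedge's clique; handling the $\binom{r}{2}$-many edges of each hyperedge (including temporary state for the merge-and-reduce insertion and for the quadratic form evaluation $(\chi_u-\chi_v)^\top\widehat{L_G(t)}^{-1}(\chi_u-\chi_v)$) contributes at most an $\O{r}$ multiplicative factor, giving $nr\log n\cdot\poly(\log\log m)$ bits overall. Update time remains $\poly(n)$ since the per-update work is dominated by $\poly(n)$-sized matrix inversions and pseudoinverse applications on $\widehat{L_G(t)}$.
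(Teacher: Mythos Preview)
Your proposal is correct and follows essentially the same approach as the paper: replace the explicit online-row-sampling sketch $\widehat{L_G(t)}$ by the output of the self-referential streaming graph sparsifier of \thmref{thm:graph:streaming}/\thmref{thm:working:memory} applied to the associated graph, then observe that the correctness and sample-complexity analyses of \secref{sec:fast} only require a $2$-spectral approximation at all times. The paper's proof is terser (it simply points to the construction in \thmref{thm:hg:online:opt:memory} and invokes \thmref{thm:graph:streaming} and \thmref{thm:working:memory}), but your more explicit treatment of the decoupling/independence argument and the $\O{r}$ working-memory overhead from per-hyperedge clique processing is consistent with that outline.
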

\begin{proof}
We use the same construction in \thmref{thm:hg:online:opt:memory}.
Recall that we do online row sampling on the incidence matrix $\bA$ and obtain a $2$-approximation $\widehat{L_G(t)}$ to the Laplacian $L_G(t) = \bA^T \bA$ of the associated graph.
Now, when an edge $\ba_i$ is sampled by the online row sampling procedure, we again push it to the merge-and-reduce structure, and use the output to define the next online sampling probability.
By \thmref{thm:graph:streaming} and \thmref{thm:working:memory}, the output is still a $2$-approximation to $\bA$, and hence it suffices for our analysis in \secref{sec:fast}.
\end{proof}

\subsection{Hypergraph Spectral Sparsifier}
For the hypergraph sparsification problem, we apply the offline algorithms introduced by \cite{JambulapatiLS23,Lee23} to merge the coresets in the two child nodes.
\begin{theorem}[Offline algorithm for hypergraph spectral sparsifier]
\thmlab{thm:hyper:offline}
\cite{JambulapatiLS23,Lee23}
Given a hypergraph $H=(V,E,w)$ with $n$ vertices, $m$ hyperedges and rank $r$, there exists an algorithm that constructs a $(1+\eps)$-spectral sparsifier with probability $1-\frac{1}{\poly(m)}$ using $\O{\frac{n}{\eps^2}\log n\log r}$ hyperedges in $\tO{mr}$ time.
\end{theorem}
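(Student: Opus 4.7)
The plan is to combine a $\gamma$-balanced weight assignment on the associated graph with the refined Talagrand chaining analysis of \cite{JambulapatiLS23}, and to implement every step with near-linear-time Laplacian solvers. Specifically, using a greedy weight-shift procedure run in the offline setting, I would assign weights $z_{uv}$ to each clique edge of each hyperedge $e$ such that $\sum_{u,v\in e} z_{uv} = w(e)$ and the ratios $\tau_{uv}(\bZ^{1/2}\bA)/z_{uv}$ are within a constant factor within each hyperedge. Termination of the greedy procedure follows from the ST-potential monotonicity argument developed in \secref{sec:exist:bw}, specialized to the offline case by initializing all hyperedges simultaneously with uniform weights $w(e)/|e|$.

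Next, I would sample each hyperedge $e$ independently with probability
\[p_e \;=\; \min\!\left\{1,\; \rho \cdot \max_{u,v\in e} w(e)\cdot \ba_{uv}(\bA^\top \bZ \bA)^{-1}\ba_{uv}^\top\right\},\]
where $\rho = \O{\frac{1}{\eps^2}\log n \log r}$, and rescale each sampled hyperedge by $1/p_e$. Correctness — that the output is a $(1+\eps)$-spectral sparsifier with probability $1-\frac{1}{\poly(m)}$ — follows from invoking the chaining argument of \cite{JambulapatiLS23} as a black box, since $p_e$ is precisely a group-leverage-score overestimate in the sense of \defref{def:sampling:prob} and the samples are independent in the offline setting. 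The balancedness of $\bZ$ is exactly what allows the chaining bound to replace the naive $\poly(r)$ dependence with $\log r$. For the sample-complexity bound, balancedness also gives $\max_{u,v\in e} q_{uv} = \O{\sum_{u,v\in e} \tau_{uv}(\bZ^{1/2}\bA)}$ per hyperedge, so that
\[\sum_{e\in E} p_e \;\le\; \O{\rho} \cdot \sum_{u,v\in F} \tau_{uv}(\bZ^{1/2}\bA) \;=\; \O{\rho \cdot n} \;=\; \O{\tfrac{n}{\eps^2}\log n \log r},\]
and a Chernoff bound converts this expectation into the claimed high-probability bound on the number of retained hyperedges.

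For the $\tO{mr}$ time bound, which I expect to be the main obstacle, one cannot afford to compute exact leverage scores on the associated graph since it has up to $\Theta(mr^2)$ edges. Instead, I would use Johnson–Lindenstrauss sketching composed with near-linear-time SDD/Laplacian solvers in the style of \cite{SpielmanS08} to obtain constant-factor overestimates of all $\tau_{uv}(\bZ^{1/2}\bA)$ simultaneously; overestimates only inflate the sampling probabilities by a constant. The delicate part is executing the greedy balanced weight-assignment in $\tO{mr}$ rather than $\tO{mr^2}$ time: each hyperedge contributes only a rank-$\O{r}$ update to $\bA^\top \bZ \bA$, and by maintaining sketched approximate leverage scores under these low-rank updates via iterative refinement, the amortized cost per hyperedge can be pushed to $\tO{r}$. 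Combining fast leverage-score maintenance with independent sampling then yields the overall $\tO{mr}$ running time and the $\O{\frac{n}{\eps^2}\log n \log r}$ hyperedge bound.
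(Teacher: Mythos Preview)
This theorem is not proved in the paper; it is quoted from \cite{JambulapatiLS23,Lee23} and used as a black-box offline subroutine inside the merge-and-reduce framework (see \thmref{thm:hyper:streaming} and \thmref{thm:high:prob:stream}). There is therefore no proof in the paper for your proposal to be compared against.

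On the substance of your sketch: the correctness and sample-complexity portions are faithful to the strategy of \cite{JambulapatiLS23}, and invoking the chaining bound as a black box on a $\gamma$-balanced weight assignment is exactly the right move. However, the running-time argument has a real gap. The ST-potential analysis you borrow from \secref{sec:exist:bw} (originating in \cite{KaplralovTY21}) establishes only \emph{finite} termination of the greedy weight-shift procedure, not polynomial termination; indeed the paper itself remarks that there is no guarantee the balanced assignment can be found in polynomial time via that route, which is why \thmref{thm:hyper:streaming} falls back on the \thmref{thm:hyper:online:opt:memory} subroutine for the $\poly(n)$-time variant. The efficient algorithms in \cite{JambulapatiLS23} obtain approximate balanced weights by a different mechanism (an optimization-based computation rather than greedy shifts), so your plan would need to swap in that component. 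Separately, your claim that rank-$\O{r}$ updates plus sketched leverage-score maintenance give $\tO{r}$ amortized cost per hyperedge is asserted without justification; this is precisely where the $\tO{mr}$ bound requires nontrivial work, and a naive implementation on the associated graph would cost $\tO{mr^2}$.
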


Notice that there is no guarantee that we can find a $(\gamma,e)$-balanced weight assignment in \defref{def:ol;gamma:bal} in polynomial time, so using the online algorithm given by \thmref{thm:hg:online:opt:memory} as a subroutine does not imply a fast update time. 
Therefore, we instead apply \thmref{thm:hyper:online:opt:memory} with $\poly(n)$ update time as the online sampling subroutine. This will lose a $\poly(r)$ factor in the online sample complexity; however, it is acceptable since the streaming framework reduces it to $\polylog(r)$. 
We also include the result by applying \thmref{thm:hg:online:opt:memory} in our statement, which has a better space bound.

\begin{theorem}[Streaming algorithm for hypergraph spectral sparsifier]
\thmlab{thm:hyper:streaming}
Given a hypergraph $H=(V,E,w)$ with $n$ vertices, $m$ edges and rank $r$ defined by an insertion-only stream, there exists an algorithm that with probability $1-\frac{1}{\poly(m)}$, constructs a $(1+\eps)$-spectral sparsifier, storing $\frac{n}{\eps^2}\log n \cdot \poly(\log r, \log \log m)$ hyperedges, i.e., $\frac{rn}{\eps^2}\log^2 n \cdot \poly(\log r, \log \log m)$ bits, and using $\poly(n)$ update time. 
There is also an algorithm that stores $\frac{n}{\eps^2}\log n \log r\cdot \poly(\log \log m)$ hyperedges, i.e. $\frac{rn}{\eps^2}\log^2 n \log r\cdot \poly(\log \log m)$ bits, and uses exponential update time.
\end{theorem}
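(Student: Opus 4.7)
The plan is to instantiate the streaming framework of \algref{alg:framework}: feed the stream into an online hypergraph sparsifier, and then apply merge-and-reduce on the (much shorter) sampled substream using the offline construction of \thmref{thm:hyper:offline}. The only nontrivial points are (i) verifying that hypergraph spectral sparsifiers compose decomposably, and (ii) balancing the accuracy parameter $\eps'$ inside the tree against the height $h$ so that the accumulated error is $(1+\eps')^{h} \le 1+\eps$.

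For decomposability, observe that $Q_H(x)=\sum_{e\in E}Q_e(x)$ is additive over hyperedges. Thus if $\widehat{H_1}$ is a $(1+\eps')$-spectral sparsifier of $H_1$ and $\widehat{H_2}$ is a $(1+\eps')$-spectral sparsifier of $H_2$ (with disjoint edge sets), then $\widehat{H_1}\cup\widehat{H_2}$ is a $(1+\eps')$-spectral sparsifier of $H_1\cup H_2$. This is exactly what is needed for the two children of each internal node of the merge-and-reduce tree to be merged, and for the root to be a $(1+\eps'h)$-sparsifier of the entire online substream (hence of the entire stream, by correctness of the online stage).

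For the first algorithm (polynomial update time), we use \thmref{thm:hyper:online:opt:memory} as the online subroutine: it outputs a substream of size $S = \tO{\frac{nr^4}{\eps'^2}\log n \log m}$ and uses $nr\log n \cdot \poly(\log\log m)$ bits of working memory, which we keep throughout. Each internal node of the tree holds a coreset of size $C = \O{\frac{n}{\eps'^2}\log n\log r}$ given by \thmref{thm:hyper:offline}. The tree height is then $h = \O{\log(S/C)} = \O{\log r + \log\log m}$. Setting $\eps' = \Theta\!\left(\eps/h\right)$ guarantees a $(1+\eps)$-sparsifier at the root and yields a stored coreset of size
\[
\O{\tfrac{n}{\eps'^2}\log n\log r}\cdot h \;=\; \tfrac{n}{\eps^2}\log n \cdot \poly(\log r,\log\log m)
\]
hyperedges; multiplying by the $\O{r\log n + \log\log m}$ bits per hyperedge from \remref{rem:bits:hyperedge} gives the claimed bit complexity. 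The update time is polynomial because both the online subroutine and the offline coreset at each node run in $\poly(n)$ time and the tree has $\polylog(m)$ active nodes at any moment. For the second algorithm, we swap in \thmref{thm:hg:online:opt:memory}: the online substream has size $S = \tO{\frac{n}{\eps'^2}\log n\log m \log r}$, so $\log(S/C) = \O{\log\log m}$, giving $h = \O{\log\log m}$, and the final storage becomes $\frac{n}{\eps^2}\log n\log r\cdot \poly(\log\log m)$ hyperedges. The update time degrades to exponential because the balanced weight assignment subroutine of \secref{sec:exist:bw} is not known to terminate in polynomial time.

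The main technical point (not a true obstacle, but the step that most needs care) is the conditioning argument for the failure probability: one must set the per-level failure probability to $\frac{1}{\poly(m)}$, union bound over all $\O{m}$ internal nodes of the merge-and-reduce tree together with the online subroutine's $\frac{1}{\poly(m)}$ failure event, and then verify that the online subroutine's correctness guarantee together with the offline guarantee of \thmref{thm:hyper:offline} can be chained without introducing dependence issues — which is precisely where the independence of the online sampling decisions (as established for the online stage via the decoupling argument analogous to \lemref{lem:indep}) is used.
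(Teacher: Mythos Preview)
Your proposal is correct and follows essentially the same approach as the paper: instantiate \algref{alg:framework} with the appropriate online subroutine (\thmref{thm:hyper:online:opt:memory} or \thmref{thm:hg:online:opt:memory}) followed by merge-and-reduce using \thmref{thm:hyper:offline}, compute the tree height $h=\O{\log(|\calS'|/|C|)}$, set $\eps'=\Theta(\eps/h)$, and union bound the $\frac{1}{\poly(m)}$ failure events. Your final paragraph slightly overcomplicates the failure analysis---the paper simply union bounds across all times without invoking any decoupling argument at this stage, since the independence issue from \lemref{lem:indep} is already resolved internally within the online subroutines---but this is harmless extra caution rather than an error.
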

\begin{proof}
We have $|\calS'| = \O{\frac{nr^4}{\eps'^2}\log n\log m}$ by \thmref{thm:hyper:online:opt:memory} and $|C| = \Theta\left(\frac{n}{\eps'^2}\log n\log r\right)$ by \thmref{thm:hyper:offline}.
Thus, the height $h = \O{\log \frac{|\calS'|}{|C|}} = \O{\log r + \log \log m}$.
Taking $\eps' = \eps/h$, our total space usage is at most $\O{|C| \cdot h} = \O{\frac{n}{\eps'^2}\log n\log r \cdot (\log r + \log \log m)^3}$ words.

Each time a hyperedge arrives, the online algorithm needs $\poly(n)$ time to process by \thmref{thm:hyper:online:opt:memory}.
In addition, we at most need to merge $2h$ coresets each with $m' = |C|$ edges, which takes $\tO{m'r} \cdot h = \poly(n)$ update time by \thmref{thm:hyper:offline}.
Thus, we need $\poly(n)$ update time in total.

If we use \thmref{thm:hg:online:opt:memory} instead, then $|S'| = \O{\frac{n}{\eps^2}\log n \log m \log r}$ and $h = \O{\log \log m}$. 
Hence, we need $\O{|C| \cdot h} = \O{\frac{n}{\eps^2}\log n \log r \cdot \log \log^3 m}$ words of memory in total.

Notice that the failure probabilities of both the offline and the online subroutines are $\frac{1}{\poly(m)}$, thus, the failure probability of our streaming algorithm follows by a union bound across all times.
\end{proof}

\subsection{Hypergraph Spectral Sparsifier - High Probability}
\seclab{sec:high:prob}
In this section, we provide a streaming algorithm that succeeds with probability at least $1-\frac{\delta}{\poly(m)}$. 
In our online algorithm in \thmref{thm:hyper:online}, we lose a $\log \frac{1}{\delta}$ factor in the sample complexity to boost the failure probability to $\delta$, which may be prohibitively large if $\delta = \frac{1}{c^n}$. 
With the streaming framework of \cite{Cohen-AddadWZ23}, we reduce it to a $\log \log \frac{1}{\delta}$ factor.

\begin{theorem}[Small failure probability]
\thmlab{thm:high:prob:stream}
Given a hypergraph $H=(V,E,w)$ with $n$ vertices, $m$ hyperedges, and rank $r$ defined by an insertion-only stream, there exists an algorithm that constructs a $(1+\eps)$-spectral sparsifier with probability $1-\frac{\delta}{\poly(m)}$ storing $\frac{n}{\eps^2}\log n \cdot \poly(\log r, \log \log \frac{m}{\delta})$ hyperedges, i.e., $\frac{rn}{\eps^2}\log^2 n \cdot \poly(\log r, \log \log \frac{m}{\delta})$ bits.
\end{theorem}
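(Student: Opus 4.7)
The plan is to instantiate the streaming framework of \algref{alg:framework} by using the fast online hypergraph sparsifier of \thmref{thm:hyper:online:opt:memory}, run with failure parameter $\frac{\delta}{\poly(m)}$, as the online subroutine, together with the merge-and-reduce procedure driven by the offline coreset construction of \thmref{thm:hyper:offline}. The reason this yields a doubly-logarithmic dependence on $\frac{1}{\delta}$, rather than the logarithmic dependence of a naive union bound, is that the online algorithm already compresses the stream to an intermediate set $\calS'$ of size $\O{\frac{nr^4}{\eps'^2}\log n\log\frac{m}{\delta}}$, so the merge-and-reduce tree is built on top of $\calS'$ and has height $h=\O{\log(|\calS'|/|C|)}$ rather than $\O{\log m}$. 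The $\log\frac{m}{\delta}$ factor inside $|\calS'|$ therefore enters the final space bound only through this logarithmic height $h$.

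Quantitatively, with $|C|=\Theta(\frac{n}{\eps'^2}\log n\log r)$ from \thmref{thm:hyper:offline}, the tree height satisfies $h=\O{\log r+\log\log\frac{m}{\delta}}$. Setting $\eps'=\Theta(\eps/h)$ ensures the multiplicative distortion across the $h$ merge levels is $(1+\O{\eps'})^h\le 1+\eps$, yielding a valid $(1+\eps)$-spectral sparsifier. The total number of hyperedges stored in the data structure is $\O{|C|\cdot h}$, which simplifies to $\frac{n}{\eps^2}\log n\cdot\poly(\log r,\log\log\frac{m}{\delta})$; converting via \remref{rem:bits:hyperedge} gives $\frac{rn}{\eps^2}\log^2 n\cdot\poly(\log r,\log\log\frac{m}{\delta})$ bits, matching the statement. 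The working memory of the online subroutine, $nr\log n\cdot\poly(\log\log m)$ bits, is dominated by the merge-and-reduce storage.

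For the failure probability, I would assign each of the $N=\O{|\calS'|/|C|}=\poly(r,\log\frac{m}{\delta})$ internal tree nodes an individual failure probability of $\frac{\delta}{N\poly(m)}$, then union bound over all nodes together with the online subroutine to obtain the overall $\frac{\delta}{\poly(m)}$ guarantee. Since \thmref{thm:hyper:offline} succeeds with probability $1-\frac{1}{\poly(m)}$ for an arbitrarily large polynomial and the chaining argument's concentration absorbs such a strengthening into constants, using the same $|C|$ at every node suffices.

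The part I expect to require the most care is verifying that the $\log\frac{1}{\delta}$ overhead truly stays confined to the online stage. If boosting the offline coreset of \thmref{thm:hyper:offline} from $\poly(m)$-strong to $\frac{\delta}{N\poly(m)}$-strong concentration genuinely incurred an additive $\log\frac{1}{\delta}$ factor in $|C|$, then $h^2$ would couple with that factor and the final bound would regress from $\log\log\frac{m}{\delta}$ to $\log\frac{m}{\delta}$. The resolution is that the Talagrand-style chaining arguments of \cite{JambulapatiLS23,Lee23} produce concentration of the form $\exp(-\Omega(\text{sample count}))$, so the existing $\log n$ in $|C|$ already has room to absorb a $\log\frac{N\poly(m)}{\delta}=\O{\log\frac{m}{\delta}}$ union-bound overhead without an additional multiplicative $\log\frac{1}{\delta}$ blow-up; an analogous slack is visible in the concentration inequality used inside \lemref{lem:ol:cor}, which already displays an explicit $\log\frac{m}{\delta}$ dependence.
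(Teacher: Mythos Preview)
Your overall framework (online compression followed by merge-and-reduce, with the height picking up only $\log\log\frac{m}{\delta}$) matches the paper. The divergence is in how the offline per-node coreset is handled, and this is exactly where your argument breaks.

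You keep the randomized offline coreset of \thmref{thm:hyper:offline} and argue that its concentration is of the form $\exp(-\Omega(\text{sample count}))$, so that the existing $\log n$ factor in $|C|$ absorbs a $\log\frac{m}{\delta}$ union-bound overhead. That concentration claim is not correct. From \thmref{thm:correctness}, the oversampling factor is $\rho=\Theta(\frac{1}{\eps^2}\log m'\log r)$ to obtain failure $\frac{1}{\poly(m')}$ on an $m'$-hyperedge input, while the sample count is $\rho\cdot\Theta(n)$. Hence the tail is $\exp(-\Omega(\rho\eps^2/\log r))$, not $\exp(-\Omega(\rho n))$. Boosting each merge-and-reduce node to failure $\frac{\delta}{\poly(m)}$ therefore forces $\rho=\Omega(\frac{1}{\eps'^2}\log\frac{m}{\delta}\log r)$, i.e., $|C|=\Omega\bigl(\frac{n}{\eps'^2}\log\frac{m}{\delta}\log r\bigr)$. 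The resulting total storage $|C|\cdot h$ then carries a full $\log\frac{m}{\delta}$ factor, and you no longer get $\poly(\log\log\frac{m}{\delta})$.

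The paper avoids this entirely by replacing the randomized offline subroutine with a \emph{deterministic} one: since \thmref{thm:hyper:offline} succeeds with positive probability, a sparsifier of size $\O{\frac{n}{\eps'^2}\log n\log r}$ always exists, and one can brute-force over all $\binom{m'}{\O{n\eps'^{-2}\log n\log r}}$ candidate subsets and verify against the chaining net of \cite{JambulapatiLS23}. This deterministic coreset never fails, so the \emph{only} source of error is the online stage of \thmref{thm:hyper:online:opt:memory}, which already achieves probability $1-\frac{\delta}{\poly(m)}$ with the $\log\frac{m}{\delta}$ cost confined to $|\calS'|$ and hence to $h$. The price is losing the $\poly(n)$ update-time guarantee, which the paper explicitly notes. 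Your proposal, as written, does not recover the stated space bound; the deterministic coreset is the missing ingredient.
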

\begin{proof}
First, we describe a deterministic offline algorithm that constructs a $(1+\eps)$-hypergraph spectral sparsifier. It loses more $\poly(n)$ factors at runtime while it does not fail. Since \thmref{thm:hyper:offline} gives a randomized algorithm that finds the sparsifier with $\O{\frac{n}{\eps^2} \log n \log r}$ hyperedges given any hypergraph with non-zero probability, there must exist such a sparsifier. Let $m$ denote the total number of hyperedges. We simply traverse through all possible groups of $\O{\frac{n}{\eps^2} \log n \log r}$ hyperedges, where there are $\binom{m}{\O{\frac{n}{\eps^2} \log n \log r}}$ of them. For each group, we test it on the net of points $x \in \mathbb{R}^n$ given by the chaining argument in \cite{JambulapatiLS23} and report this group of hyperedges if it successfully approximates the energy of the hypergraph $Q_H(x)$. 

Then, we use the online algorithm in \thmref{thm:hyper:online:opt:memory} with failure probability $\frac{\delta}{\poly(m)}$ to construct the stream $\calS'$ and the offline deterministic algorithm mentioned above to construct the coresets. 
The calculation of space complexity follows from the same argument in \thmref{thm:hyper:streaming}.
Here, we have $|\calS'| = \O{\frac{nr^4}{\eps'^2}\log n\log \frac{m}{\delta}}$ by \thmref{thm:hyper:online:opt:memory} and $|C| = \Theta\left(\frac{n}{\eps'^2}\log n\log r\right)$ by \thmref{thm:hyper:offline}.
Thus, the height $h = \O{\log \frac{|\calS'|}{|C|}} = \O{\log r + \log \log \frac{m}{\delta}}$.
Taking $\eps' = \eps/h$, our total space usage is at most $\O{|C| \cdot h} = \O{\frac{n}{\eps'^2}\log n\log r \cdot (\log r + \log \log \frac{m}{\delta})^3}$ words.
Unfortunately, we do not have the $\poly(n)$ update time guarantee due to the offline deterministic algorithm.

Notice that the failure probability of the online subroutines is $\frac{\delta}{\poly(m)}$, and the offline subroutine is deterministic, thus, the failure probability of our streaming algorithm follows by a union bound across all times.
\end{proof}

\subsection{Adversarially Robust Hypergraph Sparsification}

In this section, we apply the result in \secref{sec:high:prob} to achieve an adversarially robust streaming algorithm.
The adversarially robust model can be captured by the following two-player game between a streaming algorithm $\calP$ and an adversary $\calA$ that produces adaptive inputs to $\calP$. 
Given a query function $\calQ$, the game proceeds over $m$ rounds, and in the $t$-th round:
\begin{enumerate}
\item
$\calA$ determines an input $s_t$, which possibly depends on previous outputs from $\calP$.
\item
$\calP$ processes $s_t$ and outputs its answer $Z_t$ to the query function $\calQ$.
\item
$\calA$ receives and records the response $Z_t$.
\end{enumerate}
The goal of $\calP$ is to produce a correct answer $Z_t$ to the query function $\calQ$ based on the previously arrived data stream $\{s_1,\ldots,s_t\}$ sent by the adaptive adversary $\calA$, at all times $t \in [m]$.  
We now provide an adversarially robust streaming algorithm for hypergraph sparsification.

We begin with the definition of the $\eps$-flip number, which upper bounds the number of multiplicative increments of the output of a streaming algorithm.

\begin{definition}[$\eps$-flip number]
Let $\eps \geq 0$ and let $y=\left(y_0, y_1, \ldots, y_m\right)$ be a sequence of real numbers. Then, the $\eps$-flip number $\lambda_{\eps,m}(y)$ of the sequence $y$ is the maximum $k$ for which there exists $0 \leq i_1<\ldots<i_k \leq m$ so that $y_{i_{j-1}} \notin(1 \pm \eps) y_{i_j}$ for every $j=2,3, \ldots, k$.
In particular, for a function $g: \mathbb{R}^n \to \mathbb{R}$ and a class of data stream $\mathcal{S} \subset [n]^m$, the $(\eps, m)$-flip number $\lambda_{\eps, m}(g)$ of $g$ over $\mathcal{S}$ is the maximum $\eps$-flip number of the sequence $\bar{y}=\left(y_0, y_1, \ldots, y_m\right)$ defined by $y_t=g\left(s_1,\ldots, s_t\right)$, over all choices of data streams $S = (s_1,\ldots, s_m) \in \mathcal{S}$.
\end{definition}

\cite{BenEliezerJWY20} introduces a framework for vector-based problems, i.e., one computes a target function $g$ on the frequency vector induced by the data stream, which transforms any non-robust streaming algorithm to an adversarially robust streaming algorithm. 
The core idea is that we only change the estimate $\widehat{g}$ when it increases by an $\eps$-fraction, so if the $\eps$-flip number is small, then the total number of input streams that we need to handle is also relatively small, and we can union bound across them by setting a sufficiently small failure probability $\delta$ for the non-robust streaming algorithm.
We adapt their framework to solve the sparsification problems.

\paragraph{Robust Graph and Hypergraph Sparsification.}
We start by introducing an adversarially robust streaming algorithm for graph sparsification. 
A challenge in our problem is that our output is a sparsified graph that preserves the energy of all vectors, which is not a real number, so we need to define a proper way to change the output.
Recall that the energy of a vector $x$ on $G$ is the quadratic form $x^T L_G x$, where $L_G$ is the graph Laplacian.
Thus, we use the eigenvalues of $L_G$ to decide whether we change the output.
We state the formal algorithm and its guarantees as follows.

\begin{theorem}[Robust graph sparsification]
\thmlab{thm:robust:graph}
Given a (multi-)graph $G=(V,E,w)$ with $n$ vertices and $m$ edges defined by an insertion-only stream, there exists an adversarially robust algorithm that constructs a $(1+\eps)$-spectral sparsifier with probability $1-\frac{\delta}{\poly(m)}$ storing $\frac{n}{\eps^2} \cdot \poly(\log n, \log \log \frac{m}{\delta})$ edges, i.e., $\frac{n}{\eps^2} \cdot \poly(\log n, \log \log \frac{m}{\delta})$ bits.
\end{theorem}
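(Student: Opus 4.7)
The plan is to apply the $\eps$-flip-number framework of Ben-Eliezer et al.\ in the exact same way as the paper's robust hypergraph result: lift a non-robust high-probability streaming algorithm to the adversarial setting by choosing its internal failure probability small enough to union-bound over all possible adversarial ``epochs''. The first step is to upper bound the $\eps$-flip number $\lambda$ of graph spectral sparsification. Since edge weights are integers in $[\poly(n)]$ and at most $m$ insertions arrive, every eigenvalue of the Laplacian $L_G$ is non-decreasing and lies in $[1,\poly(nm)]$ throughout the stream, so each of the $n$ eigenvalues undergoes at most $\O{\frac{\log(nm)}{\eps}}$ multiplicative $(1+\eps)$-jumps; summing across eigenvalues yields $\lambda=\O{\frac{n\log(nm)}{\eps}}$, exactly as quoted in the paper for the hypergraph case.

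The second step is to establish a graph analogue of \thmref{thm:high:prob:stream}: a non-robust streaming algorithm producing a $(1+\eps)$-spectral sparsifier with failure probability $\delta'/\poly(m)$ and using $\frac{n}{\eps^2}\cdot\poly(\log n,\log\log\frac{m}{\delta'})$ bits. I would build it by plugging the online graph sparsifier \thmref{thm:graph:online} (with failure probability boosted to $\delta'/\poly(m)$ via the standard $\log(1/\delta')$-oversampling of online row sampling) into the merge-and-reduce framework \algref{alg:framework}, using the deterministic BSS coreset of \thmref{thm:g:offline} as the reduce step and the sketch-memory optimization of \thmref{thm:working:memory} for working space. The computation mirrors \thmref{thm:graph:streaming} except that now $|\calS'|=\O{\frac{n}{\eps'^2}\log n\log\frac{m}{\delta'}}$, the merge tree has height $h=\O{\log\log n+\log\log\frac{m}{\delta'}}$, and the choice $\eps'=\eps/h$ gives the claimed bit bound.

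The third step closes the loop by running this high-probability algorithm with failure probability $\delta'=\frac{\delta}{\poly(m)\cdot\binom{m}{\lambda}\cdot n^{\O{\lambda}}}$, which is exactly what the BJWY-style framework demands. For this choice, $\log(1/\delta')=\O{\lambda\log(nm)+\log(1/\delta)}=\O{\frac{n\log^2(nm)}{\eps}+\log\frac{1}{\delta}}$, hence $\log\log\frac{m}{\delta'}=\O{\log n+\log\log\frac{m}{\delta}+\log\frac{1}{\eps}}$; substituting into the bit bound from the second step gives $\frac{n}{\eps^2}\cdot\poly(\log n,\log\log\frac{m}{\delta})$ bits, matching the theorem.

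The main obstacle I expect is justifying that the BJWY framework, which is phrased for scalar-valued functions, really applies to our matrix-valued spectral-sparsifier output. The cleanest fix is to run the framework separately on each of the $n$ Laplacian eigenvalues and to only refresh the reported sparsifier when at least one eigenvalue has jumped by a $(1+\eps)$-factor. Between refreshes, the Courant--Fischer min-max characterization together with insertion-only monotonicity ensures that the previously reported sparsifier is still a $(1+\O{\eps})$-spectral approximation of the current $L_G$, so replaying the old output during a stable epoch does not break correctness; caching that single sparsifier between refreshes costs only $\frac{n}{\eps^2}\polylog(n)$ bits on top of the running sketch, preserving the stated space bound.
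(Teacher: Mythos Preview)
Your proposal is correct and takes essentially the same approach as the paper: bound the flip number via eigenvalue monotonicity, invoke a high-probability non-robust streaming sparsifier, and apply the BJWY framework by freezing the output between eigenvalue jumps. The only cosmetic difference is that the paper invokes \thmref{thm:high:prob:stream} directly (specialized to $r=2$) rather than re-deriving a graph-specific high-probability result, and it uses $T^\lambda$ (with $\log T=\O{\log n+\log\log m}$ bits of precision) in place of your $n^{\O{\lambda}}$ when counting adversarial epochs.
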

\begin{proof}
First, we bound the $\eps$-flip number in our problem, which is the number of times that $L_{G'} \succeq (1+\eps) \cdot L_G$, where $G'$ is a graph later in the data stream. 
Recall that our stream is insertion-only, so all the eigenvalues of $G$ are increasing.
Note that the flip does not occur if none of the eigenvalues of $G$ increases by an $\eps$-fraction.
Here, without loss of generality, we assume that all eigenvalues are non-zero, since we can consider it a flip when the eigenvalue first becomes non-zero, and there are at most $n$ such flips.
Assuming that all edge weights are within $\poly(n)$, the eigenvalues at the end of the stream are upper bounded by $m \poly(n)$, and so each eigenvalue can increase by an $\eps$-fraction for at most $\frac{\log m}{\eps}$ times.
Thus, the $\eps$-flip number is at most $\lambda_{\eps,m}(G) = \frac{n \log m}{\eps}$.

We next state the formal algorithm for robust graph sparsification. We run a non-robust streaming algorithm with parameters $\eps' = \frac{\eps}{8}$ and $\delta' = \frac{\delta}{\poly(m) \cdot \binom{m}{\lambda} T^{\lambda}}$, where $\lambda := \lambda_{\frac{\eps}{8},m}(G)$ and $\log T$ are the bits of precision.
For the sequence of outputs $\widehat{G}_{1}, \ldots, \widehat{G}_{m}$, let $\calG_1 = \widehat{G}_{1}$, we set $\calG_{t} = \calG_{t-1}$ when all eigenvalues of $L_{\widehat{G}_{t}}$ are within a factor of $(1+\frac{\eps}{8})$ of that of $L_{\calG_{t-1}}$, otherwise we set $\calG_{t} = \widehat{G}_t$.
The output to the adversary is the sequence $\calG_1, \ldots \calG_m$.

We prove the correctness of the algorithm as follows.
Consider a fixed time $t$ when we change the output, and let $t'$ be any time after $t$ before the next output change.
Note that $\widehat{G}_{t'}$ is a $\frac{\eps}{8}$-sparsifier of $G_{t'}$ and all eigenvalues of $L_{\widehat{G}_{t'}}$ are within a factor of $(1+\frac{\eps}{8})$ of that of $L_{\widehat{G}_{t}}$, so all eigenvalues of $L_{G_{t'}}$ are within a factor of $(1+\frac{\eps}{2})$ of that of $L_{G_{t}}$.
This implies that $L_{G_{t'}} \preceq (1+\frac{\eps}{2}) \cdot L_{G_t}$, and so $x^\top L_{G_{t'}} x \le (1+\frac{\eps}{2}) \cdot  x^\top L_{G_{t}} x$ for all vectors $x \in \mathbb{R}^{n}$.
Therefore, $\calG_{t'} = \widehat{G}_{t}$ is an $\eps$-sparsifier of $G_{t'}$.
Moreover, we can assume the adversary to be deterministic (see the proof of Lemma 3.5 in \cite{BenEliezerJWY20}.
Then, the number of output sequences is at most $\binom{m}{\lambda} T^{\lambda}$ and they at most determine $\binom{m}{\lambda} T^{\lambda}$ choices of input streams.
Since we set the failure probability as $\delta' = \frac{\delta}{\poly(m) \cdot \binom{m}{\lambda} T^{\lambda}}$, we can union bound across all choices of input streams, ensuring the correctness of our algorithm.

By \thmref{thm:high:prob:stream}, we have that the non-robust streaming algorithm with parameters $\eps,\delta$ stores $\frac{n}{\eps^2}\log n \cdot \poly(\log \log \frac{m}{\delta})$ edges.
Therefore, since we require $\log T = \O{\log n + \log\log m}$ bits of precisions, the robust algorithm stores $\frac{n}{\eps'^2}\log n \cdot \poly( \log \log \frac{m}{\delta'}) = \frac{n}{\eps^2}\log^2 n \cdot \poly( \log \log \frac{m}{\delta})$ edges.
\end{proof}

Next, we apply the above result to construct a robust hypergraph sparsification algorithm.
Note that the energy of a vector $x$ on a hyperedge $e$ is defined as the maximum energy of $x$ on each edge $u,v$ in the clique of $e$, so we cannot directly define the $\eps$-flip number by the sparsified hypergraph.
Instead, we run a separated robust subroutine that constructs a sparsifier for the associated graph with higher accuracy, then we decide whether to change the output by the eigenvalues of the sparsified associated graph.
We introduce the formal algorithm and analyze its guarantees as follows.

\begin{theorem}[Robust hypergraph sparsification]
Given a graph $H=(V,E,w)$ with $n$ vertices, $m$ edges, and rank $r$ defined by an insertion-only stream, there exists an adversarially robust algorithm that constructs a $(1+\eps)$-spectral sparsifier with probability $1-\frac{\delta}{\poly(m)}$ storing $\frac{n}{\eps^2}\poly(\log n, \log r, \log \log \frac{m}{\delta})$ hyperedges and $\frac{nr^4}{\eps^2} \cdot \poly(\log n, \log \log \frac{m}{\delta})$ edges in the associated graph, i.e., $\frac{nr^5}{\eps^2} \cdot \poly(\log n, \log r, \log \log \frac{m}{\delta})$ bits in total.
\end{theorem}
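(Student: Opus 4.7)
The plan is to adapt the flip-based framework of \thmref{thm:robust:graph} to hypergraphs by delegating the ``clock'' to the associated graph. I would run two parallel subroutines: (i) the non-robust streaming hypergraph sparsifier of \thmref{thm:high:prob:stream} with accuracy $\eps/16$ and failure probability $\delta'$ (to be set below), producing a sequence $\widehat{H}_1,\ldots,\widehat{H}_m$; and (ii) the adversarially robust graph sparsifier of \thmref{thm:robust:graph} applied to the associated graph $G$ of $H$ with accuracy $\eps'' = \Theta(\eps/r^2)$, producing a robust sequence $\widehat{G}_1,\ldots,\widehat{G}_m$. The reported sparsifier $\calH_t$ is initialized as $\widehat{H}_1$ and is updated only at \emph{flip times}, i.e., times $t$ at which some eigenvalue of $L_{\widehat{G}_t}$ exceeds the corresponding eigenvalue at the previous flip by a $(1+\eps'')$-factor (at which point I also re-latch the associated-graph snapshot). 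Splitting the stream by the $\lceil\log r\rceil$ rank classes (hyperedges of size in $[2^i,2^{i+1})$) and running independent instances per class lets me assume all hyperedges have size in $[r/2,r]$ at the cost of a $\log r$ factor, so that the sandwich bound from \cite{BansalST19} applies.

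For correctness, fix two consecutive flip times $t<t'$. By construction no eigenvalue of $L_{\widehat{G}_{t'}}$ exceeds the latched value by more than $(1+\eps'')$, which combined with the two-sided guarantee of the robust graph sparsifier yields $L_{G_{t'}} - L_{G_t} \preceq O(\eps'')\cdot L_{G_t}$. Using the elementary bound $\max_{u,v\in e}(x_u-x_v)^2 \le \sum_{u,v\in e}(x_u-x_v)^2$ on each new hyperedge and then the sandwich $Q_{H_t}(x)\ge \tfrac{2}{r(r-1)}\, x^\top L_{G_t} x$ of \cite{BansalST19}, the monotone energy increment satisfies
\[
 Q_{H_{t'}}(x) - Q_{H_t}(x) \;\le\; x^\top\bigl(L_{G_{t'}}-L_{G_t}\bigr)x \;\le\; O(\eps''\, r^2)\cdot Q_{H_t}(x) \;=\; O(\eps)\cdot Q_{H_t}(x),
\]
by the choice of $\eps''$. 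Since $Q_{H_{t'}}(x)\ge Q_{H_t}(x)$ trivially, the latched snapshot $\calH_{t'}=\widehat{H}_t$---which $(1\pm\eps/16)$-approximates $Q_{H_t}$---therefore $(1\pm\eps)$-approximates $Q_{H_{t'}}$ for every $x\in\mathbb{R}^n$, giving the spectral sparsifier guarantee at every round.

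It remains to bound the $\eps''$-flip number and account for the union bound over adversarial streams. Since $L_G$ has at most $n$ eigenvalues, each monotonically non-decreasing under insertions and bounded by $\poly(m,n)$, the flip number is $\lambda = O(n r^2 \log m / \eps)$. Following the deterministic-adversary reduction from the proof of \thmref{thm:robust:graph}, setting $\delta' = \delta/(\poly(m)\cdot\binom{m}{\lambda} T^{\lambda})$ with $\log T = O(\log n + \log\log m)$ bits of precision suffices, so $\log\log(m/\delta') = \poly(\log n,\log r,\log\log(m/\delta))$. Feeding this into the $\log\log$ dependence of \thmref{thm:high:prob:stream} makes the hypergraph subroutine store $\frac{n}{\eps^2}\poly(\log n,\log r,\log\log(m/\delta))$ hyperedges, while by \thmref{thm:robust:graph} the associated-graph subroutine stores $\frac{n}{(\eps'')^2}\poly(\cdots) = \frac{nr^4}{\eps^2}\poly(\log n,\log\log(m/\delta))$ edges, yielding the stated total bit complexity. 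The principal obstacle is the asymmetric $\Theta(r)$ gap between the upper and lower bounds of the Bansal--Soma--Tardos sandwich: it is what forces $\eps'' = \Theta(\eps/r^2)$ and hence the $r^4$ blow-up in the associated-graph edge count, because without this tight spectral control of $L_G$ a single latched snapshot could only certify an $O(r)$-approximation of $Q_H$ rather than a $(1+\eps)$-approximation.
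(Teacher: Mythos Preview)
Your proposal is correct and follows essentially the same approach as the paper: run the robust graph sparsifier of \thmref{thm:robust:graph} on the associated graph with accuracy $\Theta(\eps/r^2)$ as a ``clock,'' latch the output of the non-robust hypergraph sparsifier of \thmref{thm:high:prob:stream} only at the clock's flip times, and between flips use the comparison $Q_H(x)\ge\frac{1}{r^2}x^\top L_G x$ together with $L_{G_{t'}}\preceq(1+O(\eps/r^2))L_{G_t}$ to certify that the latched snapshot remains a $(1+\eps)$-sparsifier. One minor simplification: the paper does not split into rank classes, since the inequality $Q_e(x)=\max_{u,v\in e}(x_u-x_v)^2\ge\frac{1}{\binom{|e|}{2}}\sum_{u,v\in e}(x_u-x_v)^2$ already gives $Q_H(x)\ge\frac{1}{r^2}x^\top L_G x$ uniformly over all hyperedge sizes, so your detour through the Bansal--Svensson--Trevisan sandwich (and the accompanying $\log r$ factor from rank bucketing) is unnecessary, though harmless.
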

\begin{proof}
We first state the formal algorithm for robust hypergraph sparsification. 
We run the robust streaming graph sparsification algorithm in \thmref{thm:robust:graph} on the associated graph $G$ of the hypergraph $H$ defined by the stream with parameters $\tilde{\eps} = \frac{\eps}{8r^2}$ and $\tilde{\delta} = \delta$, and we have the outputs $\calG_1, \ldots, \calG_m$.
Here, we note that $G$ is defined by the standard associated graph definition in \defref{def:associated}, but not the balanced-weight version introduced in \secref{sec:ol}.
We run a non-robust hypergraph sparsification algorithm separately with parameters $\eps' = \frac{\eps}{8}$ and $\delta' = \frac{\delta}{\poly(m) \cdot \binom{m}{\lambda} T^{\lambda}}$, where $\lambda := \lambda_{\frac{\eps}{8r},m}(G)$ and $\log T$ are the bits of precision, and we have the outputs $\widehat{H}_{1}, \ldots, \widehat{H}_{m}$.
Let $\calH_1 = \widehat{H}_{1}$, we set $\calH_{t} = \calH_{t-1}$ if $\calG_t = \calG_{t-1}$, otherwise we set $\calH_{t} = \widehat{H}_t$.
The output to the adversary is the sequence $\calH_1, \ldots \calH_m$.

We prove the correctness of the algorithm as follows.
Consider a fixed time $t$ when we change the output, and let $t'$ be any time after $t$ before the next output change.
We note that $t$ is also the time when the robust graph sparsification algorithm changes its output.
Then, from the analysis in \thmref{thm:robust:graph}, we have $L_{G_{t'}} \preceq (1+\frac{\eps}{2r^2}) \cdot L_{G_{t}}$ and $x^\top L_{G_{t'}} x \le (1+\frac{\eps}{2r^2}) \cdot  x^\top L_{G_{t}} x$ for all vectors $x \in \mathbb{R}^{n}$.
Recall that the energy of $x$ on the hypergraph $H$ is $Q_H(x) = \sum_{e\in E} \max_{u,v \in e} x^\top L_{uv} x $, which is at least $\frac{1}{r^2} \cdot \sum_{e\in E} \sum_{u,v \in e} x^\top L_{uv} x = \frac{1}{r^2}\cdot x^\top L_{G} x$, since there are at most $\binom{r}{2}$ edges $(u,v) \in e$.
Then, we have
\[Q_H(x) \le Q_{H'}(x) + \frac{\eps}{2r^2} \cdot  x^\top L_{G_{t}} x \le (1+ \frac{\eps}{2}) \cdot Q_{H'}(x).\]
Therefore, $\calH_{t'} = \widehat{H}_{t}$ is an $\eps$-sparsifier of $H_{t'}$.
Since we set the failure probability as $\delta' = \frac{\delta}{\poly(m) \cdot \binom{m}{\lambda} T^{\lambda}}$, we can union bound across all $\binom{m}{\lambda} T^{\lambda}$ choices of input streams, ensuring the correctness of our algorithm.

By \thmref{thm:high:prob:stream}, we have that the non-robust streaming algorithm with parameters $\eps',\delta'$ stores $\frac{n}{\eps^2}\log n \cdot \poly(\log r, \log \log \frac{m}{\delta})$ edges.
Note that the number of times that we change the output hypergraph is $\lambda = \O{\frac{nr \log(mr)}{\eps}}$.
Therefore, since we require $\log T = \O{r\log n + \log\log m}$ bits of precisions, our algorithm stores $\frac{n}{\eps'^2}\log n \cdot \poly(\log r \log \log \frac{m}{\delta'}) = \frac{n}{\eps^2} \cdot \poly(\log n, \log r, \log \log \frac{m}{\delta})$ hyperedges.
In addition, by \thmref{thm:robust:graph} the robust graph sparsification algorithm for $G$ with parameters $\tilde{\eps},\tilde{\delta}$ stores $\frac{n r^4}{\eps^2} \cdot \poly(\log n, \log \log \frac{m}{\delta})$ edges.
\end{proof}

\subsection{Graph Min-Cut Approximation}
\seclab{sec:min:cut}
In this section, we apply the streaming framework to solve the graph min-cut approximation problem, which asks for a $(1+\eps)$-approximation to the size of the min-cut. 
First, we introduce a relaxation to the graph spectral sparsifier called the graph for-each spectral sparsifier, which is a graph $\widehat{G}$ that satisfies, for any given vector $x \in \mathbb{R}^n$, it preserves the energy of $x$ in the original graph $G$ with probability $\delta$. 

\begin{theorem}[Offline algorithm for graph for-each sparsifier]
\thmlab{thm:graph:offline}\cite{DingGLLNSW24}
Given a graph $G =(V, E,w)$ with $n$ vertices and $m$ edges, there exists an algorithm that constructs a $(1+\eps)$-for-each spectral sparsifier with probability $1-\frac{1}{\poly(n)}$. It samples $\tO{\frac{n}{\eps}}$ edges and uses $\tO{\frac{n}{\eps}}$ words of working memory and $\tO{m}$ update time.
\end{theorem}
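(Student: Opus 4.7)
The plan is to build the for-each sparsifier by importance sampling with probability proportional to (approximate) leverage scores, and to analyze its accuracy by scalar Bernstein concentration for a single fixed query $x$ rather than the matrix concentration used for all-query spectral sparsifiers. The weaker for-each guarantee is precisely what should let us save a factor of $\eps$ and sample only $\tO{n/\eps}$ edges instead of the $\tO{n/\eps^2}$ needed in the for-all setting.

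First I would compute $(1\pm O(1))$-approximate effective resistances $\tilde r_e$ of every edge using a nearly-linear-time Laplacian solver. These induce leverage weights $\tau_e := w_e \tilde r_e$ with $\sum_e \tau_e = O(n)$ by Foster's identity. Sample each edge $e$ independently with probability $p_e = \min\{1, C \tau_e / \eps\}$ for an appropriate $C = \tO{1}$ and rescale each surviving edge by $1/p_e$ to form $\widehat G$. Linearity of expectation gives $\Ex{x^\top L_{\widehat G} x} = x^\top L_G x$ for every $x$, and the expected number of samples is $\tO{n/\eps}$; a standard Chernoff bound on the sum of the $p_e$'s converts this into a high-probability bound on the number of stored edges.

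For correctness I would fix an arbitrary query $x \in \mathbb{R}^n$ and write $x^\top L_{\widehat G} x = \sum_e X_e$ as a sum of independent non-negative scalar random variables with mean $w_e (x_u - x_v)^2$. The key inequality $(x_u - x_v)^2 \le r_e \cdot x^\top L_G x$ coming from the Rayleigh characterization of effective resistance simultaneously controls the almost-sure value of each $X_e$ and the sum of their variances in terms of $x^\top L_G x$. Plugging these bounds into Bernstein's inequality gives a $(1\pm\eps)$ approximation of $x^\top L_G x$ with failure probability $1/\poly(n)$ for the fixed $x$, which is exactly the for-each guarantee.

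The main obstacle is matching the target $\tO{n/\eps}$ edge count rather than the $\tO{n/\eps^2}$ that the naive Bernstein calculation sketched above yields: with $p_e \propto \tau_e/\eps$ the variance term in Bernstein scales as $(x^\top L_G x)^2/C$, which only forces $C = \tilde\Theta(1/\eps)$ if we are willing to take a per-bucket error budget rather than a global one. I expect the missing factor of $\eps$ to be recoverable by partitioning edges into $O(\log n)$ geometric scales of $\tau_e$ and applying Bernstein scale by scale with a per-scale error budget of $\eps/\log n$; within a single scale the per-edge contributions $X_e$ are essentially uniform, which sharpens the variance-to-range ratio and allows the per-edge sampling rate to be only $1/\eps$ times $\tau_e$ up to polylogarithmic factors. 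Summing the bucket errors by triangle inequality and union-bounding over the $O(\log n)$ buckets with failure probability $1/\poly(n)$ each then yields the overall $(1+\eps)$ approximation using only $\tO{n/\eps}$ edges, while the $\tO{n/\eps}$ working memory and $\tO{m}$ update time follow directly from streaming the sampling step over the arrivals.
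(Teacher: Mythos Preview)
The paper does not prove this theorem; it is quoted from \cite{DingGLLNSW24} and used as a black box in \secref{sec:min:cut}. So there is no in-paper argument to compare against, and the only question is whether your sketch would actually yield the stated bound.

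Your diagnosis of the obstacle is correct: plain leverage-score sampling with $p_e \propto \tau_e/\eps$ together with a single Bernstein bound only gives $\tO{n/\eps^2}$ edges. However, the proposed fix---bucketing edges by geometric scales of $\tau_e$---does not recover the missing factor of $\eps$. Inside a bucket $B_j=\{e:\tau_e\approx 2^{-j}\}$ the per-edge energy contributions $Y_e=w_e(x_u-x_v)^2$ are \emph{not} ``essentially uniform'': they can lie anywhere in $[0,2^{-j}Q]$ with $Q=x^\top L_G x$, and their actual values depend on the unknown query $x$. Redoing the Bernstein computation in a bucket with $p_e\approx C\,2^{-j}/\eps$ still gives range $\max_e Y_e/p_e\le \eps Q/C$ and variance $\sum_{e\in B_j}Y_e^2/p_e\le (\eps/C)\,Q_j\,Q$, which sums over buckets to $(\eps/C)\,Q^2$---identical to the unbucketed bound. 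No allocation of per-bucket error budgets improves this, so you are forced back to $C=\tilde\Theta(1/\eps)$ and $\tO{n/\eps^2}$ samples. The asserted sharpening of the variance-to-range ratio simply does not occur, because bucketing by $\tau_e$ says nothing about how the query-dependent energies $Y_e$ are distributed.

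The known $\tO{n/\eps}$ for-each constructions (going back to Andoni--Chen--Krauthgamer--Qin--Woodruff--Zhang for cuts, and extended in \cite{DingGLLNSW24}) rely on genuinely different structural ingredients: an edge-strength or expander-type decomposition that first reduces to pieces with $O(n)$ edges, followed by a sampling scheme whose analysis exploits that for any \emph{fixed} query only few edges in each piece contribute significantly. Leverage scores alone do not expose this structure, so a substantially different construction is needed to reach $\tO{n/\eps}$.
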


\cite{DingGLLNSW24} proposed an algorithm for min-cut approximation in the offline setting using the above procedure and graph for-all sparsifiers: 
We first utilize the graph spectral sparsifier to give a $2$-approximation to all cuts. 
Then, let $c^*$ denote the minimum cut in the sparsifier, we select a set $S$ that contains all cuts within a factor of $4$ of $c^*$, so $S$ contains the actual minimum cut. 
It is known that there are at most $n^{\O{C}}$ cuts that are within a factor of $C$ of the min-cut \cite{Karger00} for any $C \geq 1$, which means that $S$ has at most $n^{\O{1}}$ items. 
Thus, we can obtain a $(1+\eps)$-graph for-each sparsifier and select the cut in $S$ that has the minimum energy on the for-each sparsifier. 
We can union bound over the failure events of the graph for-each sparsifier across each cut query $S$, so we obtain a $(1+\eps)$-estimation to the min-cut with high probability.
\begin{theorem}[Offline algorithm for min-cut approximation]
\thmlab{thm:min:cut:offline}
\cite{DingGLLNSW24}
Given a graph $G =(V, E,w)$ with $n$ vertices and $m$ edges, there exists an algorithm that provides a $(1+\eps)$-approximation to the min-cut with probability $1-\frac{1}{\poly(n)}$. It samples $\tO{\frac{n}{\eps}}$ edges and uses $\tO{\frac{n}{\eps}}$ words of working memory and $\poly(n)$ update time.
\end{theorem}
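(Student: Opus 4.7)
The plan is to combine a constant-factor spectral sparsifier with a $(1+\eps)$-for-each sparsifier, using Karger's bound to reduce min-cut approximation to evaluating a polynomially bounded set of fixed cut queries. First I would invoke \thmref{thm:g:offline} with constant accuracy to obtain a $2$-spectral sparsifier $\widetilde{G}$ of $G$ using $\O{n}$ edges. Because a spectral sparsifier preserves every quadratic form, and hence every cut value, to within a factor of $2$, the minimum cut value $c^*$ of $\widetilde{G}$ is a $2$-approximation to the min-cut of $G$; moreover, every cut of $G$ within a factor $2$ of the true min-cut has value in $\widetilde{G}$ within a factor $4$ of $c^*$.

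Next I would enumerate, in $\widetilde{G}$, the set $S$ of all cuts whose value is at most $4c^*$. By Karger's bound \cite{Karger00}, any graph has at most $n^{\O{C}}$ cuts within a factor $C$ of its min-cut, so $|S| = n^{\O{1}}$, and $S$ contains the true min-cut of $G$. This enumeration can be performed in $\poly(n)$ time using standard near-minimum-cut algorithms. Crucially, $S$ depends only on the randomness used to construct $\widetilde{G}$, so the cut-indicator vectors $\{x_T : T \in S\}$ may be regarded as a fixed collection of queries for the next stage.

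I would then independently construct a $(1+\eps)$-for-each spectral sparsifier $\widehat{G}$ of $G$ via \thmref{thm:graph:offline}, using $\tO{\frac{n}{\eps}}$ edges and $\tO{\frac{n}{\eps}}$ words of working memory in $\tO{m}$ time. By the for-each guarantee, for each $T \in S$ the value $x_T^\top L_{\widehat{G}} x_T$ approximates $x_T^\top L_G x_T$ to within a $(1+\eps)$-factor with probability $1 - \frac{1}{\poly(n)}$; union-bounding over the $n^{\O{1}}$ queries in $S$ yields simultaneous $(1+\eps)$-accuracy with probability $1 - \frac{1}{\poly(n)}$, so returning $\min_{T \in S} x_T^\top L_{\widehat{G}} x_T$ gives the claimed approximation. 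The main subtlety is that the for-each guarantee requires each query to be independent of the sparsifier's randomness; the two-stage design---first fix $S$ from the \emph{for-all} sparsifier $\widetilde{G}$, then sample $\widehat{G}$---is precisely what enforces this independence. The remaining costs are routine: enumerating and evaluating the $n^{\O{1}}$ cuts one at a time needs only $\poly(n)$ time and does not exceed the $\tO{n/\eps}$ working memory already used to store $\widehat{G}$.
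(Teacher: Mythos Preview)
Your proposal is correct and follows essentially the same approach the paper describes for this cited result: build a constant-factor for-all sparsifier to identify a $\poly(n)$-size candidate set $S$ of near-minimum cuts via Karger's bound, then evaluate those fixed queries on an independently constructed $(1+\eps)$-for-each sparsifier and union bound. Your explicit remark that $S$ is fixed before sampling $\widehat{G}$, so the for-each guarantee legitimately applies, is a nice clarification the paper leaves implicit.
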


\cite{DingGLLNSW24} applied merge-and-reduce directly to the input data stream to achieve a streaming algorithm, which loses an extra $\log n$ factor.
We show that using the aforementioned framework, where we first obtain an online graph spectral sparsifier and then run the offline algorithm in each coreset, we can improve this $\log n$ factor to a $\log \frac{1}{\eps}$ while maintaining a $\poly(n)$ update time, achieving more efficient space.
\begin{theorem}[Streaming algorithm for graph min-cut approximation]
Given a graph $G =(V, E, w)$ with $n$ vertices and $m$ edges defined by an insertion-only stream, there exists a streaming algorithm that provides a $(1+\eps)$-approximation to the min-cut with probability $1-\frac{1}{\poly(n)}$. It stores $\frac{n}{\eps} \cdot \polylog(n,\frac{1}{\eps})$ edges, i.e., $\frac{n}{\eps} \cdot \polylog(n,\frac{1}{\eps})$ bits, and uses $\poly(n)$ update time. 
\end{theorem}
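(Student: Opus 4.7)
The plan is to instantiate the streaming framework of \algref{alg:framework} in the same spirit as the proof of \thmref{thm:graph:streaming}, but with the offline min-cut approximation algorithm of \thmref{thm:min:cut:offline} in place of the offline spectral sparsifier as the coreset construction at each internal node. First I would invoke the online graph spectral sparsifier of \thmref{thm:graph:online} with a constant accuracy parameter $\eps_0$, so that the filtered stream $\calS'$ consists of only $|\calS'|=\tilde{O}(n)$ edges while preserving every cut up to a constant factor. This lightweight preprocessing ensures no near-minimum cut of the true graph is lost, and brings the effective stream length from $m$ down to $\poly(n)$, which is what lets us avoid the extra $\log n$ factor suffered by \cite{DingGLLNSW24}.

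Next, I would feed $\calS'$ into a merge-and-reduce binary tree, and at each internal node invoke \thmref{thm:min:cut:offline} with accuracy $\eps' = \eps/h$ to build a coreset on $|C|=\tilde{O}(n/\eps')$ edges whose min-cut approximates that of the subgraph represented by the node. The tree height is $h=\O{\log(|\calS'|/|C|)}=\O{\log\log n+\log(1/\eps)}$, so the multiplicative errors compose along any root-to-leaf path into a $(1+\eps)$-approximation at the root. The total space is $\tilde{O}(|C|\cdot h)=\frac{n}{\eps}\cdot\polylog(n,1/\eps)$ edges, equivalently the same number of bits once the $\log n$ factor per edge is absorbed into the polylog; the update time stays $\poly(n)$ because each arrival triggers at most $h$ coreset refreshes, each running in $\poly(n)$ time by \thmref{thm:min:cut:offline}.

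The main obstacle is arguing that the coresets produced by \thmref{thm:min:cut:offline} actually compose under merge-and-reduce. The algorithm of \thmref{thm:min:cut:offline} is a hybrid in which a $(1+\eps')$-for-each sparsifier (\thmref{thm:graph:offline}) is evaluated on the collection of cuts within a constant factor of the current min-cut; the for-each guarantee is per-query and the min-cut of a merged graph can correspond to cuts that are not near-minimum in either component alone. I would resolve this by lowering the for-each failure probability to $1/\poly(n,1/\eps')$ at each merge and enumerating, via Karger's $n^{O(C)}$ cut-counting bound, all cuts within a sufficiently large constant factor of each node's min-cut, then union bounding over the $O(h)$ merges in the tree and the $n^{O(1)}$ enumerated cuts per merge. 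This only inflates $|C|$ and $h$ by polylogarithmic factors already absorbed into the stated $\polylog(n,1/\eps)$ overhead, and the overall failure probability remains $1/\poly(n)$ as required.
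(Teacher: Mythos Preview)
There is a genuine gap in the merge-and-reduce step. You propose to use \thmref{thm:min:cut:offline} as the coreset construction at each internal node, but a min-cut-preserving coreset is not decomposable: if $C_1$ preserves the min-cut of $G_1$ and $C_2$ preserves the min-cut of $G_2$, there is no reason $C_1\cup C_2$ preserves the min-cut of $G_1\cup G_2$, because the global min-cut of $G_1\cup G_2$ may be very far from minimum in either $G_1$ or $G_2$ taken alone. Your proposed fix---enumerating cuts within a constant factor of each \emph{node's} min-cut via Karger's bound---does not address this, since the global min-cut need not appear in any node's local enumeration. The paper instead uses the \emph{for-each sparsifier} of \thmref{thm:graph:offline} (not \thmref{thm:min:cut:offline}) as the coreset at every node. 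For-each sparsifiers do compose under merge-and-reduce because energy is additive: for any fixed cut $x$, each node's sparsifier preserves its block's contribution with high probability, so the root is a $(1+\eps)$-for-each sparsifier of all of $\calS'$. Only once, at the root, is the Karger $n^{O(1)}$ enumeration invoked, and a single union bound over the $\poly(n)$ near-minimum cuts of the whole graph suffices.

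A secondary issue: running the online step with a constant accuracy $\eps_0$ means $\calS'$ only preserves cuts up to a constant factor, so even an exact min-cut computation on $\calS'$ would return only a constant-factor approximation to the min-cut of $G$, not $(1+\eps)$. The paper runs the online step with accuracy on the order of $\eps$, giving $|\calS'|=\O{\frac{n}{\eps^2}\log^2 n}$; with $|C|=\Theta\!\left(\frac{n}{\eps}\polylog\frac{n}{\eps}\right)$ this yields height $h=\O{\log\frac{1}{\eps}}$, which is where the improvement over \cite{DingGLLNSW24} actually comes from.
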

\begin{proof}
We use the online algorithm in \thmref{thm:graph:online} to generate the stream $\calS'$.
It produces an online for-all sparsifier, which generalizes an online cut sparsifier for the graph.
We use the offline algorithm for for-each sparsifier in \thmref{thm:graph:offline} to construct the coresets.
Then, the coreset on the roof of the binary tree is a $(1+\eps)$-for-each sparsifier by setting $\eps' = \eps / h$.
Then, we can use the same method in \thmref{thm:min:cut:offline} to select a set $S$ that contains the min-cut and output the cut with the minimum energy on the $(1+\eps)$-for-each sparsifier.
The union bound argument in \thmref{thm:min:cut:offline} gives the correctness of our streaming algorithm.

The calculation of the space complexity and the update time is similar to that of \thmref{thm:hyper:streaming}. 
From \thmref{thm:graph:offline} and \thmref{thm:graph:online}, we have $|\calS'| = \O{\frac{n}{\eps^2}\log^2n}$ and $|C| = \Theta\left(\frac{n}{\eps}\polylog \frac{n}{\eps}\right)$, so the height of the merge-and-reduce data structure is $h = \log \frac{|\calS'|}{|C|} = \O{\log \frac{1}{\eps}}$.  
Thus, we require $\O{|C| \cdot h} = \frac{n}{\eps} \cdot \polylog(n,\frac{1}{\eps})$ words of space in total. 
The update time of the streaming algorithm is upper bounded by the bounds in \thmref{thm:graph:online} and \thmref{thm:graph:offline}.
\end{proof}

\section{Sliding Window Model}
\seclab{sec:sliding:window}
In this section, we consider the sliding window model.
Although the merge-and-reduce procedure produces a coreset for an insertion-only stream in a straightforward way, it fails for the sliding window model due to the expiration of elements at the beginning of the data stream by the sliding window. 
Since coresets at earlier blocks of the streams are no longer valid, then the coreset at the root of the stream would no longer be accurate. 
To resolve this issue, \cite{WoodruffZZ23} observes that we can once again partition the stream into blocks consisting of $S\left(n,\log m,\frac{\eps}{2\log(mn)},\frac{\delta}{\poly(mn)}\right)$ hyperedges. 
However, instead of creating an offline coreset for the hyperedges in each block of updates, we create an online coreset for the elements in the reverse order of their arrival. 
Specifically, as the hyperedges in each block and each coreset are explicitly stored, we can create a synthetic data stream that consists of the hyperedges in the reverse order, and then we can feed the synthetic stream as input to the online coreset construction. 
In this way, this effectively reverses the stream, so that the sliding window always corresponds to the beginning of the stream. 
Crucially, the online coreset construction implies correctness over any prefix of the reversed stream, which translates to correctness over any suffix of the input stream, including the sliding window. 
For the sake of completeness, we present this approach in \algref{alg:hypergraph:sw}. 

\begin{algorithm}[!htb]
\caption{Algorithm for hypergraph sparsification in the sliding window model via merge-and-reduce and online coresets, adapted from~\cite{WoodruffZZ23}}
\alglab{alg:hypergraph:sw}
\begin{algorithmic}[1]
\State{{\bf Require:} Hyperedges $e_1,\ldots,e_m$ on $n$ vertices, accuracy parameter $\eps\in(0,1)$, failure probability $\delta\in(0,1)$, and window size $W>0$}
\State{{\bf Ensure:} Hypergraph sparsification of the $W$ most recent hyperedges}
\State{Let $\coreset(H,\eps,\delta)$ be an online coreset construction that samples $S(n,\log(mn),\eps,\delta))$ hyperedges, where $H$ has $n$ vertices and $m$ hyperedges}
\State{$M\gets\O{S\left(n,\log(mn),\frac{\eps}{2\log(mn)},\frac{\delta}{(mn)^2}\right)}$}
\State{Initialize coresets $C_0,C_1,\ldots,C_{\log(mn)}\gets\emptyset$}
\For{each hyperedge $e_t$ with $t\in[m]$}
\If{$C_0$ does not contain $M$ hyperedges}
\State{Prepend $e_t$ to $C_0$, i.e., $C_0\gets\{e_t\}\cup C_0$}
\Else
\State{Let $i$ be the smallest index such that $C_i=\emptyset$}
\State{$C_i\gets\coreset\left(\widetilde{H},\frac{\eps}{2\log(mn)},\frac{\delta}{(mn)^2}\right)$, where $\widetilde{H}=C_0\cup\ldots\cup C_{i-1}$}
\State{}
\Comment{$\widetilde{H}$ is an ordered set of weighted hyperedges}
\State{For $j=0$ to $j=i-1$, reset $C_j\gets\emptyset$}
\State{$C_0\gets\{e_t\}$}
\EndIf
\State{{\bf Return} the ordered set $C_0\cup\ldots\cup C_{\log(mn)}$, in reverse}
\EndFor
\end{algorithmic}
\end{algorithm}
The following proof shows the correctness of the framework in \algref{alg:hypergraph:sw}. 
It uses induction and is entirely standard, adapting the approach in \cite{WoodruffZZ23}. 
\begin{theorem}
\thmlab{thm:sw:framework}
Let $e_1,\ldots,e_m$ be a stream of hyperedges, let $\eps\in(0,1)$ be an approximation parameter, and let $H=\{e_{m-W+1},\ldots,e_m\}$ be a hypergraph defined by the $W$ most recent hyperedges. 
Suppose there exists a randomized algorithm that with probability at least $1-\delta$, outputs an online coreset algorithm for hypergraph sparsification using $S(n,\log m,\eps,\delta)$ hyperedges. 
Then there exists a randomized algorithm that with probability at least $1-\delta$, outputs a $(1+\eps)$-hypergraph sparsification in the sliding window model, using $\O{S\left(n,\log m,\frac{\eps}{2\log(mn)},\frac{\delta}{(mn)^2}\right)\log(mn)}$ hyperedges.  
\end{theorem}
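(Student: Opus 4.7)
The plan is to verify that \algref{alg:hypergraph:sw} satisfies the claimed guarantees. The central idea, adapted from~\cite{WoodruffZZ23}, is that by prepending each incoming hyperedge to $C_0$ and invoking $\coreset$ on the concatenation $C_0 \cup \cdots \cup C_{i-1}$ in this (reverse-chronological) order, the sliding window of the $W$ most recent hyperedges always corresponds to a \emph{prefix} of a synthetic reversed stream. Since an online coreset is by definition valid on every prefix of its input, the correct sparsifier for the active window can be extracted by appropriate truncation of the stored structure.

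First, I would establish a structural invariant by induction on the stream length: at every point in time, the nonempty levels $C_0, \ldots, C_{\log(mn)}$ partition the reversed active stream into consecutive blocks of geometrically increasing size, and for each $j \geq 1$, the level $C_j$ is an online $\left(1+\frac{\eps}{2\log(mn)}\right)$-coreset of the hyperedges in its subtree, built by invoking $\coreset$ on the ordered union of its children just before they were reset. This mirrors the standard merge-and-reduce analysis presented in \secref{sec:framework}, with the sole modification that the leaves are ordered from most recent to oldest.

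For correctness, the merge-and-reduce tree has depth at most $\log(mn)$, so by the decomposability of hypergraph spectral sparsifiers (the energy $Q_H(x) = \sum_{e \in E} Q_e(x)$ is additive over hyperedges, so the union of sparsifiers over disjoint hyperedge sets is a sparsifier of the union) and composition of multiplicative errors, the distortion accumulated along any root-to-leaf path is at most $\left(1+\frac{\eps}{2\log(mn)}\right)^{\log(mn)} \leq 1+\eps$. Given the current time and the stored arrival times, we identify the prefix of the reversed stream that covers the window. If this prefix aligns with the boundary of some $C_i$, the union of the lower levels is already a valid $(1+\eps)$-sparsifier. Otherwise, the prefix terminates strictly inside some $C_i$, and the online guarantee of $\coreset$ ensures that truncating $C_i$ at the corresponding prefix still yields a valid sparsifier of precisely those hyperedges of the subtree that lie in the window.

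The main obstacle will be the bookkeeping in this final step: one must verify that the order of the hyperedges embedded in $C_i$ (induced by the recursive concatenation $C_0 \cup \cdots \cup C_{i-1}$ at the moment of merging) faithfully matches the reverse-chronological order of the original hyperedges in its subtree, so that the online property of $\coreset$ applies to the desired prefix. This follows from the inductive construction but demands careful tracking of orderings across successive merges. For the space bound, at most $\log(mn)+1$ levels are nonempty at any time, each storing at most $M = \O{S\left(n,\log m,\frac{\eps}{2\log(mn)},\frac{\delta}{(mn)^2}\right)}$ hyperedges, giving the claimed bound. The total number of invocations of $\coreset$ over the stream is $\poly(mn)$, so a union bound with per-invocation failure probability $\delta/(mn)^2$ yields overall failure probability at most $\delta$, completing the argument.
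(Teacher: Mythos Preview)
Your proposal is correct and follows essentially the same approach as the paper: both feed the stream to a merge-and-reduce tree in reverse order, invoke the inductive $\left(1+\frac{\eps}{2\log(mn)}\right)^i$ distortion bound on each level $C_i$, appeal to the online-coreset property to recover the sparsifier for an arbitrary window, and finish with the same space count and union bound over at most $\poly(mn)$ invocations of $\coreset$. Your treatment is in fact slightly more explicit than the paper's on the truncation step---when the window boundary falls inside some $C_i$---and on verifying that the recursive ordering inside each $C_i$ remains reverse-chronological, points the paper leaves implicit.
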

\begin{proof}
Consider \algref{alg:hypergraph:sw}, where $\coreset(H,\eps,\delta)$ is a randomized algorithm that, with probability at least $1-\delta$, computes a $(1+\eps)$-approximate online coreset for hypergraph sparsification on an input hypergraph $H$ that has $n$ nodes and $m$ hyperedges. 

We first claim that for each index $i$, $C_i$ is a $\left(1+\frac{\eps}{2\log(mn)}\right)^i$ online coreset for hypergraph sparsification for $2^{i-1}M$ hyperedges. 
Indeed, note that $C_i$ can only be non-empty if at some time, the coreset $C_0$ contains $M$ hyperedges and the coresets $C_1,\ldots,C_{i-1}$ are all non-empty. 
Then by the correctness of the subroutine $\coreset$, $C_i$ is a $\left(1+\frac{\eps}{2\log(mn)}\right)$ online coreset for the hyperedges in $C_0\cup\ldots\cup C_{i-1}$ at some point during the stream. 
It follows that by induction, $C_i$ is a $\left(1+\frac{\eps}{2\log(mn)}\right)\left(1+\frac{\eps}{2\log(mn)}\right)^{i-1}=\left(1+\frac{\eps}{2\log(mn)}\right)^i$ online coreset for $M+\sum_{j=1}^{i-1} 2^{j-1}M=2^{i-1}M$ hyperedges. 

Observe that \algref{alg:hypergraph:sw} inserts the latter hyperedges to the beginning of $C_0$. 
Hence, the stream is fed in reverse to the merge-and-reduce procedure. 
In other words, for any $W\in[2^{i-1},2^i)$, the reverse of $C_0\cup\ldots\cup C_i$ provides a $(1+\eps)$-hypergraph sparsifier for the $W$ most recent hyperedges in the data stream. 

Moreover, there are at most $m$ hyperedges in the data stream. 
For each hyperedge, there are at most $\log(mn)$ coresets constructed by the subroutine $\coreset$, corresponding to the height of the tree. 
Because each subroutine has failure probability at most $\frac{\delta}{(mn)^2}$, then the total failure probability is at most $\delta$ by a union bound.  
This completes the argument for correctness. 

It remains to justify the space complexity. 
To that end, observe that there are at most $\O{\log(mn)}$ online coreset constructions $C_0,\ldots,C_{\log (mn)}$ simultaneously stored by the algorithm. 
Since each online coreset construction samples $S\left(n,\log m,\frac{\eps}{2\log(mn)},\frac{\delta}{(mn)^2}\right)$ hyperedges, then the total number of stored hyperedges is $\O{S\left(n,\log m,\frac{\eps}{2\log(mn)},\frac{\delta}{(mn)^2}\right)\log(mn)}$, as claimed.  
\end{proof}

We give the following result for hypergraph sparsification in the sliding window using the above framework.
\begin{theorem}
Given a hypergraph $H=(V,E)$ with $n$ vertices, $m$ hyperedges, and rank $r$ defined by an insertion-only stream, there exists an algorithm that constructs a $(1+\eps)$-spectral sparsifier in the sliding window model with probability $1-\frac{1}{\poly(n)}$. It stores $\frac{n}{\eps^2}\polylog(m, r)$ hyperedges, i.e., $\frac{rn}{\eps^2}\log n\polylog(m, r)$ bits. 
\end{theorem}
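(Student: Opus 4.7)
The plan is to apply the sliding window framework of \thmref{thm:sw:framework} to the online hypergraph sparsification algorithm from \thmref{thm:hg:online:opt:memory} as the black-box subroutine. The framework requires, for a target accuracy $\eps$ and failure probability $\delta$, an online coreset procedure whose sample complexity $S(n,\log m,\eps,\delta)$ can be invoked with parameters $\eps' = \eps/(2\log(mn))$ and $\delta' = \delta/(mn)^2$. Plugging in the online sparsifier of \thmref{thm:hg:online:opt:memory}, which samples $\O{\tfrac{n}{\eps^2}\log n\log m \log r}$ hyperedges with probability $1-1/\poly(m)$, and boosting its failure probability to $1/(mn)^2$ (which only introduces an additional $\polylog(m)$ factor in the hidden polynomial), we get
\[S\!\left(n,\log m,\tfrac{\eps}{2\log(mn)},\tfrac{1}{(mn)^2}\right) = \tfrac{n}{\eps^2}\cdot\polylog(m,r).\]

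The framework then produces a sliding window algorithm storing $\O{S\cdot\log(mn)} = \tfrac{n}{\eps^2}\cdot\polylog(m,r)$ hyperedges. By \remref{rem:bits:hyperedge}, each hyperedge uses $\O{r\log n + \log\log m}$ bits, so the total bit complexity is $\tfrac{rn}{\eps^2}\log n\cdot\polylog(m,r)$, matching the theorem statement. The success probability $1-1/\poly(n)$ follows by a union bound over the $\O{\log(mn)}$ coreset instances invoked by the framework, each failing with probability at most $1/(mn)^2$.

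One subtlety that must be verified is that the online hypergraph sparsifier of \thmref{thm:hg:online:opt:memory} is genuinely an \emph{online} coreset, meaning that the guarantee holds at every prefix of the synthetic reversed stream that the framework feeds into the subroutine. This is true because the correctness proof (\lemref{lem:correctness:ol}) already establishes the spectral approximation at all prefix times via a union bound, and the sampling probabilities are defined independently of what was previously retained (c.f. \lemref{lem:indep}). Thus, when \algref{alg:hypergraph:sw} reverses the arrival order within each block to handle window expiration, the subroutine still outputs a valid $(1+\eps')$-sparsifier for every prefix of the reversed stream, which by \thmref{thm:sw:framework} translates to correctness for every suffix of the true stream (and in particular for the active window).

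The main obstacle, if any, is purely bookkeeping: confirming that the bit-level accounting absorbs the various polylogarithmic factors into the $\polylog(m,r)$ notation, and ensuring the online algorithm can be run at the required failure probability without blowing up beyond a $\polylog$ factor. Since boosting $\poly(m)$ failure to $(mn)^{-c}$ only affects constants in the exponent, and since $\log(mn) \leq \polylog(m)$ under the standing assumption $m = \poly(n)$ (or else absorbed into $\polylog(m,r)$ in the final bound), there is no material difficulty. The proof therefore reduces to a clean invocation of the framework, and the full argument can be presented in a few lines.
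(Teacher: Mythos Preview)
Your proposal is correct and follows the same approach as the paper: the paper simply states this theorem as a direct consequence of applying the sliding-window framework of \thmref{thm:sw:framework} to the online hypergraph sparsifier, without giving an explicit proof. Your additional verification that the online algorithm genuinely provides prefix-correctness (so that it serves as an online coreset for the reversed stream) is a useful clarification that the paper leaves implicit.
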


\def\shortbib{0}
\bibliographystyle{alpha}
\bibliography{references}

\newcommand{\etalchar}[1]{$^{#1}$}
\begin{thebibliography}{BEJWY20}

\bibitem[ABD{\etalchar{+}}21]{AlonBDMNY21}
Noga Alon, Omri Ben{-}Eliezer, Yuval Dagan, Shay Moran, Moni Naor, and Eylon Yogev.
\newblock Adversarial laws of large numbers and optimal regret in online classification.
\newblock In {\em {STOC}: 53rd Annual {ACM} {SIGACT} Symposium on Theory of Computing}, pages 447--455, 2021.

\bibitem[ABJ{\etalchar{+}}22]{AjtaiBJSSWZ22}
Mikl{\'{o}}s Ajtai, Vladimir Braverman, T.~S. Jayram, Sandeep Silwal, Alec Sun, David~P. Woodruff, and Samson Zhou.
\newblock The white-box adversarial data stream model.
\newblock In {\em {PODS} '22: International Conference on Management of Data, 2022}, pages 15--27, 2022.

\bibitem[ACGS23]{AssadiCGS23}
Sepehr Assadi, Amit Chakrabarti, Prantar Ghosh, and Manuel Stoeckl.
\newblock Coloring in graph streams via deterministic and adversarially robust algorithms.
\newblock In {\em Proceedings of the 42nd {ACM} {SIGMOD-SIGACT-SIGAI} Symposium on Principles of Database Systems, {PODS}}, pages 141--153, 2023.

\bibitem[ACSS23]{AttiasCSS23}
Idan Attias, Edith Cohen, Moshe Shechner, and Uri Stemmer.
\newblock A framework for adversarial streaming via differential privacy and difference estimators.
\newblock In {\em 14th Innovations in Theoretical Computer Science Conference, {ITCS}}, pages 8:1--8:19, 2023.

\bibitem[AEIK18]{AthalyeEIK18}
Anish Athalye, Logan Engstrom, Andrew Ilyas, and Kevin Kwok.
\newblock Synthesizing robust adversarial examples.
\newblock In {\em Proceedings of the 35th International Conference on Machine Learning, {ICML}}, volume~80 of {\em Proceedings of Machine Learning Research}, pages 284--293. {PMLR}, 2018.

\bibitem[AKA13]{AkbudakKA13}
Kadir Akbudak, Enver Kayaaslan, and Cevdet Aykanat.
\newblock Hypergraph partitioning based models and methods for exploiting cache locality in sparse matrix-vector multiplication.
\newblock {\em {SIAM} J. Sci. Comput.}, 35(3), 2013.

\bibitem[AMYZ19]{AvdiukhinMYZ19}
Dmitrii Avdiukhin, Slobodan Mitrovic, Grigory Yaroslavtsev, and Samson Zhou.
\newblock Adversarially robust submodular maximization under knapsack constraints.
\newblock In {\em Proceedings of the 25th {ACM} {SIGKDD} International Conference on Knowledge Discovery {\&} Data Mining, {KDD}}, pages 148--156. {ACM}, 2019.

\bibitem[BBD{\etalchar{+}}02]{BabcockBDMW02}
Brian Babcock, Shivnath Babu, Mayur Datar, Rajeev Motwani, and Jennifer Widom.
\newblock Models and issues in data stream systems.
\newblock In {\em Proceedings of the Twenty-first {ACM} {SIGACT-SIGMOD-SIGART} Symposium on Principles of Database Systems}, pages 1--16, 2002.

\bibitem[BDKS16]{BallardDKS16}
Grey Ballard, Alex Druinsky, Nicholas Knight, and Oded Schwartz.
\newblock Hypergraph partitioning for sparse matrix-matrix multiplication.
\newblock {\em {ACM} Trans. Parallel Comput.}, 3(3):18:1--18:34, 2016.

\bibitem[BDM{\etalchar{+}}20]{BravermanDMMUWZ20}
Vladimir Braverman, Petros Drineas, Cameron Musco, Christopher Musco, Jalaj Upadhyay, David~P. Woodruff, and Samson Zhou.
\newblock Near optimal linear algebra in the online and sliding window models.
\newblock In {\em 61st {IEEE} Annual Symposium on Foundations of Computer Science, {FOCS}}, pages 517--528, 2020.

\bibitem[BEJWY20]{BenEliezerJWY20}
Omri Ben-Eliezer, Rajesh Jayaram, David~P. Woodruff, and Eylon Yogev.
\newblock A framework for adversarially robust streaming algorithms.
\newblock In {\em Proceedings of the 39th ACM SIGMOD-SIGACT-SIGAI symposium on Principles of Database Systems {PODS20}}, 2020.

\bibitem[BEO22]{Ben-EliezerEO22}
Omri Ben{-}Eliezer, Talya Eden, and Krzysztof Onak.
\newblock Adversarially robust streaming via dense-sparse trade-offs.
\newblock In {\em 5th Symposium on Simplicity in Algorithms, SOSA@SODA}, pages 214--227, 2022.

\bibitem[BEWZ25]{BravermanEWZ25}
Vladimir Braverman, Jon~C. Ergun, Chen Wang, and Samson Zhou.
\newblock Learning-augmented hierarchical clustering.
\newblock In {\em Proceedings of the 42nd International Conference on Machine Learning, {ICML}}, 2025.

\bibitem[BGL{\etalchar{+}}18]{BravermanGLWZ18}
Vladimir Braverman, Elena Grigorescu, Harry Lang, David~P. Woodruff, and Samson Zhou.
\newblock Nearly optimal distinct elements and heavy hitters on sliding windows.
\newblock In {\em Approximation, Randomization, and Combinatorial Optimization. Algorithms and Techniques, {APPROX/RANDOM}}, pages 7:1--7:22, 2018.

\bibitem[BHM{\etalchar{+}}21]{BravermanHMSSZ21}
Vladimir Braverman, Avinatan Hassidim, Yossi Matias, Mariano Schain, Sandeep Silwal, and Samson Zhou.
\newblock Adversarial robustness of streaming algorithms through importance sampling.
\newblock In {\em Advances in Neural Information Processing Systems 34: Annual Conference on Neural Information Processing Systems, NeurIPS}, pages 3544--3557, 2021.

\bibitem[BJWY22]{Ben-EliezerJWY22}
Omri Ben{-}Eliezer, Rajesh Jayaram, David~P. Woodruff, and Eylon Yogev.
\newblock A framework for adversarially robust streaming algorithms.
\newblock {\em J. {ACM}}, 69(2):17:1--17:33, 2022.

\bibitem[BK04]{BoykovK04}
Yuri Boykov and Vladimir Kolmogorov.
\newblock An experimental comparison of min-cut/max-flow algorithms for energy minimization in vision.
\newblock {\em {IEEE} Trans. Pattern Anal. Mach. Intell.}, 26(9):1124--1137, 2004.

\bibitem[BKM{\etalchar{+}}22]{BeimelKMNSS22}
Amos Beimel, Haim Kaplan, Yishay Mansour, Kobbi Nissim, Thatchaphol Saranurak, and Uri Stemmer.
\newblock Dynamic algorithms against an adaptive adversary: generic constructions and lower bounds.
\newblock In {\em {STOC} '22: 54th Annual {ACM} {SIGACT} Symposium on Theory of Computing}, pages 1671--1684, 2022.

\bibitem[BLMZ23]{BlockiLMZ23}
Jeremiah Blocki, Seunghoon Lee, Tamalika Mukherjee, and Samson Zhou.
\newblock Differentially private $l_2$-heavy hitters in the sliding window model.
\newblock In {\em The Eleventh International Conference on Learning Representations, {ICLR}}, 2023.

\bibitem[BO07]{BravermanO07}
Vladimir Braverman and Rafail Ostrovsky.
\newblock Smooth histograms for sliding windows.
\newblock In {\em 48th Annual {IEEE} Symposium on Foundations of Computer Science {(FOCS)}, Proceedings}, pages 283--293, 2007.

\bibitem[BOZ12]{BravermanOZ12}
Vladimir Braverman, Rafail Ostrovsky, and Carlo Zaniolo.
\newblock Optimal sampling from sliding windows.
\newblock {\em J. Comput. Syst. Sci.}, 78(1):260--272, 2012.

\bibitem[BSS14]{BatsonSS14}
Joshua~D. Batson, Daniel~A. Spielman, and Nikhil Srivastava.
\newblock Twice-ramanujan sparsifiers.
\newblock {\em {SIAM} Rev.}, 56(2):315--334, 2014.

\bibitem[BST19]{BansalST19}
Nikhil Bansal, Ola Svensson, and Luca Trevisan.
\newblock New notions and constructions of sparsification for graphs and hypergraphs.
\newblock In {\em 60th {IEEE} Annual Symposium on Foundations of Computer Science, {FOCS}}, pages 910--928, 2019.

\bibitem[BWZ21]{BravermanWZ21}
Vladimir Braverman, Viska Wei, and Samson Zhou.
\newblock Symmetric norm estimation and regression on sliding windows.
\newblock In {\em Computing and Combinatorics - 27th International Conference, {COCOON}, Proceedings}, pages 528--539, 2021.

\bibitem[BY20]{Ben-EliezerY20}
Omri Ben{-}Eliezer and Eylon Yogev.
\newblock The adversarial robustness of sampling.
\newblock In {\em Proceedings of the 39th {ACM} {SIGMOD-SIGACT-SIGAI} Symposium on Principles of Database Systems, {PODS}}, pages 49--62, 2020.

\bibitem[CG08]{CormodeG08}
Graham Cormode and Minos~N. Garofalakis.
\newblock Streaming in a connected world: querying and tracking distributed data streams.
\newblock In {\em {EDBT} 2008, 11th International Conference on Extending Database Technology, Proceedings}, page 745, 2008.

\bibitem[CGS22]{ChakrabartiGS22}
Amit Chakrabarti, Prantar Ghosh, and Manuel Stoeckl.
\newblock Adversarially robust coloring for graph streams.
\newblock In {\em 13th Innovations in Theoretical Computer Science Conference, {ITCS}}, pages 37:1--37:23, 2022.

\bibitem[CJY{\etalchar{+}}25]{Cohen-AddadJYZZ25}
Vincent Cohen{-}Addad, Shaofeng~H.{-}C. Jiang, Qiaoyuan Yang, Yubo Zhang, and Samson Zhou.
\newblock Fair clustering in the sliding window model.
\newblock In {\em The Thirteenth International Conference on Learning Representations, {ICLR}}, 2025.

\bibitem[CKLV18]{CalandrielloKLV18}
Daniele Calandriello, Ioannis Koutis, Alessandro Lazaric, and Michal Valko.
\newblock Improved large-scale graph learning through ridge spectral sparsification.
\newblock In {\em Proceedings of the 35th International Conference on Machine Learning, {ICML}}, volume~80, pages 687--696. {PMLR}, 2018.

\bibitem[CM05]{CormodeM05}
Graham Cormode and S.~Muthukrishnan.
\newblock What's new: finding significant differences in network data streams.
\newblock {\em {IEEE/ACM} Trans. Netw.}, 13(6):1219--1232, 2005.

\bibitem[CMP20]{CohenMP20}
Michael~B. Cohen, Cameron Musco, and Jakub Pachocki.
\newblock Online row sampling.
\newblock {\em Theory Comput}, 16:1--25, 2020.

\bibitem[CNZ16]{ChenNZ16}
Jiecao Chen, Huy~L. Nguyen, and Qin Zhang.
\newblock Submodular maximization over sliding windows.
\newblock {\em CoRR}, abs/1611.00129, 2016.

\bibitem[Cor13]{Cormode13}
Graham Cormode.
\newblock The continuous distributed monitoring model.
\newblock {\em {SIGMOD} Rec.}, 42(1):5--14, 2013.

\bibitem[CSW{\etalchar{+}}23]{CherapanamjeriS23}
Yeshwanth Cherapanamjeri, Sandeep Silwal, David~P. Woodruff, Fred Zhang, Qiuyi Zhang, and Samson Zhou.
\newblock Robust algorithms on adaptive inputs from bounded adversaries.
\newblock In {\em The Eleventh International Conference on Learning Representations, {ICLR}}, 2023.

\bibitem[CW17]{Carlini017}
Nicholas Carlini and David~A. Wagner.
\newblock Towards evaluating the robustness of neural networks.
\newblock In {\em 2017 {IEEE} Symposium on Security and Privacy, {SP}}, pages 39--57. {IEEE} Computer Society, 2017.

\bibitem[CWG{\etalchar{+}}22]{ChenWGLJ22}
Jiefeng Chen, Xi~Wu, Yang Guo, Yingyu Liang, and Somesh Jha.
\newblock Towards evaluating the robustness of neural networks learned by transduction.
\newblock In {\em The Tenth International Conference on Learning Representations, {ICLR}}. OpenReview.net, 2022.

\bibitem[CWZ23]{Cohen-AddadWZ23}
Vincent Cohen{-}Addad, David~P. Woodruff, and Samson Zhou.
\newblock Streaming euclidean k-median and k-means with $o(\log n)$ space.
\newblock In {\em 64th {IEEE} Annual Symposium on Foundations of Computer Science, {FOCS}}, pages 883--908, 2023.

\bibitem[CX17]{ChekuriX17}
Chandra Chekuri and Chao Xu.
\newblock Computing minimum cuts in hypergraphs.
\newblock In {\em Proceedings of the Twenty-Eighth Annual {ACM-SIAM} Symposium on Discrete Algorithms, {SODA}}, pages 1085--1100, 2017.

\bibitem[CX18]{Chekuri018}
Chandra Chekuri and Chao Xu.
\newblock Minimum cuts and sparsification in hypergraphs.
\newblock {\em {SIAM} J. Comput.}, 47(6):2118--2156, 2018.

\bibitem[CXY18]{ChandrasekaranXY18}
Karthekeyan Chandrasekaran, Chao Xu, and Xilin Yu.
\newblock Hypergraph k-cut in randomized polynomial time.
\newblock In {\em In Proceedings of the 2018 Annual ACM-SIAM Symposium on Discrete Algorithms, {SODA}}, pages 1426--1438, 2018.

\bibitem[CYV{\etalchar{+}}24]{chenYVBDMT23}
Yuhan Chen, Haojie Ye, Sanketh Vedula, Alex Bronstein, Ronald Dreslinski, Trevor Mudge, and Nishil Talati.
\newblock {Demystifying Graph Sparsification Algorithms in Graph Properties Preservation}.
\newblock In {\em {50th International Conference on Very Large Databases (VLDB 2024)}}. ACM, 2024.

\bibitem[CZGB22]{ChunduruZGB22}
Chandan Chunduru, Chun~Jiang Zhu, Blake Gains, and Jinbo Bi.
\newblock Heterogeneous graph sparsification for efficient representation learning.
\newblock In {\em {IEEE} International Conference on Bioinformatics and Biomedicine, {BIBM}}, pages 1891--1896, 2022.

\bibitem[Das16]{Dasgupta16}
Sanjoy Dasgupta.
\newblock A cost function for similarity-based hierarchical clustering.
\newblock In {\em Proceedings of the 48th Annual {ACM} {SIGACT} Symposium on Theory of Computing, {STOC} 2016, Cambridge, MA, USA, June 18-21, 2016}, pages 118--127. {ACM}, 2016.

\bibitem[DGIM02]{DatarGIM02}
Mayur Datar, Aristides Gionis, Piotr Indyk, and Rajeev Motwani.
\newblock Maintaining stream statistics over sliding windows.
\newblock {\em {SIAM} J. Comput.}, 31(6):1794--1813, 2002.

\bibitem[DGJL{\etalchar{+}}24]{DingGLLNSW24}
Matthew Ding, Alexandro Garces, Honghao~Lin Jason~Li, Jelani Nelson, Vihan Shah, and David~P. Woodruff.
\newblock Space complexity of minimum cut problems in single-pass streams.
\newblock {\em arXiv preprint arXiv: 2412.01143}, 2024.

\bibitem[DGU{\etalchar{+}}25]{Deng0U0Z25}
Chengyuan Deng, Jie Gao, Jalaj Upadhyay, Chen Wang, and Samson Zhou.
\newblock On the price of differential privacy for hierarchical clustering.
\newblock In {\em The Thirteenth International Conference on Learning Representations, {ICLR}}, 2025.

\bibitem[DK{\c{C}}13]{DeveciKC13}
Mehmet Deveci, Kamer Kaya, and {\"{U}}mit~V. {\c{C}}ataly{\"{u}}rek.
\newblock Hypergraph sparsification and its application to partitioning.
\newblock In {\em 42nd International Conference on Parallel Processing, {ICPP} 2013}, pages 200--209, 2013.

\bibitem[DSWZ23]{DinurSWZ23}
Itai Dinur, Uri Stemmer, David~P. Woodruff, and Samson Zhou.
\newblock On differential privacy and adaptive data analysis with bounded space.
\newblock In {\em Advances in Cryptology - {EUROCRYPT} 2023 - 42nd Annual International Conference on the Theory and Applications of Cryptographic Techniques, Proceedings, Part {III}}, pages 35--65, 2023.

\bibitem[ELVZ17]{EpastoLVZ17}
Alessandro Epasto, Silvio Lattanzi, Sergei Vassilvitskii, and Morteza Zadimoghaddam.
\newblock Submodular optimization over sliding windows.
\newblock In {\em Proceedings of the 26th International Conference on World Wide Web, {WWW}}, pages 421--430, 2017.

\bibitem[Fac]{FB-data}
Facebook.
\newblock https://www.facebook.com/policy.php.

\bibitem[FLYG24]{FengLYG24}
Yifan Feng, Yihe Luo, Shihui Ying, and Yue Gao.
\newblock Lighthgnn: Distilling hypergraph neural networks into mlps for 100x faster inference.
\newblock In {\em The Twelfth International Conference on Learning Representations, {ICLR}}, 2024.

\bibitem[Fre75]{Freedman75}
David~A. Freedman.
\newblock On tail probabilities for martingales.
\newblock {\em The Annals of Probability}, pages 100--118, 1975.

\bibitem[FYZ{\etalchar{+}}19]{FengYZJG19}
Yifan Feng, Haoxuan You, Zizhao Zhang, Rongrong Ji, and Yue Gao.
\newblock Hypergraph neural networks.
\newblock In {\em The Thirty-Third {AAAI} Conference on Artificial Intelligence, {AAAI}}, pages 3558--3565, 2019.

\bibitem[GFJJ23]{GaoFJJ23}
Yue Gao, Yifan Feng, Shuyi Ji, and Rongrong Ji.
\newblock Hgnn\({}^{\mbox{+}}\): General hypergraph neural networks.
\newblock {\em {IEEE} Trans. Pattern Anal. Mach. Intell.}, 45(3):3181--3199, 2023.

\bibitem[GLW{\etalchar{+}}24]{GribelyukLWYZ24}
Elena Gribelyuk, Honghao Lin, David~P. Woodruff, Huacheng Yu, and Samson Zhou.
\newblock A strong separation for adversarially robust $l_0$ estimation for linear sketches.
\newblock In {\em 65th {IEEE} Annual Symposium on Foundations of Computer Science, {FOCS}}, pages 2318--2343, 2024.

\bibitem[GLW{\etalchar{+}}25]{GribelyukLWYZ25}
Elena Gribelyuk, Honghao Lin, David~P. Woodruff, Huacheng Yu, and Samson Zhou.
\newblock Lifting linear sketches: Optimal bounds and adversarial robustness.
\newblock In {\em Proceedings of the 57th {ACM} Symposium on Theory of Computing, {STOC}}, 2025.
\newblock (to appear).

\bibitem[GM25]{GoranciM25}
Gramoz Goranci and Ali Momeni.
\newblock Fully dynamic spectral sparsification of hypergraphs.
\newblock {\em CoRR}, abs/2502.01421, 2025.

\bibitem[Goo]{google-data}
Google.
\newblock https://policies.google.com/technologies/retention.

\bibitem[GSS15]{GoodfellowSS14}
Ian~J. Goodfellow, Jonathon Shlens, and Christian Szegedy.
\newblock Explaining and harnessing adversarial examples.
\newblock In Yoshua Bengio and Yann LeCun, editors, {\em 3rd International Conference on Learning Representations, {ICLR}}, 2015.

\bibitem[HKM{\etalchar{+}}20]{HassidimKMMS20}
Avinatan Hassidim, Haim Kaplan, Yishay Mansour, Yossi Matias, and Uri Stemmer.
\newblock Adversarially robust streaming algorithms via differential privacy.
\newblock In {\em Advances in Neural Information Processing Systems 33: Annual Conference on Neural Information Processing Systems, NeurIPS}, 2020.

\bibitem[JLS23]{JambulapatiLS23}
Arun Jambulapati, Yang~P. Liu, and Aaron Sidford.
\newblock Chaining, group leverage score overestimates, and fast spectral hypergraph sparsification.
\newblock In {\em Proceedings of the 55th Annual {ACM} Symposium on Theory of Computing, {STOC}}, pages 196--206, 2023.

\bibitem[JS18]{JambulapatiS18}
Arun Jambulapati and Aaron Sidford.
\newblock Efficient \emph{{\~{O}}}(\emph{n}/\emph{$\varepsilon$}) spectral sketches for the laplacian and its pseudoinverse.
\newblock In {\em Proceedings of the Twenty-Ninth Annual {ACM-SIAM} Symposium on Discrete Algorithms, {SODA}}, pages 2487--2503, 2018.

\bibitem[JWZ22]{JayaramWZ22}
Rajesh Jayaram, David~P. Woodruff, and Samson Zhou.
\newblock Truly perfect samplers for data streams and sliding windows.
\newblock In {\em {PODS} '22: International Conference on Management of Data}, pages 29--40, 2022.

\bibitem[Kar00]{Karger00}
David~R. Karger.
\newblock Minimum cuts in near-linear time.
\newblock {\em J. {ACM}}, 47(1):46--76, 2000.

\bibitem[KKTY21]{KaplralovTY21}
Michael Kapralov, Robert Krauthgamer, Jakab Tardos, and Yuichi Yoshida.
\newblock Spectral hypergraph sparsifiers of nearly linear size.
\newblock In {\em 62nd {IEEE} Annual Symposium on Foundations of Computer Science, {FOCS}}, pages 1159--1170, 2021.

\bibitem[KLP25]{KhannaLP25}
Sanjeev Khanna, Huan Li, and Aaron Putterman.
\newblock Near-optimal linear sketches and fully-dynamic algorithms for hypergraph spectral sparsification.
\newblock {\em arXiv preprint arXiv:2502.03313}, 2025.

\bibitem[KMM{\etalchar{+}}20]{KapralovMMMNST20}
Michael Kapralov, Aida Mousavifar, Cameron Musco, Christopher Musco, Navid Nouri, Aaron Sidford, and Jakab Tardos.
\newblock Fast and space efficient spectral sparsification in dynamic streams.
\newblock In {\em Proceedings of the 2020 {ACM-SIAM} Symposium on Discrete Algorithms, {SODA}}, pages 1814--1833, 2020.

\bibitem[KMNS21]{KaplanMNS21}
Haim Kaplan, Yishay Mansour, Kobbi Nissim, and Uri Stemmer.
\newblock Separating adaptive streaming from oblivious streaming using the bounded storage model.
\newblock In {\em Advances in Cryptology - {CRYPTO} - 41st Annual International Cryptology Conference, {CRYPTO} Proceedings, Part {III}}, pages 94--121, 2021.

\bibitem[KMP14]{KoutisMP14}
Ioannis Koutis, Gary~L. Miller, and Richard Peng.
\newblock Approaching optimality for solving {SDD} linear systems.
\newblock {\em {SIAM} J. Comput.}, 43(1):337--354, 2014.

\bibitem[KPS25]{KhannaPS25}
Sanjeev Khanna, Aaron~(Louie) Putterman, and Madhu Sudan.
\newblock Near-optimal hypergraph sparsification in insertion-only and bounded-deletion streams, 2025.

\bibitem[LDZ{\etalchar{+}}23]{LiDZKY23}
Gaotang Li, Marlena Duda, Xiang Zhang, Danai Koutra, and Yujun Yan.
\newblock Interpretable sparsification of brain graphs: Better practices and effective designs for graph neural networks.
\newblock In {\em Proceedings of the 29th {ACM} {SIGKDD} Conference on Knowledge Discovery and Data Mining, {KDD} 2023}, pages 1223--1234, 2023.

\bibitem[Lee23]{Lee23}
James~R. Lee.
\newblock Spectral hypergraph sparsification via chaining.
\newblock In {\em Proceedings of the 55th Annual {ACM} Symposium on Theory of Computing, {STOC}}, pages 207--218, 2023.

\bibitem[LM17]{LiM17}
Pan Li and Olgica Milenkovic.
\newblock Inhomogeneous hypergraph clustering with applications.
\newblock In {\em Advances in Neural Information Processing Systems 30: Annual Conference on Neural Information Processing Systems}, pages 2308--2318, 2017.

\bibitem[LM18]{LiM18}
Pan Li and Olgica Milenkovic.
\newblock Submodular hypergraphs: p-laplacians, cheeger inequalities and spectral clustering.
\newblock In {\em Proceedings of the 35th International Conference on Machine Learning, {ICML} 2018, Stockholmsm{\"{a}}ssan, Stockholm, Sweden, July 10-15, 2018}, volume~80 of {\em Proceedings of Machine Learning Research}, pages 3020--3029, 2018.

\bibitem[LS17]{Lee017}
Yin~Tat Lee and He~Sun.
\newblock An sdp-based algorithm for linear-sized spectral sparsification.
\newblock In {\em Proceedings of the 49th Annual {ACM} {SIGACT} Symposium on Theory of Computing, {STOC}}, pages 678--687, 2017.

\bibitem[LT06]{LeeT06a}
Lap{-}Kei Lee and H.~F. Ting.
\newblock Maintaining significant stream statistics over sliding windows.
\newblock In {\em Proceedings of the Seventeenth Annual {ACM-SIAM} Symposium on Discrete Algorithms, {SODA}}, pages 724--732, 2006.

\bibitem[ML12]{McAuleyL12}
Julian~J. McAuley and Jure Leskovec.
\newblock Learning to discover social circles in ego networks.
\newblock In Peter~L. Bartlett, Fernando C.~N. Pereira, Christopher J.~C. Burges, L{\'{e}}on Bottou, and Kilian~Q. Weinberger, editors, {\em Advances in Neural Information Processing Systems 25: 26th Annual Conference on Neural Information Processing Systems {NIPS}}, pages 548--556, 2012.

\bibitem[MMS{\etalchar{+}}18]{MadryMSTV18}
Aleksander Madry, Aleksandar Makelov, Ludwig Schmidt, Dimitris Tsipras, and Adrian Vladu.
\newblock Towards deep learning models resistant to adversarial attacks.
\newblock In {\em 6th International Conference on Learning Representations, {ICLR}}. OpenReview.net, 2018.

\bibitem[New06]{newman2006modularity}
Mark~EJ Newman.
\newblock Modularity and community structure in networks.
\newblock {\em Proceedings of the national academy of sciences}, 103(23):8577--8582, 2006.

\bibitem[OMM{\etalchar{+}}14]{OsborneMMLSCIMOHJCO14}
Miles Osborne, Sean Moran, Richard McCreadie, Alexander von L{\"{u}}nen, Martin~D. Sykora, Amparo~Elizabeth Cano, Neil Ireson, Craig Macdonald, Iadh Ounis, Yulan He, Tom Jackson, Fabio Ciravegna, and Ann O'Brien.
\newblock Real-time detection, tracking, and monitoring of automatically discovered events in social media.
\newblock In {\em Proceedings of the 52nd Annual Meeting of the Association for Computational Linguistics, {ACL}}, pages 37--42, 2014.

\bibitem[Ope]{OpenAI-data}
OpenAI.
\newblock https://openai.com/enterprise-privacy/.

\bibitem[PGD15]{PapapetrouGD15}
Odysseas Papapetrou, Minos~N. Garofalakis, and Antonios Deligiannakis.
\newblock Sketching distributed sliding-window data streams.
\newblock {\em {VLDB} J.}, 24(3):345--368, 2015.

\bibitem[SS08]{SpielmanS08}
Daniel~A. Spielman and Nikhil Srivastava.
\newblock Graph sparsification by effective resistances.
\newblock In {\em Proceedings of the 40th Annual {ACM} Symposium on Theory of Computing {STOC}}, pages 563--568. {ACM}, 2008.

\bibitem[ST14]{SpielmanT14}
Daniel~A. Spielman and Shang{-}Hua Teng.
\newblock Nearly linear time algorithms for preconditioning and solving symmetric, diagonally dominant linear systems.
\newblock {\em {SIAM} J. Matrix Anal. Appl.}, 35(3):835--885, 2014.

\bibitem[STY24]{SomaTY24}
Tasuku Soma, Kam~Chuen Tung, and Yuichi Yoshida.
\newblock Online algorithms for spectral hypergraph sparsification.
\newblock In {\em Integer Programming and Combinatorial Optimization - 25th International Conference, {IPCO}, Proceedings}, pages 405--417, 2024.

\bibitem[SY19]{SomaY19}
Tasuku Soma and Yuichi Yoshida.
\newblock Spectral sparsification of hypergraphs.
\newblock In {\em Proceedings of the Thirtieth Annual {ACM-SIAM} Symposium on Discrete Algorithms, {SODA}}, pages 2570--2581, 2019.

\bibitem[TMIY20]{TakaiMIY20}
Yuuki Takai, Atsushi Miyauchi, Masahiro Ikeda, and Yuichi Yoshida.
\newblock Hypergraph clustering based on pagerank.
\newblock In {\em {KDD} '20: The 26th {ACM} {SIGKDD} Conference on Knowledge Discovery and Data Mining}, pages 1970--1978, 2020.

\bibitem[TSE{\etalchar{+}}19]{TsiprasSETM19}
Dimitris Tsipras, Shibani Santurkar, Logan Engstrom, Alexander Turner, and Aleksander Madry.
\newblock Robustness may be at odds with accuracy.
\newblock In {\em 7th International Conference on Learning Representations, {ICLR}}. OpenReview.net, 2019.

\bibitem[VBK20]{VeldtBK20}
Nate Veldt, Austin~R. Benson, and Jon~M. Kleinberg.
\newblock Minimizing localized ratio cut objectives in hypergraphs.
\newblock In {\em {KDD} '20: The 26th {ACM} {SIGKDD} Conference on Knowledge Discovery and Data Mining}, pages 1708--1718, 2020.

\bibitem[VBK23]{VeldtBK23}
Nate Veldt, Austin~R. Benson, and Jon~M. Kleinberg.
\newblock Augmented sparsifiers for generalized hypergraph cuts.
\newblock {\em J. Mach. Learn. Res.}, 24:207:1--207:50, 2023.

\bibitem[WLL{\etalchar{+}}16]{WeiLLSDW16}
Zhewei Wei, Xuancheng Liu, Feifei Li, Shuo Shang, Xiaoyong Du, and Ji{-}Rong Wen.
\newblock Matrix sketching over sliding windows.
\newblock In {\em Proceedings of the 2016 International Conference on Management of Data, {SIGMOD} Conference}, pages 1465--1480. {ACM}, 2016.

\bibitem[WY23]{WoodruffY23}
David~P. Woodruff and Taisuke Yasuda.
\newblock Online lewis weight sampling.
\newblock In {\em Proceedings of the 2023 {ACM-SIAM} Symposium on Discrete Algorithms, {SODA}}, pages 4622--4666, 2023.

\bibitem[WZ21]{WoodruffZ21b}
David~P. Woodruff and Samson Zhou.
\newblock Tight bounds for adversarially robust streams and sliding windows via difference estimators.
\newblock In {\em 62nd {IEEE} Annual Symposium on Foundations of Computer Science, {FOCS}}, pages 1183--1196, 2021.

\bibitem[WZ24]{WoodruffZ24}
David~P. Woodruff and Samson Zhou.
\newblock Adversarially robust dense-sparse tradeoffs via heavy-hitters.
\newblock In {\em Advances in Neural Information Processing Systems 38: Annual Conference on Neural Information Processing Systems, NeurIPS}, 2024.

\bibitem[WZZ23a]{WoodruffZZ23a}
David~P. Woodruff, Fred Zhang, and Samson Zhou.
\newblock On robust streaming for learning with experts: Algorithms and lower bounds.
\newblock In {\em Advances in Neural Information Processing Systems 36: Annual Conference on Neural Information Processing Systems, NeurIPS}, 2023.

\bibitem[WZZ23b]{WoodruffZZ23}
David~P. Woodruff, Peilin Zhong, and Samson Zhou.
\newblock Near-optimal k-clustering in the sliding window model.
\newblock In {\em Advances in Neural Information Processing Systems 36: Annual Conference on Neural Information Processing Systems, NeurIPS}, 2023.

\bibitem[XCZ{\etalchar{+}}24]{XieCZZWWM024}
Beini Xie, Heng Chang, Ziwei Zhang, Zeyang Zhang, Simin Wu, Xin Wang, Yuan Meng, and Wenwu Zhu.
\newblock Towards lightweight graph neural network search with curriculum graph sparsification.
\newblock In {\em Proceedings of the 30th {ACM} {SIGKDD} Conference on Knowledge Discovery and Data Mining}, pages 3563--3573, 2024.

\bibitem[ZSL{\etalchar{+}}19]{ZhuSLHB19}
Chun~Jiang Zhu, Sabine Storandt, Kam{-}yiu Lam, Song Han, and Jinbo Bi.
\newblock Improved dynamic graph learning through fault-tolerant sparsification.
\newblock In {\em Proceedings of the 36th International Conference on Machine Learning, {ICML}}, volume~97, pages 7624--7633, 2019.

\bibitem[ZST24]{ZhangST24}
Xikun Zhang, Dongjin Song, and Dacheng Tao.
\newblock Ricci curvature-based graph sparsification for continual graph representation learning.
\newblock {\em {IEEE} Trans. Neural Networks Learn. Syst.}, 35(12):17398--17410, 2024.

\bibitem[ZZC{\etalchar{+}}20]{ZhengZCSNYC020}
Cheng Zheng, Bo~Zong, Wei Cheng, Dongjin Song, Jingchao Ni, Wenchao Yu, Haifeng Chen, and Wei Wang.
\newblock Robust graph representation learning via neural sparsification.
\newblock In {\em Proceedings of the 37th International Conference on Machine Learning, {ICML}}, volume 119, pages 11458--11468, 2020.

\bibitem[ZZS{\etalchar{+}}20]{ZengZSKP20}
Hanqing Zeng, Hongkuan Zhou, Ajitesh Srivastava, Rajgopal Kannan, and Viktor~K. Prasanna.
\newblock Graphsaint: Graph sampling based inductive learning method.
\newblock In {\em 8th International Conference on Learning Representations, {ICLR}}, 2020.

\end{thebibliography}

\appendix
\section{Experiments}
\applab{sec:exp}
In this section, we perform a number of empirical evaluations to complement our theoretical results. 
The experiments are conducted using an Apple M2 CPU, with 16 GB RAM and 8 cores. 

\paragraph{Experiment Setup.}
We compare the performance among three algorithms: the online algorithm, the merge-and-reduce approach applying directly to the input stream, and our streaming algorithm.
In each comparison, we set a budget for the number of sampled edges and compare the multiplicative error of the three algorithms.
For each experiment, we iterate $5$-$10$ times and take the arithmetic mean.
We first test synthetic graphs.
Given inputs $n$ and $m$, we randomly generate a graph with $n$ vertices and $m$ edges: for each edge, we uniformly sample two vertices from $[n]$ and assign its weight from the distribution $\calU(1,10)$.
We allow multi-edges in the graph.
We then test the Facebook ego social network from the Stanford Large Network Dataset Collection (SNAP) \cite{McAuleyL12}. 
We choose the graph from user 107 with $n=1034$ and $m=53498$, and we also select its weight from the distribution $\calU(1,10)$ since the original graph is unweighted.

\paragraph{Graph Metric.}
Let $L_G$ and $\widehat{L_G}$ be the graph Laplacian of the original graph and the sparsified graph, respectively.
We note that the multiplicative error of the sparsifier is $\max_{x \in \mathbb{R}^n} \frac{x^T(L_G - \widehat{L_G})x}{x^T L_G x}$, which is the generalized Rayleigh quotient of matrices $L_G$ and $\widehat{L_G}$.
To measure this quantity, we solve the generalized eigenvalue problem $(L_G - \widehat{L_G})x = \lambda\cdot L_G x$ and obtain the maximum eigenvalue, which equals the multiplicative error by the properties of generalized Rayleigh quotient.

\paragraph{Fine-tuning the parameters.}
Given a budget $l$, we fine-tune the parameters for each method such that they output roughly $l$ edges, and thus ensure that they have the same space budget.
In the online algorithm, we sample each edge $e_t$ with probability $\rho \cdot r_{e_t}$, where $r_{e_t}$ is the effective resistance of $e_t$ in the graph $L_G(t)$ constructed by the previously arrived edges.
Since we do not have a fixed relationship between the sum of online leverage scores and $n$, we fine-tune the parameter $\rho$ such that the mean of $10$ trials is within $l \pm 200$.
Note that the second approach and our streaming algorithm include implementing offline algorithm to construct the merge-and-reduce coresets.
In the offline algorithm, we sample each edge $e$ with probability $\rho \cdot r_e$, where $r_e$ is its effective resistance.
Due to the equivalence between the effective resistance and the leverage score of the incidence matrix, we have $\sum_{e \in G} r_e = n$, and so the expected number of sampled edges is $\rho n$.
Thus, we set $\rho = l/n$ such that we sample $l$ edges in expectation, and the actual number of samples only varies a little due to the concentration.
Then, we use the same parameter $\rho$ for the offline algorithm in the second approach and our streaming algorithm to ensure fair comparison.
In our streaming algorithm, recall that we run an online algorithm to obtain the prefix substream $\calS'$ (which is not stored); we tune the length of $\calS'$ to have the optimal error bound.
We explicitly list our choice of parameters in \appref{sec:param}.

We adapt the code for efficiently estimating the effective resistances in the experiments of \cite{chenYVBDMT23}.
In the online algorithm, we do a batch implementation to save time.
We partition the stream into batches with $100$ edges and use the same $L_G(t)$ to compute the effective resistances for a batch, so we do not have to re-construct the Laplacian within a batch.

\paragraph{Results and Discussion.}
In the first experiment, we fix the number of vertices and edges in the graph and compare the performance under different budgets $l$.
For the synthetic graph, we set $n=100, m=50000$, and $l \in \{500, 1000, 1500, 2000, 2500, 3000\}$.
For the Facebook graph, we set $l \in \{10000, 15000, 20000, 25000\}$.
The results are displayed in \figref{fig:budget}.

\begin{figure}[!htb]
\centering
\begin{subfigure}{0.49\textwidth}
    \centering
    \includegraphics[width=\textwidth]{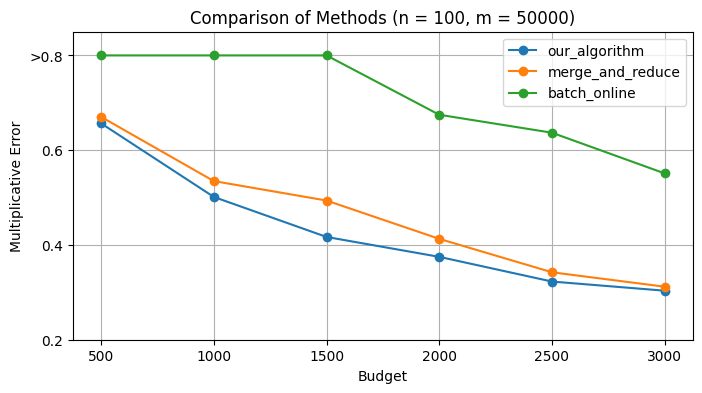}
\end{subfigure}\hfill
\begin{subfigure}{0.49\textwidth}
    \centering
    \includegraphics[width=\textwidth]{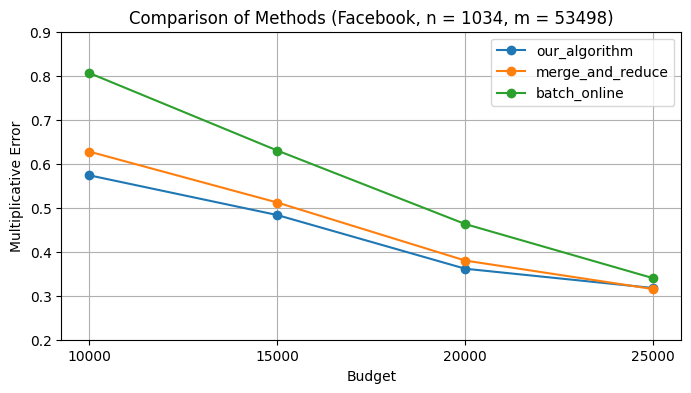}
\end{subfigure}
\caption{Comparison under different budgets. The $x$-axis shows the space budget and the $y$-axis shows the multiplicative error. The left is the synthetic graph and the right is the Facebook graph. We include the result of the online algorithm as a baseline.}
\figlab{fig:budget}
\end{figure}

For both graphs, the merge-and-reduce methods obtain $\sim 0.3$ multiplicative error using $< 50 \%$ budget, and our streaming algorithm (blue line) has the lowest multiplicative error under various budgets.
This occurs because our algorithm applies the merge-and-reduce method to the online substream $\calS'$, resulting in a shorter tree height compared to directly applying merge-and-reduce to the original stream. 
Since errors accumulate at each tree level, our algorithm generally outperforms others.
However, its advantage diminishes as the budget increases. 
This is because with large budgets, we set large coreset sizes; and so the heights of the two methods are roughly the same, reducing the advantage of our algorithm. 
In addition, at extremely low budgets (e.g., $l=500$ in the synthetic graph), our algorithm performs similarly to the merge-and-reduce algorithm.
This implies a trade-off between the accuracy of the online substream and the tree height, when the budget is extremely low compared to $n$.
If we set a small budget for $\calS'$, the accuracy of the prefix online algorithm is suboptimal; on the contrary, if we set a large budget for $\calS'$, then it increases the tree height.
Both situations hinder the performance of our algorithm.

\begin{figure}[!htb]
    \centering
    \includegraphics[width=0.7\linewidth]{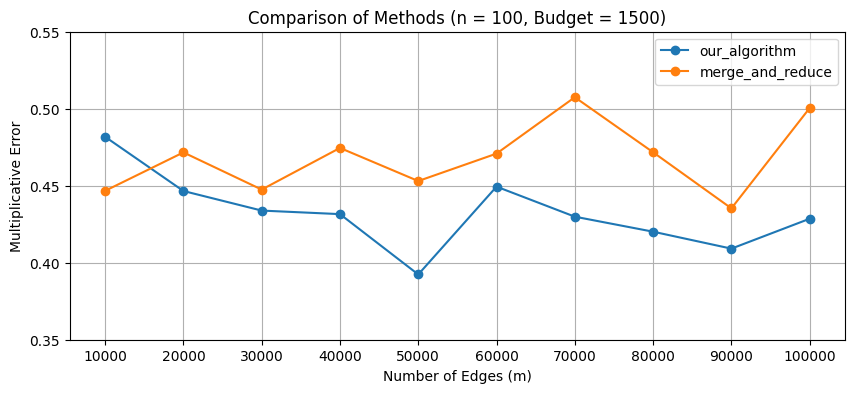}
    \caption{Comparison under different numbers of edges}
    \figlab{fig:edge}
\end{figure}
In the second experiment, we fix the budget and compare the performance under different numbers of edges $m$.
We use the synthetic graph, and we set $n=100$, $l=1500$, and $m \in \{i \cdot 10000, i \in [10]\}$.
The results are displayed in \figref{fig:edge}, showing that our algorithm generally outperforms the merge-and-reduce method as $m$ increases.
This advantage is because we set our budget as $l=1500$, which is substantial compared to $n$, enabling the prefix online substream $\calS'$ to have a high-quality approximation while maintaining a small tree height. 
Consequently, this enhances our algorithm’s performance. 
In summary, our experiments demonstrate the optimality of our streaming algorithm under dense graphs and limited budgets, matching the theoretical guarantees.

\subsection{Additional Experimental Setup}
\applab{sec:param}

In this section, we provide the choices of the hyperparameters of our experiments described in \appref{sec:exp}. 
We use $C_{\text{ol}}$, $C_{\text{of}}$, and $C_{\text{ol\_str}}$ to denote the constant factors of the online algorithm, the offline algorithm that constructs the coresets, and the prefix online substream in our streaming algorithm, respectively.
Their values are shown in \tableref{tab:param}.

\begin{table}[!ht]
\centering

\begin{tabular}{lcccccc}
\toprule
\multicolumn{7}{c}{\textit{Synthetic Graph ($n=100$, $m=50000$)}} \\
\cmidrule(lr){1-7}
Budget & 500 & 1000 & 1500 & 2000 & 2500 & 3000 \\
\midrule
$C_{\text{ol}}$ & 0.001 & 0.01 & 0.15 & 0.35 & 0.55 & 0.75 \\
$C_{\text{off}}$ & 1.1 & 2.2 & 3.3 & 4.4 & 5.5 & 6.6 \\
$C_{\text{ol\_str}}$ & 2.0 & 2.5 & 5.5 & 8.0 & 11.5 & 15.0 \\
\bottomrule
\end{tabular}

\vspace{0.5cm}

\begin{tabular}{lcccc}
\toprule
\multicolumn{5}{c}{\textit{Facebook Graph ($n=1034$, $m=53498$)}} \\
\cmidrule(lr){1-5}
Budget & 5000 & 7500 & 10000 & 12500 \\
\midrule
$C_{\text{ol}}$ & 0.05 & 0.105 & 0.145 & 0.185 \\
$C_{\text{off}}$ & 0.69 & 1.0 & 1.333 & 1.667 \\
$C_{\text{ol\_str}}$ & 0.8 & 0.8 & 1.0 & 1.4 \\
\bottomrule
\end{tabular}

\vspace{0.5cm}

\small
\setlength{\tabcolsep}{3pt}
\begin{tabular}{l*{10}{c}}
\toprule
\multicolumn{11}{c}{\textit{Comparison under Different $m$, Synthetic Graph ($n=100$, Budget=$1500$)}} \\
\cmidrule(lr){1-11}
m & 10000 & 20000 & 30000 & 40000 & 50000 & 60000 & 70000 & 80000 & 90000 & 100000 \\
\midrule
$C_{\text{ol}}$ & 0.3 & 0.2 & 0.15 & 0.1 & 0.08 & 0.05 & 0.04 & 0.02 & 0.005 & 0.002 \\
$C_{\text{off}}$ & 3.3 & 3.3 & 3.3 & 3.3 & 3.3 & 3.3 & 3.3 & 3.3 & 3.3 & 3.3 \\
$C_{\text{ol\_str}}$ & 15.0 & 7.0 & 5.5 & 4.8 & 4.3 & 4.05 & 3.75 & 3.75 & 3.7 & 3.2 \\
\bottomrule
\end{tabular}
\vspace{0.5cm}

\caption{Constant factors ($C_{\text{ol}}$, $C_{\text{off}}$, $C_{\text{ol\_str}}$) for the three experiments.}
\tablelab{tab:param}
\end{table}

We also include the (amortized) run-time of our algorithm in \tableref{tab:running-times}, showing that our algorithm is time-efficient in practice.

\begin{table}[!ht]
\centering
\begin{tabular}{lcccccc}
\hline
\textbf{Note} & $n$ & $m$ & \textbf{Budget} & \textbf{Batched online} & \textbf{Merge-and-reduce} & \textbf{Our algorithm} \\
\hline
Synthetic graph & 50   & 2500  & 750   & $2.0 \times 10^{-4}\,s$ & $2.9 \times 10^{-4}\,s$ & $4.9 \times 10^{-4}\,s$ \\
Synthetic graph & 100  & 10000 & 2500  & $6.1 \times 10^{-4}\,s$ & $3.0 \times 10^{-4}\,s$ & $5.6 \times 10^{-4}\,s$ \\
Synthetic graph & 200  & 40000 & 10000 & $7.6 \times 10^{-4}\,s$ & $4.7 \times 10^{-4}\,s$ & $6.1 \times 10^{-4}\,s$ \\
Facebook graph  & 1034 & 53498 & 15000 & $4.0 \times 10^{-2}\,s$ & $2.3 \times 10^{-2}\,s$ & $9.6 \times 10^{-2}\,s$ \\
\hline
\end{tabular}
\caption{Comparison of running times in different experiments.}
\tablelab{tab:running-times}
\end{table}

\end{document}